\pdfoutput=1
\documentclass[10pt]{article}
\usepackage{xcolor}
\usepackage{graphicx}
\usepackage{amsmath,amsfonts,amssymb,graphics,amsthm}
\usepackage{hyperref}
\usepackage{comment}
\usepackage{tabularx}
\usepackage[protrusion=true,expansion=true]{microtype}
\usepackage{enumerate}
\usepackage{bbm}
\usepackage{mathrsfs}
\usepackage[margin=1in]{geometry}
\usepackage[shortlabels]{enumitem}
\usepackage{cleveref}

\hypersetup{
    colorlinks=false,
    linktocpage,
    }

\numberwithin{equation}{section}

\newtheorem{theorem}{Theorem}[section]
\newtheorem{corollary}[theorem]{Corollary}
\newtheorem{lemma}[theorem]{Lemma}
\newtheorem{proposition}[theorem]{Proposition}

\newtheorem{remark}[theorem]{Remark}
\newtheorem{definition}[theorem]{Definition}

\theoremstyle{remark}

\def\@rst #1 #2other{#1}
\newcommand\MR[1]{\relax\ifhmode\unskip\spacefactor3000 \space\fi
  \MRhref{\expandafter\@rst #1 other}{#1}}
\newcommand{\MRhref}[2]{\href{http://www.ams.org/mathscinet-getitem?mr=#1}{MR#2}}

\def\MR#1{\href{http://www.ams.org/mathscinet-getitem?mr=#1}{MR#1}}

\newcommand{\C}{\mathbbm{C}}
\newcommand{\D}{\mathbbm{D}}

\newcommand{\T}{\mathbbm{T}}

\newcommand{\bs}{\boldsymbol}

\newcommand{\dd}{\text{\rm d}} 

\let\Re\undefined
\DeclareMathOperator{\Re}{Re}

\def\det{\mathrm{det}}

\def\alb#1\ale{\begin{align*}#1\end{align*}}
\def\allb#1\alle{\begin{align}#1\end{align}}

\newcommand{\aryb}{\begin{eqnarray*}}
\newcommand{\arye}{\end{eqnarray*}}
\def\alb#1\ale{\begin{align*}#1\end{align*}}
\newcommand{\eqb}{\begin{equation}}
\newcommand{\eqe}{\end{equation}}
\newcommand{\eqbn}{\begin{equation*}}
\newcommand{\eqen}{\end{equation*}}

\let\Re\undefined

\def\bL{\mathbf{L}}

\def\alb#1\ale{\begin{align*}#1\end{align*}}
\def\allb#1\alle{\begin{align}#1\end{align}}

\def\alb#1\ale{\begin{align*}#1\end{align*}}

\newcommand{\Re}{\mathrm{Re}}

\let\originalleft\left
\let\originalright\right
\renewcommand{\left}{\mathopen{}\mathclose\bgroup\originalleft}
\renewcommand{\right}{\aftergroup\egroup\originalright}

\DeclareMathAlphabet{\mathpzc}{OT1}{pzc}{m}{it}

\title{Ward identities: a geometric point of view and applications}
\author{Baojun Wu  \quad\quad\quad        Shengjing Xu}
\date{May 2024}

\begin{document}

\maketitle

\begin{abstract}
The work of Baverez--Guillarmou--Kupiainen--Rhodes~\cite{semigroup} is the
starting point of this paper. It shows that analytic changes of boundary
parametrizations act differentiably on Liouville amplitudes, with derivative
given by Virasoro operators and a scalar anomaly term. We compute this scalar
term. Its holomorphic part is a Schwarzian boundary integral, which gives a
geometric explanation of the Virasoro central term. We then derive local Ward
identities on disks, annuli, and pairs of pants. They give finite recursions
for descendant matrix coefficients. From these recursions we recover the
polynomial factorization of normalized pair-of-pants coefficients. In the
annular zero-weight limit, we recover the Shapovalov form. For a pair of pants
with two incoming boundaries, we recover the formal chiral vertex-operator
coefficients. We also give a geometric proof of smoothness in the bulk
insertion points and derive the genus-zero arbitrary level BPZ equations for degenerate bulk
insertions.
\end{abstract}
\tableofcontents

\section{Introduction}

Liouville conformal field theory (LCFT), introduced by Polyakov in the study
of two-dimensional quantum gravity~\cite{polyakov-qg1}, is formally described
by a path integral over fields on a Riemannian surface. Probabilistic Liouville conformal field theory has been constructed on the
sphere, on compact Riemann surfaces, and in several boundary settings; see
\cite{dkrv-lqg-sphere,grv-higher-genus,hrv-disk,remy-annulus,
wu2023liouvilleconformalfieldtheory}. 
The basic observables of LCFT on a closed surface are the correlation functions
\[
 \left\langle\prod_{j=1}^nV_{\alpha_j}(x_j)
 \right\rangle_{\Sigma,g}.
\]
For a surface with analytically parametrized boundary, the natural object is
a Liouville amplitude. In Segal's formalism~\cite{Segal1988TheDO}, each
boundary component carries a copy of the state space, an amplitude is a
multilinear functional or an operator between tensor products of state
spaces, and sewing of surfaces corresponds to composition of amplitudes. The
probabilistic Liouville amplitudes and their sewing properties were
constructed in~\cite{segalaxiom}. Their analytic boundary deformations were
studied in~\cite{Baverez_2024,semigroup}, where differentiation of an
amplitude produces the Virasoro action together with a scalar contribution
coming from the Liouville anomaly functional.

\textbf{1. Schwarzian variation of the anomaly.}
The first purpose of this paper is to identify this scalar term
geometrically. Let $(\Sigma,g,\boldsymbol\zeta)$ be an admissible surface with
incoming boundary components and filling $(\widehat\Sigma,\widehat g)$. For a
vector field $v$ holomorphic near the boundary, write
\[
 \zeta_k^*v
 =-\sum_{n\in\mathbb Z}[v_k]_nw^{n+1}\partial_w.
\]
In genus zero, the holomorphic variation of the anomaly is expressed entirely
through the Schwarzian derivatives of the inverse boundary coordinates.

\begin{theorem}[Genus-zero Schwarzian variation]\label{thm:metric}
Let $(\Sigma,g,\boldsymbol\zeta)$ be an admissible genus-zero surface with
$b\geq1$ incoming boundary components, and let
$(\widehat\Sigma,\widehat g)\simeq(\widehat{\mathbb C},\widehat g)$ be its
filling. Let $v=v(z)\partial_z$ be meromorphic on $\widehat\Sigma$ and
holomorphic on a neighborhood of $\overline\Sigma$. If $f_t^v$ and
$f_t^{\mathrm iv}$ are the corresponding local conformal flows, set
$\Sigma_t^v:=f_t^v(\Sigma)$, $g_t^v:=(f_t^v)_*g$, and
\[
 S_{\rm L}^0(\Sigma_t^v)
 :=S_{\rm L}^0(\Sigma_t^v,g_t^v,\widehat g),
\]
with the analogous notation for $\mathrm iv$. Then
\begin{align}
 \frac12\left.
 \left(
 \partial_tS_{\rm L}^0(\Sigma_t^v)
 -\mathrm i\partial_tS_{\rm L}^0(\Sigma_t^{\mathrm iv})
 \right)\right|_{t=0}
 -\frac1{24}\sum_{k=1}^b[v_k]_0
 =
 \frac{\mathrm i}{24\pi}
 \sum_{k=1}^b
 \oint_{\partial_k\Sigma}^{\rm in}
 v(z)S_{\zeta_k^{-1}}(z)\,dz.
 \label{eq:introduction-Schwarzian-formula}
\end{align}
Here all contours are oriented by the incoming parametrizations.
\end{theorem}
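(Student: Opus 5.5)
The idea is to reduce the variation of $S_{\rm L}^0$ to a boundary computation using the conformal anomaly formula, and then to read off the Schwarzian from the conformal factor of the coordinate disks. Recall that the filling $(\widehat\Sigma,\widehat g)$ is obtained by gluing to $\Sigma$ the unit disks $\mathbb D$ along $\zeta_k$. The functional $S_{\rm L}^0(\Sigma,g,\widehat g)$ is the Liouville action of the glued metric $g\#\bigl(\sum_k(\zeta_k)_*|dw|^2\bigr)$ relative to $\widehat g$, including the geodesic-curvature boundary terms along $\partial_k\Sigma$.

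\textbf{Step 1: conformal factors.} Identify $\widehat\Sigma\simeq\widehat{\mathbb C}$ and write the glued metric as $e^{\varphi}\widehat g$. On the $k$-th filled disk $D_k=\zeta_k(\mathbb D)$ we have $\varphi=\log|(\zeta_k^{-1})'|^2-\log\widehat g$. On $\Sigma$, $\varphi$ is the conformal factor of $g$. Under the flow, $\zeta_k$ becomes $f_t^v\circ\zeta_k$ and $g$ becomes $(f_t^v)_*g$. Since $f_t^v$ is conformal near $\overline\Sigma$, the Liouville density on the $\Sigma$-part is transported by $f_t^v$. Its change is therefore a pure boundary/anomaly contribution, namely the $\widehat g$-dependent terms pulled back by $f_t^v$. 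On $D_k$, the new factor is $\log|((f_t^v\circ\zeta_k)^{-1})'|^2$ on the moved disk.

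\textbf{Step 2: differentiate and integrate by parts.} Differentiate at $t=0$ using $\dot\varphi=-(v\partial+\bar v\bar\partial)\varphi-(v'+\bar v')$, the standard transformation of a conformal factor under a holomorphic flow. Then form $\tfrac12(\partial_t^{v}-\mathrm i\partial_t^{\mathrm iv})$, which kills every $\bar v$ term and keeps only the $v$-linear holomorphic part. The bulk variation of $\frac1{96\pi}\int(|d\varphi|^2+2R_{\widehat g}\varphi)$ integrates by parts, using that $v$ is holomorphic on a neighborhood of $\overline\Sigma$ and meromorphic globally. The outcome should be the contour integrals $\oint_{\partial_k\Sigma} v\,T_\varphi\,dz$ of the holomorphic stress tensor $T_\varphi=\partial^2\varphi-\frac12(\partial\varphi)^2$, taken from the disk side. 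Any poles of $v$ in $\widehat\Sigma\setminus\overline\Sigma$ lie inside the filled disks, where $\varphi$ is explicit, so their contributions are absorbed into the same boundary integrals. The $\widehat g$-terms cancel against their transported counterparts on $\Sigma$; this cancellation is the content of the Weyl-anomaly cocycle.

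\textbf{Step 3: identify the Schwarzian and the constant.} On $D_k$, $\partial\varphi=\partial\log(\zeta_k^{-1})'+(\widehat g\text{-terms})$. Hence $T_\varphi$ equals the Schwarzian $S_{\zeta_k^{-1}}$ up to terms that are cancelled in Step 2, which yields $\frac{\mathrm i}{24\pi}\oint v\,S_{\zeta_k^{-1}}\,dz$ after fixing constants. The remaining piece is the variation of the boundary geodesic-curvature term $\oint k_g\,\varphi\,d\ell$ on the unit circle. In the coordinate $w$, $\zeta_k^*v=-\sum[v_k]_nw^{n+1}\partial_w$. The holomorphic part of this variation sees only the rotation/dilation mode $n=0$, since the other modes integrate to zero against the constant curvature of $\partial\mathbb D$. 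This gives exactly $\frac1{24}[v_k]_0$, which I would move to the left-hand side of \eqref{eq:introduction-Schwarzian-formula}.

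The main obstacle is the bookkeeping in Steps 2--3. One must track the orientation induced by the incoming parametrization, and the boundary curvature terms of $S_{\rm L}^0$ from both sides of $\partial_k\Sigma$. One must also check that no stray $\widehat g$-dependent term survives. I would first verify the identity in the model case of $\Sigma=\widehat{\mathbb C}\setminus\mathbb D$ with $\zeta=\mathrm{id}$, where the Schwarzian vanishes, to pin down the sign and the constant $\frac1{24}$. I would then treat a general $\zeta_k$ by the chain rule for Schwarzians, $S_{f\circ\zeta^{-1}}=(S_f\circ\zeta^{-1})\bigl((\zeta^{-1})'\bigr)^2+S_{\zeta^{-1}}$.
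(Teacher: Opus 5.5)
Your outline has the right overall shape: form the holomorphic combination, reduce to $\oint v\,T\,dz$, and read off a Schwarzian on the collars. However, it computes the variation of a different functional from the one in the statement.

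In the statement, $S_{\rm L}^0(\Sigma_t^v)=S_{\rm L}^0(\Sigma_t^v,g_t^v,\widehat g)$ is the anomaly integral $\frac1{96\pi}\int(|d\omega|^2+2K\omega)\,dv$ over $\Sigma_t^v$ only. It takes $g_t^v$ as the \emph{reference} metric and $\widehat g$ as the target. It contains no integrals over the caps and no geodesic-curvature boundary terms. Moreover, on the caps the filling metric $\widehat g$ is the cylindrical metric $|dw|^2/|w|^2$, not $|dw|^2$.

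Your functional, the action of $g\#(\zeta_k)_*|dw|^2$ on the whole sphere relative to $\widehat g$, is not this object. Three things go wrong:
\begin{enumerate}[(i)]
\item Its bulk integral over $\Sigma_t^v$ has reference and target swapped. By the cocycle relation, whose boundary correction is quadratic in the conformal factor and so has zero first variation, this piece contributes $-\partial_tS_{\rm L}^0(\Sigma_t^v)|_{t=0}$, which has the opposite sign.
\item On $D_k$ your $\varphi$ equals $\log|w|^2$, whose Dirichlet energy diverges at the center of the cap.
\item The term $\oint k_g\,\varphi\,d\ell$ to which you attribute $\frac1{24}[v_k]_0$ does not occur in $S_{\rm L}^0$.
\end{enumerate}
So the claims that the $\widehat g$-terms cancel, that the non-Schwarzian part of $T_\varphi$ is cancelled in Step 2, and that a curvature term gives exactly $\frac1{24}[v_k]_0$ are not steps toward the stated identity.

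The missing observation is that the correct functional is much simpler.
\begin{itemize}
\item By diffeomorphism invariance, $S_{\rm L}^0(\Sigma_t^v)=S_{\rm L}^0\bigl(\Sigma,g,(f_t^v)^*\widehat g\bigr)$, and $(f_t^v)^*\widehat g=e^{\omega_t}g$ with $\omega_0=0$, because $\widehat g=g$ on $\Sigma$.
\item Hence the Dirichlet term is quadratic in $\omega_t$ and has zero first variation. What remains is $\frac1{48\pi}\int_\Sigma K_g\operatorname{div}_g(X_v)\,dv_g$. For $g=e^\sigma|dz|^2$, its holomorphic combination involves $2(\partial_zv+(\partial_z\sigma)v)$.
\item Since $\partial_{\bar z}v=0$ on $\Sigma$, one has $(\partial_zv+(\partial_z\sigma)v)\,\partial_z\partial_{\bar z}\sigma\,dz\wedge d\bar z=d\bigl(v\,\partial_z\partial_{\bar z}\sigma\,d\bar z+v\,T_g\,dz\bigr)$, with $T_g=\partial_z^2\sigma-\frac12(\partial_z\sigma)^2$.
\item Stokes' formula and the flatness of the collars then give $\frac{\mathrm i}{24\pi}\sum_k\oint v\,T_g\,dz$ in the incoming orientation. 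Only $v$ near $\overline\Sigma$ enters, so the poles of $v$ play no role.
\item The collar metric is cylindrical, $g=|dr_k|^2/|r_k|^2$ with $r_k=\zeta_k^{-1}$, so $T_g=S_{r_k}+(r_k')^2/(2r_k^2)$.
\end{itemize}
The second term, which is exactly what you propose to cancel, is the source of the constant. Indeed, $\oint v\,(r_k')^2r_k^{-2}\,dz=\oint\frac{v\circ\zeta_k}{\zeta_k'}\frac{dw}{w^2}=-2\pi\mathrm i\,[v_k]_0$, and this produces $\frac1{24}[v_k]_0$.
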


The theorem is a special case of a boundary-local version, Proposition~\ref{prop:metric-extension}. If $v$ is given
only on boundary collars and $V$ is a smooth extension to the filling, then
one obtains the same Schwarzian contour term together with an interior
correction involving $\bar\partial V$. 


\textbf{2. Ward identities for matrix coefficients of amplitudes.}
Combining the above Schwarzian formula with the differentiability theorem for
Liouville amplitudes~\cite{semigroup}, Weyl covariance, and 
local Ward identity~\cite{reviewsegal}. The Virasoro modes on the boundary are related to a
covariant differential operator in the bulk marked points, while the central
term is precisely the Schwarzian contour integral above. We apply this
identity to the three genus-zero building blocks used in the sewing
decomposition: a disk with two interior marked points, an annulus with one
interior marked point, and a pair of pants.

Let $
  \psi_j:U_j\longrightarrow \widehat{\mathbb C},
  \qquad j=1,2,3$
be univalent holomorphic maps on neighborhoods $U_j$ of
$\overline{\mathbb D}$. Assume that the compact
sets $\psi_j(\overline{\mathbb D})$ are pairwise disjoint, and set
\[
  \mathcal P_{\boldsymbol\psi}
  :=\widehat{\mathbb C}\setminus
  \bigcup_{j=1}^3\psi_j(\mathbb D),
  \qquad
  \boldsymbol\psi:=(\psi_1,\psi_2,\psi_3),
\] These
recursions recover the polynomial factorization of normalized descendant
coefficients. More precisely, on the physical spectrum
$\alpha_j=Q+\mathrm iP_j$, set
\[
 \boldsymbol\Delta_{\boldsymbol\alpha}
 :=(\Delta_{\alpha_1},\Delta_{\alpha_2},\Delta_{\alpha_3}),
\]
and let
$\boldsymbol\nu=(\nu_1,\nu_2,\nu_3)$ and
$\widetilde{\boldsymbol\nu}
=(\widetilde\nu_1,\widetilde\nu_2,\widetilde\nu_3)$
be triples of partitions. Whenever the primary coefficient is non-zero,
there is a polynomial
$\omega_{\mathcal P_{\boldsymbol\psi},\boldsymbol\nu}$ in the three
conformal weights, normalized by
$\omega_{\mathcal P_{\boldsymbol\psi},\boldsymbol\emptyset}=1$, such that
\begin{align}
 &\frac{
 \mathcal A_{\mathcal P_{\boldsymbol\psi},g_{\rm ad},
 \boldsymbol\psi}
 \left(
 \Psi_{\alpha_1,\nu_1,\widetilde\nu_1},
 \Psi_{\alpha_2,\nu_2,\widetilde\nu_2},
 \Psi_{\alpha_3,\nu_3,\widetilde\nu_3}
 \right)}{
 \mathcal A_{\mathcal P_{\boldsymbol\psi},g_{\rm ad},
 \boldsymbol\psi}
 \left(
 \Psi_{\alpha_1},\Psi_{\alpha_2},\Psi_{\alpha_3}
 \right)}
=
 \omega_{\mathcal P_{\boldsymbol\psi},\boldsymbol\nu}
 (\boldsymbol\Delta_{\boldsymbol\alpha})
 \overline{
 \omega_{\mathcal P_{\boldsymbol\psi},
 \widetilde{\boldsymbol\nu}}
 (\boldsymbol\Delta_{\boldsymbol\alpha})}.
 \label{eq:introduction-descendant-factorization}
\end{align}
The statement extends to the analytic continuation domain of the generalized
amplitudes. The factorization itself was proved in~\cite{segalaxiom}. Our new
point is a direct geometric derivation from the local Ward identity. The
recursion removes the negative modes one at a time. Polynomiality and the
separation of the holomorphic and anti-holomorphic sectors are then explicit. For analytic parametrization, the matrix coefficients also contain the Schwarzian derivative of $\boldsymbol{\psi}$, which is a consequence of Theorem~\ref{thm:metric}.

The annulus recursion has a further consequence. When the interior weight
tends to zero, it identifies the normalized annulus coefficient with the
Shapovalov form. We also study a pair of pants with two incoming and one
outgoing boundary. After normalization, its matrix coefficients satisfy the
Virasoro intertwining relation and agree with the formal chiral
vertex-operator coefficients of~\cite[Section~17.1]{Teschner_2001}.

\textbf{3. Applications.} First, we give a geometric proof of smoothness of Liouville correlation
functions in the bulk marked points. This was previously proved on the sphere
in~\cite{Oikarinen_2019} and on compact Riemann surfaces in
\cite[Proposition~5.2]{oikarinen2022stressenergyliouvilleconformalfield}. We
insert a zero-weight field and remove a disk around it. The correlation
function is then an amplitude evaluated on the generalized state $\Psi_0$. A
weighted Ward identity expresses each marked-point derivative as a finite sum
of descendant amplitudes and explicit anomaly terms. Iteration gives
smoothness on the configuration space.

Second, the round-disk Ward identity turns the arbitrary-level null-vector relation, which was proved in~\cite{irreducible_rep} (to make the introduction simpler, we only recall the level-two case)
\[
 (\alpha^2\mathbf L_{-2}+\mathbf L_{-1}^2)\Psi_\alpha=0
\]
into the BPZ differential equation. In a flat affine coordinate, it reads

\begin{equation}
 \left[
 \frac1{\alpha^2}\partial_x^2
 +\sum_{i=1}^N
 \left(
 \frac1{x-x_i}\partial_{x_i}
 +\frac{\Delta_{\beta_i}}{(x-x_i)^2}
 \right)
 \right]
 \left\langle
 V_\alpha(x)\prod_{i=1}^NV_{\beta_i}(x_i)
 \right\rangle^{\rm flat}_{\widehat{\mathbb C}}
 =0.
 \label{eq:introduction-level-two-BPZ}
\end{equation}
The above second-order BPZ equation was proved in~\cite{krv-local}.
For a general degenerate weight $\alpha_{r,s}$, we apply the same disk
recursion to each monomial in the singular vector. This gives a differential
operator of order at most $rs$, with principal part $\partial_x^{rs}$ in flat
coordinates. Our method also applies to the BPZ equation for LCFT on the upper half-plane with bulk degenerate insertions. We also remark that if one of the $r$ and $s$ equals $1$, the BPZ equation was studied in~\cite{zhu2020higherorderbpzequations} via the Coulomb gas integrals method.

\textbf{Organization of the paper.}
The paper is organized as follows. Section~2 recalls Liouville amplitudes, the
Liouville Hilbert space, descendant states, sewing, and the BGKR variational
formula. Section~3 proves the Schwarzian variation of the anomaly. Section~4
derives the local Ward identities and the results on descendant matrix
coefficients. Section~5 contains the smoothness and BPZ applications.

\subsubsection*{Acknowledgements:} The authors would like to thank Colin Guillarmou for suggesting the question concerning a geometric approach to Ward identities, as well as for his valuable discussions. B. Wu has been supported by the National Key R\&D Program of China (No.\ 2023YFA1010700).

\section{Preliminaries}\label{sec:preliminaries}

Throughout the paper, we fix $\gamma\in(0,2)$ and use the conventions
\begin{equation}\label{eq:liouville-parameters}
 Q:=\frac{\gamma}{2}+\frac{2}{\gamma},\qquad
 c_{\rm L}:=1+6Q^2,
 \qquad
 \Delta_\alpha:=\frac{\alpha}{2}\left(Q-\frac{\alpha}{2}\right).
\end{equation}
This section collects analytic, probabilistic, and representation-theoretic results related to Liouville CFT. The probabilistic construction of Liouville CFT was initiated on the sphere in~\cite{dkrv-lqg-sphere}, extended to higher genus in~\cite{grv-higher-genus}, and developed for surfaces with boundary in~\cite{hrv-disk,remy-annulus,wu2023liouvilleconformalfieldtheory}. We use the amplitude formalism and its sewing properties from~\cite{segalaxiom,guillarmou2024conformalbootstrapsurfacesboundary}, the Virasoro construction from~\cite{Baverez_2024,semigroup}, and the description of the degenerate Liouville modules from~\cite{irreducible_rep}.

In the following, we recall the probabilistic construction of the Liouville field, the vertex operators, and the amplitudes. Next, we describe the Liouville representation of the Virasoro algebra. Finally, we state the variational formula for the amplitudes with connect to the Virasoro algebra.

\subsection{Riemann surfaces with analytic boundary}

A \emph{Riemann surface with analytic boundary} is a compact oriented smooth surface $\Sigma$ whose boundary is a disjoint union of circles
\[
 \partial\Sigma=\bigsqcup_{j=1}^{b}\partial_j\Sigma,
\]
together with a complex structure on the interior which extends analytically across every boundary component. Equivalently, every point of $\partial\Sigma$ has a neighborhood which is mapped biholomorphically to a neighborhood of an interval in the closed upper half-plane, with the boundary mapped to the real axis. The complex structure induces the orientation of $\Sigma$ and the corresponding boundary orientation.

A smooth Riemannian metric $g$ on $\Sigma$ is \emph{compatible} with the complex structure if, in every holomorphic coordinate $z$, it has the form $g=e^{\sigma(z,\bar z)}|dz|^2$. We denote by $dv_g$ the volume measure, by $d\ell_g$ the boundary measure, by $K_g$ the scalar curvature, and by $k_g$ the geodesic curvature. 

Let $\mathbb T:=\{z\in\mathbb C:|z|=1\}$. An analytic boundary parametrization is a real-analytic diffeomorphism $\zeta_j:\mathbb T\to\partial_j\Sigma$ which extends biholomorphically to an annular neighborhood of $\mathbb T$.

\begin{definition}[Incoming and outgoing boundary parametrizations]\label{def:orientation}
Let $\nu$ be the inward unit normal along $\partial_j\Sigma$ and set $\dot\zeta_j(\theta):=\partial_\theta\zeta_j(e^{\mathrm i\theta})$. The parametrized boundary $\partial_j\Sigma$ is called \emph{outgoing} if the ordered pair $(\dot\zeta_j,\nu)$ is positively oriented in $\Sigma$, and \emph{incoming} otherwise.
\end{definition}

We denote
\(
 \boldsymbol\zeta=(\zeta_1,\ldots,\zeta_{b}),
\)
for the boundary parametrization. 
\begin{definition}[Admissible metric and admissible surface]\label{def:admissible surface}
A compatible metric $g$ is \emph{admissible} for $(\Sigma,\boldsymbol\zeta)$ if for every Dirichlet boundary component, there exists $\delta<1$ such that
\begin{equation}\label{eq:admissible-collar}
 g=(\zeta_j^{-1})^*\frac{|dz|^2}{|z|^2}
\end{equation}
on the collar parametrized by $\zeta_j(\mathbb A_{\delta,\delta^{-1}})$, where
\[
 \mathbb A_{r,R}:=\{z\in\mathbb C:r<|z|<R\}.
\]
The triple $(\Sigma,g,\boldsymbol\zeta)$ is then called an \emph{admissible surface}. If interior and boundary marked points with weights are also specified, we speak of an admissible surface with insertions.
\end{definition}

The cylindrical condition in~\eqref{eq:admissible-collar} is precisely the condition which permits smooth sewing of two metrics along oppositely oriented parametrized boundary circles. These conventions agree with those used in~\cite{segalaxiom,semigroup,guillarmou2024conformalbootstrapsurfacesboundary}.

\subsection{Probabilistic Liouville theory}

\subsubsection{Gaussian free field and Gaussian multiplicative chaos}

We use the non-negative Laplacian $\Delta_g=d^*d$. If $\partial\Sigma\neq\emptyset$, let $\Delta_{g,D}$ be its Dirichlet realization, with domain $H^2(\Sigma)\cap H_0^1(\Sigma)$. This operator has compact resolvent and strictly positive discrete spectrum. We denote its zeta-regularized determinant by $\det(\Delta_{g,D})$.

The Dirichlet Green operator $R_{g,D}:=2\pi\Delta_{g,D}^{-1}$ is characterized by
\begin{equation}\label{eq:dirichlet-resolvent}
 \Delta_{g,D}R_{g,D}=2\pi\mathrm{Id}
 \quad\text{on }L^2(\Sigma,dv_g),
\end{equation}
and its integral kernel is denoted by $G_{g,D}$. Since the range of $R_{g,D}$ is contained in the domain of $\Delta_{g,D}$, the Dirichlet boundary condition is built into this definition. If $(\lambda_{D,j},e_{D,j})_{j\geq1}$ is an orthonormal basis of real Dirichlet eigenfunctions, the Dirichlet Gaussian free field is
\begin{equation}\label{eq:dirichlet-gff}
 X_{g,D}:=\sqrt{2\pi}\sum_{j\geq1}
 \frac{a_j}{\sqrt{\lambda_{D,j}}}e_{D,j},
\end{equation}
where $(a_j)_{j\geq1}$ are independent standard real Gaussian variables. The series converges almost surely in $H^s(\Sigma)$ for every $s<0$, and its covariance is $\mathbb E[X_{g,D}(x)X_{g,D}(y)]=G_{g,D}(x,y)$ in the sense of distributions. In the boundary-field framework used below, we fix $s\in(-\frac12,0)$; see~\cite{semigroup,segalaxiom}.

For a boundary field $\widetilde{\boldsymbol\varphi}=(\widetilde\varphi_1,\ldots,\widetilde\varphi_b)$, let $P_{\Sigma,\boldsymbol\zeta}\widetilde{\boldsymbol\varphi}$ denote its harmonic extension, characterized for smooth boundary data by
\begin{equation}\label{Pdefin}
 \Delta_gP_{\Sigma,\boldsymbol\zeta}
 \widetilde{\boldsymbol\varphi}=0,
 \qquad
 P_{\Sigma,\boldsymbol\zeta}
 \widetilde{\boldsymbol\varphi}|_{\partial_j\Sigma}
 =
 \widetilde\varphi_j\circ\zeta_j^{-1},
 \qquad j=1,\ldots,b.
\end{equation}
For the distributional boundary fields used in the amplitude construction, this extension is understood by continuity from smooth data. The Liouville field on a surface with boundary is then
\begin{equation}\label{eq:liouville-field-boundary}
 \phi_g
 :=
 X_{g,D}
 +
 P_{\Sigma,\boldsymbol\zeta}
 \widetilde{\boldsymbol\varphi}.
\end{equation}

When $\partial\Sigma=\emptyset$, let $\Pi_0$ be the orthogonal projection onto the constants and let $R_g$ be the mean-zero Green operator determined by $\Delta_gR_g=2\pi(\mathrm{Id}-\Pi_0)$ and $R_g1=0$. We denote its kernel by $G_g$, write $X_g$ for the centered Gaussian free field with covariance $G_g$, and use $\det{}'(\Delta_g)$ for the zeta-regularized determinant with the zero eigenspace omitted. In this case we set
\begin{equation}\label{eq:liouville-field-zero-mode}
 \phi_g:=c+X_g,
 \qquad c\in\mathbb R.
\end{equation}

Let $X$ denote $X_{g,D}$ when $\partial\Sigma\neq\emptyset$ and $X_g$ when $\Sigma$ is closed. We write $X_\varepsilon$ for its circle-average regularization in the metric $g$. Uniformly for $x$ in compact subsets of $\Sigma^\circ$,
\begin{equation}\label{eq:gff-variance-asymptotic}
 \mathbb E[X_\varepsilon(x)^2]
 =
 -\log\varepsilon+W_g(x)+o(1),
\end{equation}
where $W_g$ denotes the corresponding Robin function. Equivalent smooth local regularizations give the same limiting chaos measure.

For $\gamma\in(0,2)$, the regularized bulk Gaussian multiplicative chaos measure is
\begin{equation}\label{eq:bulk-gmc}
 M_{\gamma,g,\varepsilon}(X,dx)
 :=
 \varepsilon^{\gamma^2/2}
 e^{\gamma X_\varepsilon(x)}\,dv_g(x).
\end{equation}
It converges in probability, for the topology of weak convergence of Radon measures, to a non-trivial random measure denoted by $M_{\gamma,g}(X,dx)$. For a deterministic shift $h$, we use the convention
\[
 M_{\gamma,g}(X+h,dx)
 :=
 e^{\gamma h(x)}M_{\gamma,g}(X,dx);
\]
in particular, when $\partial\Sigma\neq\emptyset$, this defines $M_{\gamma,g}(\phi_g,dx)$ from~\eqref{eq:liouville-field-boundary}. We refer to~\cite{rhodes-vargas-review} for Gaussian multiplicative chaos and to~\cite{dkrv-lqg-sphere,segalaxiom} for its use in Liouville theory.

For an interior point $x\in\Sigma^\circ$ and a weight $\alpha\in\mathbb R$, the regularized bulk vertex operator is
\begin{equation}\label{eq:bulk-vertex}
 V_{\alpha,g,\varepsilon}(x)
 :=
 \varepsilon^{\alpha^2/2}
 e^{\alpha\phi_{g,\varepsilon}(x)}.
\end{equation}
The pair $(x,\alpha)$ is called a \emph{bulk insertion}. These regularized insertions are used inside the limiting amplitudes and correlation functions defined below. Their conformal weight is $\Delta_\alpha$ as fixed in~\eqref{eq:liouville-parameters}.

\subsubsection{The Liouville Hilbert space and amplitudes}
\label{sub:hilbert}
We recall the Hilbert space for the Liouville CFT and Dirichlet-boundary amplitude formalism which is discussed in \cite{segalaxiom,semigroup, reviewsegal}.

\paragraph{The circle field and the Hilbert space.}
The centered Gaussian free field on the unit circle is the random
Fourier series
\begin{equation}\label{eq:circle-gff}
 \varphi(\theta)
 =
 \sum_{n\neq0}\varphi_ne^{\mathrm i n\theta},
 \qquad
 \varphi_n
 =
 \frac{x_n+\mathrm i y_n}{2\sqrt n}
 \quad(n>0),
 \qquad
 \varphi_{-n}
 =
 \overline{\varphi_n},
\end{equation}
where $(x_n,y_n)_{n\geq1}$ are independent standard Gaussian pairs.
It has covariance
$\mathbb E[\varphi(\theta)\varphi(\theta')]
=-\log|e^{\mathrm i\theta}-e^{\mathrm i\theta'}|$
and belongs almost surely to $H^s(\mathbb T)$ for every $s<0$.
We write $\widetilde\varphi=c+\varphi$, set
$\Omega_{\mathbb T}:=(\mathbb R^2)^{\mathbb N^*}$, and denote by
\begin{equation}\label{eq:circle-gaussian-measure}
 \mathbb P_{\mathbb T}
 :=
 \bigotimes_{n\geq1}
 \frac1{2\pi}
 e^{-\frac12(x_n^2+y_n^2)}
 \,dx_n\,dy_n
\end{equation}
the law of the non-constant modes. On the circle, we use the
normalized Hermitian product
\begin{equation}\label{eq:circle-inner-product}
 \langle f,h\rangle_2
 :=
 \frac1{2\pi}
 \int_0^{2\pi}
 f(e^{\mathrm i\theta})
 \overline{h(e^{\mathrm i\theta})}
 \,d\theta,
\end{equation}
and on a disjoint union of circles we take the direct sum of these
products. The reference measure and the Liouville Hilbert space are
\begin{equation}\label{eq:liouville-hilbert}
 d\mu_0(\widetilde\varphi)
 :=
 dc\otimes d\mathbb P_{\mathbb T}(\varphi),
 \qquad
 \mathcal H
 :=
 L^2(\mathbb R\times\Omega_{\mathbb T},\mu_0).
\end{equation}
We denote the Hermitian product by
$\langle\cdot,\cdot\rangle_{\mathcal H}$ and, for later use, write
$c_-:=\min(c,0)$ and $c_+:=\max(c,0)$. Weighted spaces in the zero
mode are always understood with respect to
$dc\otimes\mathbb P_{\mathbb T}$.

\paragraph{Liouville amplitudes.}
Let
$(\Sigma,g,\mathbf x,\boldsymbol\alpha,\boldsymbol\zeta)$
be a compact Riemann surface with analytic parametrized boundary,
where
$\mathbf x=(x_1,\ldots,x_m)\subset\Sigma^\circ$
are pairwise distinct and
$\boldsymbol\alpha=(\alpha_1,\ldots,\alpha_m)\in\mathbb R^m$.
We assume $\mu>0$ and $\alpha_j<Q$ for every $j$. If
$\partial\Sigma\neq\emptyset$, all boundary components are Dirichlet
and
$\boldsymbol\zeta=(\zeta_1,\ldots,\zeta_b)$
parametrizes them.

For boundary data
$\widetilde{\boldsymbol\varphi}$, let
$P_{\Sigma,\boldsymbol\zeta}\widetilde{\boldsymbol\varphi}$
and $\phi_g$ be as in
\eqref{Pdefin} and
\eqref{eq:liouville-field-boundary}.
The Liouville weight is
\begin{equation}\label{eq:liouville-weight}
 \mathcal W_{\Sigma,g}(\phi)
 :=
 \exp\left(
 -\frac{Q}{4\pi}
 \int_\Sigma K_g\phi\,dv_g
 -\frac{Q}{2\pi}
 \int_{\partial\Sigma}k_g\phi\,d\ell_g
 -\mu M_{\gamma,g}(\phi,\Sigma)
 \right).
\end{equation}

Let $\mathbf D_\Sigma$ be the Dirichlet-to-Neumann operator of
$\Sigma$, transported to $\mathbb T^b$ by
$\boldsymbol\zeta$, and let $\mathbf D$ be the direct sum of the
Dirichlet-to-Neumann operators of the unit disk. The free-field
amplitude is
\begin{equation}\label{eq:free-field-amplitude}
 \mathcal A^0_{\Sigma,g}
 (\widetilde{\boldsymbol\varphi})
 :=
 \exp\left(
 -\frac12
 \left\langle
 \widetilde{\boldsymbol\varphi},
 (\mathbf D_\Sigma-\mathbf D)
 \widetilde{\boldsymbol\varphi}
 \right\rangle_2
 \right).
\end{equation}
For an admissible metric,
$\mathbf D_\Sigma-\mathbf D$ is smoothing in the boundary
coordinates. Hence~\eqref{eq:free-field-amplitude}, initially
defined for smooth boundary data, extends continuously to the
distributional boundary fields sampled by $\mu_0^{\otimes b}$.
We use the determinant normalization
\[
 Z_{\Sigma,g,D}
 :=
 \det(\Delta_{g,D})^{-1/2}
 \exp\left(
 \frac1{8\pi}
 \int_{\partial\Sigma}k_g\,d\ell_g
 \right).
\]

For a continuous nonnegative functional $F$ of the field, the
Liouville amplitude is
\begin{align}
 &\mathcal A_{\Sigma,g,\mathbf x,\boldsymbol\alpha,
 \boldsymbol\zeta}
 (F,\widetilde{\boldsymbol\varphi})
 :=
 Z_{\Sigma,g,D}\,
 \mathcal A^0_{\Sigma,g}
 (\widetilde{\boldsymbol\varphi})
 \lim_{\varepsilon\to0}
 \mathbb E\left[
 F(\phi_g)
 \prod_{j=1}^{m}
 V_{\alpha_j,g,\varepsilon}(x_j)
 \mathcal W_{\Sigma,g}(\phi_g)
 \right],
 \label{eq:liouville-amplitude-prob}
\end{align}
where the expectation is over the Dirichlet Gaussian free field
$X_{g,D}$. If $F$ is bounded and the global
Seiberg inequality in~\eqref{def: seiberg bound} holds, then the
amplitude belongs to $L^2(\mu_0^{\otimes b})$ and therefore
determines the Hilbert-space tensor used under the Segal axiom \cite{segalaxiom}.

For a closed surface, with $\phi_g$ as in
\eqref{eq:liouville-field-zero-mode}, define $\phi_g:=c+X_g$, and use
\[
 Z_{\Sigma,g}
 :=
 \left(
 \frac{\det{}'(\Delta_g)}
 {\operatorname{vol}_g(\Sigma)}
 \right)^{-1/2}.
\]
In this case,
\begin{align}
 \mathcal A_{\Sigma,g,\mathbf x,\boldsymbol\alpha}(F)
 :=Z_{\Sigma,g}
 \lim_{\varepsilon\to0}
 \int_{\mathbb R}
 \mathbb E\Bigg[
 &F(c+X_g)
 \prod_{j=1}^{m}
 V_{\alpha_j,g,\varepsilon}(x_j)
 \nonumber\\
 &\times
 \exp\left(
 -\frac{Q}{4\pi}
 \int_\Sigma K_g(c+X_g)\,dv_g
 -\mu M_{\gamma,g}(c+X_g,\Sigma)
 \right)
 \Bigg]dc.
 \label{eq:closed-liouville-amplitude}
\end{align}

\begin{definition}[Liouville correlation functions]
\label{def:correlation}
For a closed surface, we have
$$
\left\langle
 \prod_{j=1}^{m}V_{\alpha_j,g}(x_j)
 \right\rangle_{\Sigma,g}
=
\mathcal A_{\Sigma,g,\mathbf x,\boldsymbol\alpha}(1).
$$
\end{definition}
The correlation function is finite and non-zero precisely under the
Seiberg bounds
\begin{equation}\label{def: seiberg bound}
 \alpha_j<Q
 \quad(1\leq j\leq m),
 \qquad
 \sum_{j=1}^{m}\alpha_j
 >
 Q\chi(\Sigma).
\end{equation}
For a surface with parametrized boundary, the local conditions
$\alpha_j<Q$ suffice for the almost-everywhere definition of the
kernel, while the second inequality
in~\eqref{def: seiberg bound} gives its $L^2$ integrability.

\paragraph{Weyl covariance and diffeomorphism invariance.}
We use the Liouville anomaly functional
\begin{equation}\label{def:liouville action}
 S_{\rm L}^0(\Sigma,g_0,e^\omega g_0)
 :=
 \frac1{96\pi}
 \int_\Sigma
 \left(
 |d\omega|_{g_0}^2
 +2K_{g_0}\omega
 \right)dv_{g_0}.
\end{equation}
The following form of the covariance statement is the one used
below.

\begin{proposition}[Weyl covariance and diffeomorphism invariance]
\label{prop:weyl+diff}
Let $g'=e^\omega g$. If $\partial\Sigma\neq\emptyset$, assume that
$g$ and $g'$ are admissible for the same boundary parametrizations
and that $\omega$ vanishes on a neighborhood of
$\partial\Sigma$. Then
\begin{align}
 &\mathcal A_{\Sigma,g',\mathbf x,\boldsymbol\alpha,
 \boldsymbol\zeta}
 (F,\widetilde{\boldsymbol\varphi})
 =
 \exp\left(
 c_{\rm L}S_{\rm L}^0(\Sigma,g,g')
 -\sum_{j=1}^{m}
 \Delta_{\alpha_j}\omega(x_j)
 \right)
 \mathcal A_{\Sigma,g,\mathbf x,\boldsymbol\alpha,
 \boldsymbol\zeta}
 \left(
 F\left(\,\cdot-\frac Q2\omega\right),
 \widetilde{\boldsymbol\varphi}
 \right).
 \label{eq:weylcoa}
\end{align}
For a closed surface, the same identity holds without boundary data.
Moreover, if
$\psi:\Sigma\to\Sigma'$
is an orientation-preserving diffeomorphism, then
\begin{equation}\label{eq:diffeomorphism-covariance}
 \mathcal A_{\Sigma,\psi^*g,\mathbf x,\boldsymbol\alpha,
 \boldsymbol\zeta}
 (F,\widetilde{\boldsymbol\varphi})
 =
 \mathcal A_{\Sigma',g,\psi(\mathbf x),\boldsymbol\alpha,
 \psi\circ\boldsymbol\zeta}
 (F_\psi,\widetilde{\boldsymbol\varphi}),
\end{equation}
where
$F_\psi(\phi):=F(\phi\circ\psi)$.
\end{proposition}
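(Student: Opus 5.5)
The proof compares each ingredient of the amplitude \eqref{eq:liouville-amplitude-prob} for $g'=e^\omega g$ with the corresponding ingredient for $g$. The ingredients are the determinant factor, the Gaussian field, the curvature term, the chaos measure and the vertex operators. The shift $F(\cdot-\frac Q2\omega)$ will come from a Cameron--Martin (Girsanov) shift. The constant $c_{\rm L}=1+6Q^2$ will split as $1$ from the determinant and $6Q^2$ from the Gaussian computation.

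\emph{Boundary case.}

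1. Because $\omega$ vanishes near $\partial\Sigma$, the boundary collars, the geodesic curvature $k_g$, the boundary measure and the parametrizations are unchanged. Hence the boundary term in $\mathcal W$ and the factor $\exp(\frac1{8\pi}\int k_g\,d\ell_g)$ are unchanged.

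2. In dimension two the Dirichlet problem is conformally invariant, and $\Delta_{g'}=e^{-\omega}\Delta_g$. It follows that $P_{\Sigma,\boldsymbol\zeta}$, $\mathbf D_\Sigma$ and the Green kernel $G_{g,D}$ are unchanged: the measure $dv_{g'}=e^\omega dv_g$ compensates the factor $e^{-\omega}$. So $X_{g',D}\overset{d}{=}X_{g,D}$, and $\mathcal A^0$ is unchanged.

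3. The Polyakov--Alvarez formula, whose boundary terms vanish here, gives $\det(\Delta_{g',D})^{-1/2}=e^{S_{\rm L}^0(\Sigma,g,g')}\det(\Delta_{g,D})^{-1/2}$. This is the contribution $1$ to $c_{\rm L}$.

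4. Since $K_{g'}dv_{g'}=(K_g+\Delta_g\omega)\,dv_g$, the bulk curvature term acquires the extra factor $\exp(-\frac Q{4\pi}\int\phi_g\,\Delta_g\omega\,dv_g)$. Its harmonic part integrates to zero by Green's formula, because $\omega$ and $d\omega$ vanish near $\partial\Sigma$. The remaining Gaussian part is handled by Cameron--Martin.
 \begin{itemize}
 \item Using $R_{g,D}\Delta_g\omega=2\pi\omega$, the variance is $\frac{Q^2}{8\pi}\int|d\omega|_g^2dv_g$. This produces the factor $\exp(\frac{Q^2}{16\pi}\int|d\omega|^2dv_g)$.
 \item The same identity shows that the shift of the field is $X\mapsto X-\frac Q2\omega$.
 \item Re-evaluating $-\frac Q{4\pi}\int K_g\phi$ on the shifted field adds $\frac{Q^2}{8\pi}\int K_g\omega\,dv_g$.
 \end{itemize}
 Together these give $6Q^2S_{\rm L}^0(\Sigma,g,g')$.

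5. A $g'$-circle of radius $\varepsilon$ is asymptotically a $g$-circle of radius $e^{\omega/2}\varepsilon$. From the variance asymptotics \eqref{eq:gff-variance-asymptotic} (equivalently $W_{g'}=W_g+\frac\omega2$), one gets
 \begin{itemize}
 \item $M_{\gamma,g'}(X,dx)=e^{\frac{\gamma Q}{2}\omega}M_{\gamma,g}(X,dx)$, using $\frac{\gamma^2}4+1=\frac{\gamma Q}2$;
 \item $V_{\alpha,g'}(x)=e^{\frac{\alpha^2}4\omega(x)}V_{\alpha,g}(x)$ in the $\varepsilon\to0$ limit.
 \end{itemize}
 After the shift $-\frac Q2\omega$, the chaos factor cancels exactly, so the potential term is unchanged. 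Each vertex operator picks up $e^{(\frac{\alpha^2}{4}-\frac{\alpha Q}2)\omega(x_j)}=e^{-\Delta_{\alpha_j}\omega(x_j)}$. The functional $F$ is now evaluated at $\phi_g-\frac Q2\omega$, and collecting all factors yields \eqref{eq:weylcoa}.

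\emph{Closed case.} The Gaussian structure is the same, but $X_g$ has mean zero with respect to $dv_g$ while $X_{g'}$ has mean zero with respect to $dv_{g'}$. I would write $X_{g'}\overset{d}{=}X_g-m_{g'}(X_g)$, where $m_{g'}$ denotes the $g'$-mean, and absorb $m_{g'}$ into the zero mode $c$ by translation invariance of $dc$. The Gauss--Bonnet constraint $\int K_g\,dv_g=4\pi\chi(\Sigma)$ makes the resulting curvature and vertex terms consistent. The factor $\operatorname{vol}$ in $Z_{\Sigma,g}$ combines with the closed Polyakov formula to give again $e^{S_{\rm L}^0}$. I expect this zero-mode and mean-zero bookkeeping, together with justifying the interchange of the $\varepsilon\to0$ limit with the Girsanov shift, to be the main technical obstacle. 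I would first do the computation at fixed $\varepsilon$ and use the uniformity of \eqref{eq:gff-variance-asymptotic} on compact sets away from the insertions.

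\emph{Diffeomorphism invariance.} If $\psi:\Sigma\to\Sigma'$ is an orientation-preserving diffeomorphism, it is an isometry from $(\Sigma,\psi^*g)$ to $(\Sigma',g)$. It therefore intertwines the following objects:
\begin{itemize}
\item the Laplacians, Green kernels, determinants and curvatures;
\item the harmonic extensions, with parametrizations $\psi\circ\boldsymbol\zeta$;
\item the Dirichlet-to-Neumann operators;
\item the circle-average regularizations.
\end{itemize}
Thus $X_{g}\circ\psi\overset{d}{=}X_{\psi^*g}$, and every factor transports. This gives \eqref{eq:diffeomorphism-covariance} with $F_\psi(\phi)=F(\phi\circ\psi)$.
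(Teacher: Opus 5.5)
Your argument is correct, and it takes a different route from the paper. The paper recomputes nothing. It quotes the general Dirichlet-boundary Weyl formula of \cite[Proposition~2.5]{semigroup}, which carries boundary correction terms when the conformal factor is nonzero on $\partial\Sigma$. It then observes that these corrections, together with the boundary trace of $\omega$, drop out when $\omega$ vanishes near $\partial\Sigma$. Diffeomorphism invariance is quoted from the same proposition.

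You instead rederive the formula from the probabilistic definition:
\begin{enumerate}[(i)]
 \item the Polyakov--Alvarez formula supplies the $1$ in $c_{\rm L}$;
 \item the Cameron--Martin shift by $-\frac Q2\omega$ is legitimate because $\omega\in H^1_0$, and it supplies $\frac{Q^2}{16\pi}\int_\Sigma(|d\omega|_g^2+2K_g\omega)\,dv_g=6Q^2S_{\rm L}^0(\Sigma,g,g')$;
 \item the renormalization $W_{g'}=W_g+\frac\omega2$ gives $e^{\frac{\gamma Q}2\omega}$ for the chaos, which the shift cancels, and $e^{-\Delta_{\alpha_j}\omega(x_j)}$ for the vertex operators.
\end{enumerate}
Your version is self-contained and shows exactly where the hypothesis on $\omega$ enters: Green's formula for the harmonic part, the unchanged $\mathbf D_\Sigma$, $k_g$ and boundary terms of $\mathcal W$, and the vanishing Alvarez boundary terms. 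The citation buys brevity and gives the general formula without this hypothesis.

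Two small slips should be fixed. First, a $g'$-geodesic circle of radius $\varepsilon$ is asymptotically a $g$-circle of radius $e^{-\omega/2}\varepsilon$, not $e^{\omega/2}\varepsilon$. Your relation $W_{g'}=W_g+\frac\omega2$ is the correct one and is what your computation actually uses. Second, in the closed case $R_g\Delta_g\omega=2\pi(\omega-\Pi_0\omega)$, so the Cameron--Martin shift is $-\frac Q2\bigl(\omega-m_g(\omega)\bigr)$. The constant $\frac Q2m_g(\omega)$ must then also be absorbed into $c$ by translation invariance of $dc$. After that, the curvature term again produces $\frac{Q^2}{8\pi}\int_\Sigma K_g\omega\,dv_g$. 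The only form of Gauss--Bonnet needed is $\int_\Sigma\Delta_g\omega\,dv_g=0$.
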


\begin{proof}
The general Dirichlet-boundary Weyl formula is in
\cite[Proposition~2.5]{semigroup}. Since $\omega$ vanishes near the
boundary, its boundary trace and all boundary correction terms
vanish, and the formula reduces to~\eqref{eq:weylcoa}.
Diffeomorphism invariance is the second assertion of the same
proposition.
\end{proof}

\subsection{Liouville representation of Virasoro algebra}
We next recall how the Virasoro algebra acts on the Liouville Hilbert space. We first introduce the free-field representation, then the interacting operators, and finally the Liouville highest-weight modules. This order makes explicit the passage from the Gaussian boundary field to the representation-theoretic objects used in the Ward identities.

\subsubsection{The Liouville Hamiltonian and Virasoro operators}
Let $\mathcal S$ be the space of smooth cylinder functions of the Gaussian variables $(x_n,y_n)_{n\geq1}$, with polynomial growth together with all derivatives, and set
\(
 \mathcal C_\infty
 :=\operatorname{Span}\{\psi(c)F:\psi\in C_c^\infty(\mathbb R),\ F\in\mathcal S\}.
\)
This is a dense subspace in $\mathcal H$. For $n>0$, define
\[
 \partial_n:=\sqrt n(\partial_{x_n}-\mathrm i\partial_{y_n}),
 \qquad
 \partial_{-n}:=\sqrt n(\partial_{x_n}+\mathrm i\partial_{y_n}).
\]
On $\mathcal C_\infty$, introduce the Heisenberg operators
\begin{align*}
 \mathbf A_n&:=\frac{\mathrm i}{2}\partial_n,
 &\mathbf A_{-n}&:=\frac{\mathrm i}{2}(\partial_{-n}-2n\varphi_n),\\
 \widetilde{\mathbf A}_n&:=\frac{\mathrm i}{2}\partial_{-n},
 &\widetilde{\mathbf A}_{-n}&:=\frac{\mathrm i}{2}(\partial_n-2n\varphi_{-n}),
\end{align*}
and $\mathbf A_0=\widetilde{\mathbf A}_0:=\frac{\mathrm i}{2}(\partial_c+Q)$. Normal ordering places the annihilation operators to the right. The free-field Virasoro operators are
\begin{align}
 \mathbf L_n^0
 &:=-\mathrm i(n+1)Q\mathbf A_n
 +\sum_{m\in\mathbb Z}:\mathbf A_{n-m}\mathbf A_m:,
 \label{virassorotilde}\\
 \widetilde{\mathbf L}_n^0
 &:=-\mathrm i(n+1)Q\widetilde{\mathbf A}_n
 +\sum_{m\in\mathbb Z}:\widetilde{\mathbf A}_{n-m}\widetilde{\mathbf A}_m:.
 \nonumber
\end{align}
On $\mathcal C_\infty$, they satisfy two commuting Virasoro relations.

Let
\(
 V(\varphi):=\int_{0}^{2\pi}e^{\gamma\varphi(\theta)}\,d\theta
\)
be the circle GMC potential. Define the closed non-negative quadratic form
\begin{align}
 \mathcal Q(F)
 :=\int_{\mathbb R}\Bigg(&\frac12\|\partial_cF\|_{L^2(\Omega_{\mathbb T})}^2
 +\frac{Q^2}{2}\|F\|_{L^2(\Omega_{\mathbb T})}^2
 +\sum_{n\geq1}\|\mathbf A_nF\|_{L^2(\Omega_{\mathbb T})}^2+\sum_{n\geq1}\|\widetilde{\mathbf A}_nF\|_{L^2(\Omega_{\mathbb T})}^2
 +\mu e^{\gamma c}\langle VF,F\rangle_{L^2(\Omega_{\mathbb T})}\Bigg)dc.
 \label{eq:liouville-quadratic-form}
\end{align}
For $\gamma<\sqrt2$, $V$ is a multiplication operator; for $\gamma\geq\sqrt2$, the last term is defined by the corresponding closed quadratic-form limit. We denote the domain for quadratic form by $\mathcal D(\mathcal Q)$ and its anti-dual by $\mathcal D'(\mathcal Q)$.

For $n\geq0$, the interacting Virasoro operators are the operators in $\mathcal L\bigl(\mathcal D(\mathcal Q),\mathcal D'(\mathcal Q)\bigr)$ given by
\begin{align}
 \mathbf L_n
 &=\mathbf L_n^0+\frac\mu2e^{\gamma c}
 \int_0^{2\pi}e^{\mathrm i n\theta}e^{\gamma\varphi(\theta)}\,d\theta,\label{eq:interacting-virasoro}\\
 \widetilde{\mathbf L}_n
 &=\widetilde{\mathbf L}_n^0+\frac\mu2e^{\gamma c}
 \int_0^{2\pi}e^{-\mathrm i n\theta}e^{\gamma\varphi(\theta)}\,d\theta.
 \nonumber
\end{align}
For $n>0$, the negative modes are defined by duality,
\[
 \langle\mathbf L_{-n}F,G\rangle_{\mathcal H}
 :=\langle F,\mathbf L_nG\rangle_{\mathcal H},
 \qquad
 \langle\widetilde{\mathbf L}_{-n}F,G\rangle_{\mathcal H}
 :=\langle F,\widetilde{\mathbf L}_nG\rangle_{\mathcal H}.
\]
They satisfy the commutation relations
\begin{align}
 [\mathbf L_n,\mathbf L_m]
 &=(n-m)\mathbf L_{n+m}
 +\frac{c_{\rm L}}{12}(n^3-n)\delta_{n,-m}\,\mathrm{Id},
 \label{eq:virasoro-commutator}\\
 [\widetilde{\mathbf L}_n,\widetilde{\mathbf L}_m]
 &=(n-m)\widetilde{\mathbf L}_{n+m}
 +\frac{c_{\rm L}}{12}(n^3-n)\delta_{n,-m}\,\mathrm{Id},\nonumber\\
 [\mathbf L_n,\widetilde{\mathbf L}_m]&=0.\nonumber
\end{align} 
Moreover, the Liouville Hamiltonian is given by
\(
 \mathbf H=\mathbf L_0+\widetilde{\mathbf L}_0.
\)

\subsubsection{Generalized eigenstates and Liouville modules}
Set $c_-:=\min(c,0)$. For $\beta>0$, we use the weighted
form space
\begin{equation}\label{eq:weighted-form-domain}
 e^{-\beta c_-}\mathcal D(\mathcal Q)
 :=
 \left\{
 F:e^{\beta c_-}F\in\mathcal D(\mathcal Q)
 \right\},
 \qquad
 \|F\|_{e^{-\beta c_-}\mathcal D(\mathcal Q)}
 :=
 \|e^{\beta c_-}F\|_{\mathcal D(\mathcal Q)}.
\end{equation}
The generalized primary state $\Psi_\alpha$ is not, in general, an
element of $\mathcal H$. It forms an analytic family
\(
 \alpha \to \Psi_\alpha
\)
which is holomorphic on
$\{|\Re(\alpha)-Q|<\beta\}$ with values in
$e^{-\beta c_-}\mathcal D(\mathcal Q)$. The same statement holds for
all descendant families
$\alpha\mapsto\Psi_{\alpha,\nu,\widetilde\nu}$
\cite[Theorem~1.2]{Baverez_2024}. 
The initial
Poisson--scattering construction of this family is given in
\cite[Sections~6--7]{gkrv-bootstrap}, while the absence of poles of
the full Poisson operator is proved in
\cite[Theorem~1.1]{irreducible_rep}.

The primary state satisfies
\begin{equation}\label{eq:primary-highest-weight}
 \mathbf L_n\Psi_\alpha
 =
 \widetilde{\mathbf L}_n\Psi_\alpha=0
 \quad(n>0),
 \qquad
 \mathbf L_0\Psi_\alpha
 =
 \widetilde{\mathbf L}_0\Psi_\alpha
 =
 \Delta_\alpha\Psi_\alpha.
\end{equation}
In particular, for $\alpha=Q+\mathrm iP$ with $P\geq0$, one has
$\mathbf H\Psi_{Q+\mathrm iP}
=\frac12(Q^2+P^2)\Psi_{Q+\mathrm iP}$.

A partition is a finite non-increasing sequence
\[
 \nu=(\nu_1,\ldots,\nu_{\ell(\nu)}),
 \qquad
 \nu_1\geq\cdots\geq\nu_{\ell(\nu)}\geq1.
\]
Its size and length are
\(
 |\nu|:=\sum_{j=1}^{\ell(\nu)}\nu_j,
 \ell(\nu):=\#\{j:\nu_j\geq1\}.
\)
The empty partition is denoted by $\emptyset$ and has size and
length zero. For simplicity, we always set
\begin{equation}\label{eq:partition-operator}
 \mathbf L_{-\nu}
 :=
 \mathbf L_{-\nu_1}\cdots
 \mathbf L_{-\nu_{\ell(\nu)}},
 \qquad
 \widetilde{\mathbf L}_{-\widetilde\nu}
 :=
 \widetilde{\mathbf L}_{-\widetilde\nu_1}\cdots
 \widetilde{\mathbf L}_{-\widetilde\nu_{\ell(\widetilde\nu)}},
\end{equation}
with $\mathbf L_{-\emptyset}
=\widetilde{\mathbf L}_{-\emptyset}
=\mathrm{Id}$. Moreover, we denote
\[
 \mathcal T_n:=\{\nu\in\mathcal T:|\nu|=n\}, \qquad \mathcal T=\cup_n \mathcal T_n.
\]
Then we define
the two-sided descendants by
\begin{equation}\label{formula:eigenstate}
 \Psi_{\alpha,\nu,\widetilde\nu}
 :=
 \mathbf L_{-\nu}
 \widetilde{\mathbf L}_{-\widetilde\nu}
 \Psi_\alpha,
 \qquad
 \alpha\in\mathbb C,
 \quad
 \nu,\widetilde\nu\in\mathcal T.
\end{equation}
For every fixed pair
$(\nu,\widetilde\nu)$, the map
$\alpha\mapsto\Psi_{\alpha,\nu,\widetilde\nu}$ is analytic on
$\mathbb C$; see
\cite[Theorem~1.2]{Baverez_2024} and
\cite[Theorem~1.1]{irreducible_rep}. The descendants satisfy
\begin{align}
 \mathbf L_0\Psi_{\alpha,\nu,\widetilde\nu}
 &=
 (\Delta_\alpha+|\nu|)
 \Psi_{\alpha,\nu,\widetilde\nu},
 \nonumber\\
 \widetilde{\mathbf L}_0
 \Psi_{\alpha,\nu,\widetilde\nu}
 &=
 (\Delta_\alpha+|\widetilde\nu|)
 \Psi_{\alpha,\nu,\widetilde\nu},
 \label{eq:descendant-eigenvalues}\\
 \mathbf H\Psi_{\alpha,\nu,\widetilde\nu}
 &=
 (2\Delta_\alpha+|\nu|+|\widetilde\nu|)
 \Psi_{\alpha,\nu,\widetilde\nu}.
 \nonumber
\end{align}
For $\alpha\in\mathbb C$, define the Liouville
module
\begin{equation}\label{eq:liouville-module-definition}
 \mathcal V_\alpha
 :=
 \operatorname{Span}
 \left\{
 \Psi_{\alpha,\nu}
 :=
 \mathbf L_{-\nu}\Psi_\alpha:
 \nu\in\mathcal T
 \right\},
\end{equation}
which carries a Virasoro algebra action as stated above.

We next recall the contravariant form used below. Let
$M(c_{\rm L},\Delta_\alpha)$ be the Verma module with highest-weight
vector $v_\alpha$, and let $V(c_{\rm L},\Delta_\alpha)$ denote its
irreducible quotient. For $n\geq0$, set
$\mathcal T_n:=\{\nu\in\mathcal T:|\nu|=n\}$.

\begin{definition}[Shapovalov form]
\label{def:shapovalov-form}
The Shapovalov form
$\langle\cdot,\cdot\rangle_\alpha^{\rm Sh}$ is the unique symmetric
bilinear form on $M(c_{\rm L},\Delta_\alpha)$ such that
$\langle v_\alpha,v_\alpha\rangle_\alpha^{\rm Sh}=1$ and
$\langle\mathbf L_{-k}u,v\rangle_\alpha^{\rm Sh}
=\langle u,\mathbf L_kv\rangle_\alpha^{\rm Sh}$ for every
$k\geq1$. Its level-$n$ Gram matrix is
\begin{equation}\label{eq:shapovalov-matrix}
 F_\alpha^n(\nu_1,\nu_2)
 :=
 \left\langle
 \mathbf L_{-\nu_1}v_\alpha,
 \mathbf L_{-\nu_2}v_\alpha
 \right\rangle_\alpha^{\rm Sh},
 \qquad
 \nu_1,\nu_2\in\mathcal T_n.
\end{equation}
The coefficients $F_\alpha^n(\nu_1,\nu_2)$ are polynomials in
$c_{\rm L}$ and $\Delta_\alpha$.
\end{definition}

For $r,s\in\mathbb N^*$, set
\begin{equation}\label{eq:kac-weights}
 \alpha_{r,s}
 :=
 Q-r\frac\gamma2-s\frac2\gamma
 =
 (1-r)\frac\gamma2+(1-s)\frac2\gamma,
\end{equation}
and define
$\mathrm{Kac}^-:=\{\alpha_{r,s}:r,s\in\mathbb N^*\}$,
$\mathrm{Kac}^+
:=\{2Q-\alpha_{r,s}:r,s\in\mathbb N^*\}$, and
$\mathrm{Kac}:=\mathrm{Kac}^-\sqcup\mathrm{Kac}^+$.

\begin{proposition}[Liouville highest-weight modules {\cite[Theorem~1.2]{irreducible_rep}}]
\label{liouville module}
For every $\alpha\in\mathbb C$, the algebraic descendant space
$\mathcal V_\alpha$ is a quotient of
$M(c_{\rm L},\Delta_\alpha)$. More precisely:
\begin{enumerate}[(i)]
 \item if $\alpha\notin\mathrm{Kac}$, then
 $\mathcal V_\alpha$ is isomorphic to the Verma module
 $M(c_{\rm L},\Delta_\alpha)$ and is irreducible;

 \item if $\alpha\in\mathrm{Kac}^-$, then
 $\mathcal V_\alpha$ is isomorphic to the irreducible quotient
 $V(c_{\rm L},\Delta_\alpha)$;

 \item if $\alpha\in\mathrm{Kac}^+$, then
 $\mathcal V_\alpha$ is isomorphic to
 $V(c_{\rm L},\Delta_\alpha)$ unless
 \[
  \alpha\in
  \left(Q+\frac2\gamma\mathbb N^*\right)
  \cup
  \left(Q+\frac\gamma2\mathbb N^*\right),
 \]
 in which case $\mathcal V_\alpha=\{0\}$.
\end{enumerate}
\end{proposition}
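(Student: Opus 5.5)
The statement is quoted from \cite[Theorem~1.2]{irreducible_rep}, so within this paper it is a citation. Still, here is how I would organize an independent proof. It rests on two inputs: the universal property of the Verma module, and the structure of singular vectors at $c_{\rm L}=1+6Q^2$ with $\gamma\in(0,2)$.

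\textbf{Step 1: $\mathcal V_\alpha$ is a quotient of the Verma module.} By \eqref{eq:primary-highest-weight}, $\Psi_\alpha$ is a highest-weight vector of weight $\Delta_\alpha$ for the Virasoro action \eqref{eq:virasoro-commutator}. The universal property then gives a surjection
\[
M(c_{\rm L},\Delta_\alpha)\to\mathcal V_\alpha,\qquad v_\alpha\mapsto\Psi_\alpha .
\]
This holds for every $\alpha\in\mathbb C$, because the descendant families $\alpha\mapsto\Psi_{\alpha,\nu}$ are analytic on all of $\mathbb C$ \cite{Baverez_2024,irreducible_rep}.

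\textbf{Step 2: the Gram form and the Kac determinant.} To identify the kernel of this surjection, I would pair descendants through the Hilbert-space structure, using the adjoint relation $\mathbf L_{-n}=\mathbf L_n^*$. For real $P$, the Plancherel/scattering normalization makes the pairing $\langle\Psi_{Q+\mathrm iP,\nu},\Psi_{Q+\mathrm iP',\nu'}\rangle$ equal to the Shapovalov Gram matrix $F^n_\alpha$ times a Dirac mass in $P-P'$. Off the line $\Re\alpha=Q$, I would pass to a bilinear pairing and continue analytically.

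The Kac determinant formula says $\det F^n_\alpha$ vanishes exactly at the $\Delta_{\alpha_{r,s}}$ with $rs\le n$. Since $\Delta_\alpha=\Delta_{2Q-\alpha}$, this is exactly the condition $\alpha\in\mathrm{Kac}$. So for $\alpha\notin\mathrm{Kac}$ the Verma module is irreducible, and $\mathcal V_\alpha\neq0$ because $\Psi_\alpha\neq0$. This proves (i).

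\textbf{Step 3: reflection and the Kac values, the main obstacle.} For $\alpha\in\mathrm{Kac}$ one must show two things:
\begin{itemize}
\item[(a)] the singular vectors $\mathcal S_{r,s}v_\alpha$ map to $0$;
\item[(b)] nothing else maps to $0$, except in the exceptional cases where everything does.
\end{itemize}

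For (a), I would use the reflection identity $\Psi_\alpha=R(\alpha)\Psi_{2Q-\alpha}$. The idea is that $\mathcal S_{r,s}\Psi_\alpha$ is an analytic family in $\alpha$ that is also highest-weight at level $rs$. Near $\alpha_{r,s}$, its weight $\Delta_\alpha+rs$ can be matched with $\Delta_{\alpha'}$ for a companion weight $\alpha'$ inside the analyticity strip. The spectral theorem for $\mathbf H$, via the absence of point spectrum and the Poisson-operator construction of \cite{gkrv-bootstrap}, then forces such a vector to vanish on the relevant line, and analytic continuation extends the vanishing.

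For $\mathrm{Kac}^+$, the vanishing of the whole module at $\alpha\in(Q+\frac2\gamma\mathbb N^*)\cup(Q+\frac\gamma2\mathbb N^*)$ should come from a zero of $R(\alpha)$ that is not compensated. This requires the explicit DOZZ-type formula for the reflection coefficient. For the remaining points of $\mathrm{Kac}^+$, $R$ is finite and nonzero, so $\mathcal V_\alpha\cong\mathcal V_{2Q-\alpha}$ and the case reduces to $\mathrm{Kac}^-$.

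For (b), the Feigin--Fuchs classification at $c_{\rm L}>25$ says the maximal proper submodule of $M(c_{\rm L},\Delta_{\alpha_{r,s}})$ is generated by the single singular vector at level $rs$. So (a) alone yields the irreducible quotient $V(c_{\rm L},\Delta_\alpha)$, provided $\mathcal V_\alpha\neq0$.

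The genuinely hard analytic point is the vanishing argument in (a), namely the proof that the full Poisson operator has no poles. That is exactly the content of \cite[Theorem~1.1]{irreducible_rep}, and I would rely on it.
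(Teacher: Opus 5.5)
The paper gives no argument for this proposition: it is quoted from \cite[Theorem~1.2]{irreducible_rep}, so your sketch has to stand on its own. Step~1 and the algebraic half of Step~2 are fine. The Plancherel/Gram detour is not needed, since irreducibility of $M(c_{\rm L},\Delta_\alpha)$ for $\alpha\notin\mathrm{Kac}$ is pure Kac-determinant algebra.

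The genuine gap is Step~3(a), which carries all of (ii) and (iii); the mechanism you propose cannot work. The family $\alpha\mapsto\mathbf S_{\alpha_{r,s}}\Psi_\alpha$ is highest-weight only where $\Delta_\alpha=\Delta_{\alpha_{r,s}}$, so there is no analytic family of highest-weight vectors to continue. The single vector $u=\mathbf S_{\alpha_{r,s}}\Psi_{\alpha_{r,s}}$ has $\mathbf L_0$-eigenvalue $\Delta_{\alpha_{r,s}}+rs$ but $\widetilde{\mathbf L}_0$-eigenvalue $\Delta_{\alpha_{r,s}}$. It therefore has spin $rs\neq0$ and cannot be matched with any spinless $\Psi_{\alpha'}$. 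Its $\mathbf H$-eigenvalue is $2\Delta_{\alpha_{r,s}}+rs=\frac{Q^2}{2}-\frac{r^2\gamma^2}{8}-\frac{2s^2}{\gamma^2}$, strictly below the continuum $[\frac{Q^2}{2},\infty)$. Like $\Psi_{\alpha_{r,s}}$ itself, $u$ is a non-$L^2$ generalized eigenvector, so absence of point spectrum says nothing about it, and there is no ``relevant line'' to restrict to.

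What is missing is a free-field input that separates $\alpha_{r,s}$ from $2Q-\alpha_{r,s}$. In the Poisson/scattering construction, the Liouville descendants are images of the Fock descendants $\mathbf L^0_{-\nu}\widetilde{\mathbf L}^0_{-\widetilde\nu}e^{(\alpha-Q)c}$ under a map that is linear in the free data. Hence linear relations in the Fock module of momentum $\alpha$ pass to $\mathcal V_\alpha$, and at momentum $\alpha_{r,s}$ the singular vector already vanishes there. For instance, $\mathbf L^0_{-1}e^{(\alpha-Q)c}=\mathrm i\alpha\,\mathbf A_{-1}e^{(\alpha-Q)c}$ vanishes at $\alpha=\alpha_{1,1}=0$ but not at $2Q$. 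The absence of poles in \cite[Theorem~1.1]{irreducible_rep} is what lets such a construction be evaluated at Kac points. It is not itself the vanishing statement, so citing it does not close Step~3(a). In (b), note that when $\gamma^2/4\in\mathbb Q$ a point of $\mathrm{Kac}^-$ can carry several labels $(r,s)$. The maximal submodule is then generated by the lowest-level singular vector, so (a) must be applied to that label.

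Second, (i) requires $\Psi_\alpha\neq0$, which you assert without proof, and your own reflection identity shows it fails at points outside $\mathrm{Kac}$. At $\alpha=Q$ it reads $\Psi_Q=R(Q)\Psi_Q$ with $R(Q)=-1$, so $\Psi_Q=0$ although $Q\notin\mathrm{Kac}$. Likewise, the zeros of $R$ that you invoke on $(Q+\frac2\gamma\mathbb N^*)\cup(Q+\frac\gamma2\mathbb N^*)$ are zeros of an explicit ratio of Gamma functions and occur for every $\gamma\in(0,2)$. When $\gamma^2/4\notin\mathbb Q$, however, these points are not of the form $Q+r\frac\gamma2+s\frac2\gamma$ with $r,s\geq1$, so they lie outside $\mathrm{Kac}$ as defined here. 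As written, Steps~2 and~3 therefore contradict each other.

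You need a genuine nonvanishing argument: the leading term $e^{(\alpha-Q)c}$ for $\Re\alpha<Q$, the case $P\neq0$ on the line $\alpha=Q+\mathrm iP$, and $R(\alpha)\neq0$ for $\Re\alpha>Q$. Conclusion (i) then holds only off the zero set of $\alpha\mapsto\Psi_\alpha$, which must be added to the exceptional set. The statement as transcribed here does not list these points, so it has to be read with the conventions of the cited source.
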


For $\alpha=\alpha_{r,s}\in\mathrm{Kac}^-$, the Verma module
contains a singular vector at level $rs$, and its image in the
irreducible Liouville quotient vanishes. Consequently, there are
universal coefficients $\sigma_\nu^{r,s}$, depending only on
$c_{\rm L}$ and $\Delta_{\alpha_{r,s}}$, such that
\begin{equation}\label{eq:null-vector-relation}
 \mathbf S_{\alpha_{r,s}}\Psi_{\alpha_{r,s}}=0,
 \qquad
 \mathbf S_{\alpha_{r,s}}
 :=
 \sum_{|\nu|=rs}
 \sigma_\nu^{r,s}\mathbf L_{-\nu}.
\end{equation}
The normalization is fixed by
$\sigma_{(1,\ldots,1)}^{r,s}=1$. At level two,
\begin{equation}\label{eq:level-two-null-vector}
 \left(
 \alpha^2\mathbf L_{-2}
 +\mathbf L_{-1}^2
 \right)\Psi_\alpha=0,
 \qquad
 \alpha\in\{\alpha_{1,2},\alpha_{2,1}\}.
\end{equation}
This is the representation-theoretic input for the BPZ equations
used later.

For $\alpha=Q+\mathrm iP$ with $P>0$, one has
$\alpha\notin\mathrm{Kac}$, so
Proposition~\ref{liouville module} identifies $\mathcal V_\alpha$
with the Verma module and~\eqref{eq:shapovalov-matrix} with the
contravariant Gram matrix of the Liouville descendants. After
orthogonalizing each finite-dimensional level with these Gram
matrices, the two-sided descendants enter the spectral resolution of
$\mathbf H$; see~\cite{gkrv-bootstrap,Baverez_2024} and
\cite[Proposition~6.3]{segalaxiom}.

\subsection{Variation of amplitudes}

We now pass to the geometric interpretation of Liouville amplitudes with its variational formula. which gives the Virasoro generators. 

Let $(\Sigma_1,g_1,\mathbf x_1,\boldsymbol\alpha_1,\boldsymbol\zeta_1)$ and $(\Sigma_2,g_2,\mathbf x_2,\boldsymbol\alpha_2,\boldsymbol\zeta_2)$ be admissible surfaces. Assume that $k$ outgoing boundary components of $\Sigma_1$ are identified with $k$ incoming boundary components of $\Sigma_2$ through their parametrizations and that the cylindrical metrics glue smoothly. We denote the sewn surface by $(\Sigma,g,\mathbf x,\boldsymbol\alpha,\boldsymbol\zeta)=(\Sigma_1\#\Sigma_2,g_1\#g_2,\mathbf x_1\sqcup\mathbf x_2,\boldsymbol\alpha_1\sqcup\boldsymbol\alpha_2,\boldsymbol\zeta)$.

\begin{proposition}[Sewing of amplitudes {\cite[Propositions~5.1--5.2]{segalaxiom}}] \label{prop: gluing}
For nonnegative measurable functionals $F_j$ on the fields of $\Sigma_j$, the amplitude of the sewn surface is obtained by contracting the boundary variables on the glued components:
\begin{align}
 &\mathcal A_{\Sigma_1\#\Sigma_2,g_1\#g_2}
 (F_1\otimes F_2,
  \widetilde{\boldsymbol\varphi}_1,
  \widetilde{\boldsymbol\varphi}_2)
 =C_k\int
 \mathcal A_{\Sigma_1,g_1}
 (F_1,\widetilde{\boldsymbol\varphi}_1,
  \widetilde{\boldsymbol\varphi})
 \mathcal A_{\Sigma_2,g_2}
 (F_2,\widetilde{\boldsymbol\varphi},
  \widetilde{\boldsymbol\varphi}_2)
 \,d\mu_0^{\otimes k}(\widetilde{\boldsymbol\varphi}),
 \label{eq:amplitude-sewing}
\end{align}
where
\[
 C_k=
 \begin{cases}
  (2\pi)^{-k/2},&\partial\Sigma\neq\emptyset,\\[2mm]
  \sqrt2\,(2\pi)^{-(k-1)/2},&\partial\Sigma=\emptyset.
 \end{cases}
\]
If the $k$ outgoing factors of the first operator are precisely the $k$ incoming factors of the second, this is the operator identity $\mathcal A_{\Sigma}=C_k\,\mathcal A_{\Sigma_2}\circ\mathcal A_{\Sigma_1}$, after the natural permutation of tensor factors.
\end{proposition}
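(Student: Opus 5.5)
The plan is to reduce the identity to the Markov property of the Gaussian free field along the sewing curves, then check that every deterministic prefactor in \eqref{eq:liouville-amplitude-prob} recombines correctly. Write $\mathcal C=\bigsqcup_{i=1}^k\mathcal C_i\subset\Sigma$ for the glued circles, parametrized by the common maps $\zeta$ coming from $\Sigma_1$ and $\Sigma_2$. I treat the case $\partial\Sigma\neq\emptyset$ first. The closed case then follows the same way, with the zero mode $c$ and the normalization $Z_{\Sigma,g}$ handled at the end.

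\textbf{Step 1 (field decomposition).} On $\Sigma$, write $X_{g,D}=Y+P_{\Sigma\setminus\mathcal C}(X_{g,D}|_{\mathcal C})$. Here $Y$ is the sum of independent Dirichlet GFFs on $\Sigma_1$ and $\Sigma_2$, and the second term is the piecewise harmonic extension of the trace on $\mathcal C$. Adding the harmonic extension of the outer data $\widetilde{\boldsymbol\varphi}_1,\widetilde{\boldsymbol\varphi}_2$, one sees that $\phi_g$ restricted to $\Sigma_j$ equals $X_{g_j,D}+P_{\Sigma_j,\boldsymbol\zeta_j}(\widetilde{\boldsymbol\varphi}_j,\widetilde{\boldsymbol\varphi})$, with $\widetilde{\boldsymbol\varphi}$ the random trace pulled back to $\mathbb T^k$.

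The law of this trace is Gaussian, with inverse covariance given by the sum of the Dirichlet-to-Neumann operators of the two sides, $\mathbf D_{\Sigma_1}+\mathbf D_{\Sigma_2}$, restricted to $\mathcal C$ and shifted by the outer data. I would formally write its density against $\mu_0^{\otimes k}$ as proportional to
\[
\exp\Big(-\tfrac12\langle\widetilde{\boldsymbol\varphi},(\mathbf D_{\Sigma_1}+\mathbf D_{\Sigma_2}-2\mathbf D)\widetilde{\boldsymbol\varphi}\rangle_2+\text{cross terms}\Big).
\]
Combining this with $\mathcal A^0_{\Sigma,g}(\widetilde{\boldsymbol\varphi}_1,\widetilde{\boldsymbol\varphi}_2)$ should reproduce exactly $\mathcal A^0_{\Sigma_1,g_1}\mathcal A^0_{\Sigma_2,g_2}$. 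The reason is that $\mu_0$ already contains the Gaussian $e^{-\frac12\langle\varphi,2\mathbf D\varphi\rangle}$, corresponding to two disk Dirichlet-to-Neumann terms, and the Lebesgue measure $dc$ on the zero mode.

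\textbf{Step 2 (local functionals).} The interaction and curvature terms are additive over $\Sigma_1$ and $\Sigma_2$:
\begin{itemize}
\item the GMC mass satisfies $M_{\gamma,g}(\phi,\Sigma)=M_{\gamma,g_1}(\phi,\Sigma_1)+M_{\gamma,g_2}(\phi,\Sigma_2)$, since $\mathcal C$ has zero GMC mass;
\item the vertex operators factor, being local;
\item the curvature integral splits;
\item the geodesic-curvature terms on the glued circles cancel, because the metric is cylindrical there ($k_g=0$ on both sides in the $|dz|^2/|z|^2$ collar).
\end{itemize}
The regularization $\varepsilon\to0$ commutes with the decomposition, since the insertions lie in the interiors of the $\Sigma_j$. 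Conditioning on the trace $\widetilde{\boldsymbol\varphi}$ and using independence then factorizes the expectation into the two expectations appearing in $\mathcal A_{\Sigma_1,g_1}$ and $\mathcal A_{\Sigma_2,g_2}$.

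\textbf{Step 3 (determinants and constants; the main obstacle).} What remains is the scalar identity
\[
Z_{\Sigma,g,D}\times(\text{Gaussian normalization of the trace law})=C_k\,Z_{\Sigma_1,g_1,D}Z_{\Sigma_2,g_2,D}.
\]
I would prove this with the Burghelea--Friedlander--Kappeler gluing formula,
\[
\det(\Delta_{g,D})=c\,\det(\Delta_{g_1,D})\det(\Delta_{g_2,D})\det{}'(\mathbf D_{\Sigma_1}+\mathbf D_{\Sigma_2}).
\]
The constant $c$ is local and computable because the metric is cylindrical near $\mathcal C$. One then compares the Fredholm determinant $\det(\mathbf D_{\Sigma_1}+\mathbf D_{\Sigma_2})/\det(2\mathbf D)$, which arises from the Gaussian change of measure, with the factor $(2\pi)^{-k/2}$ contained in $\mu_0$.

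The closed case differs only in one respect: the zero modes of the two sides merge into a single $c$. This accounts for the extra $\sqrt2$ and for $\operatorname{vol}_g$ in $Z_{\Sigma,g}$. Tracking these normalization constants exactly, rather than up to an unspecified multiplicative constant, is where I expect the real work to lie. I would first verify the constant on the model case of two annuli glued into an annulus, where every quantity is explicit in Fourier modes, and then rely on the locality of the BFK constant to transfer it to the general case.

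Finally, the operator form $\mathcal A_\Sigma=C_k\,\mathcal A_{\Sigma_2}\circ\mathcal A_{\Sigma_1}$ is simply a restatement of \eqref{eq:amplitude-sewing} as an integral kernel composition. For nonnegative $F_j$, Tonelli's theorem justifies all the interchanges of integrals without any integrability hypotheses.
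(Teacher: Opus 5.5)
The paper does not prove this proposition; it quotes \cite{segalaxiom}. Your Steps~1--2 are the standard argument, as in the cited source, and they are fine at this level of detail. That argument consists of the Markov decomposition along $\mathcal C$, the trace law with precision $R:=\mathbf D_{\Sigma_1}+\mathbf D_{\Sigma_2}$, the Schur-complement recombination into $\mathcal A^0_{\Sigma_1}\mathcal A^0_{\Sigma_2}$, and additivity of the local terms since $k_g=0$ on the collars.

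The gap is Step~3, which is the only place $C_k$ is decided. You do not carry it out, and the bookkeeping you indicate is wrong in two ways.
\begin{itemize}
\item The measure $\mu_0=dc\otimes\mathbb P_{\mathbb T}$ contains no factor $(2\pi)^{-k/2}$. A factor $(2\pi)^{-1/2}$ per circle appears only when the Gaussian law of each circle mean is written against Lebesgue $dc$.
\item $\det_\zeta R$ is not a Fredholm determinant relative to $2\mathbf D$. The operator $2\mathbf D$ kills constants, whereas $R$ is invertible when $\partial\Sigma\neq\emptyset$, so BFK involves $\det_\zeta R$, not $\det{}'$. The factor that does not cancel is exactly what fixes the constant.
\end{itemize}

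Concretely, let $A:=2\mathbf D+\Pi_0$ on $\mathbb T^k$, where $\Pi_0$ is the projection onto circle-wise constants. In the coordinates $(c,x_n,y_n)$ the trace has precision operator $A^{-1/2}RA^{-1/2}$. Its density with respect to $\mu_0^{\otimes k}$ is therefore $(2\pi)^{-k/2}\det_{\rm Fr}(A^{-1}R)^{1/2}\exp\bigl(-\frac12\langle\widetilde{\boldsymbol\varphi}-m,R(\widetilde{\boldsymbol\varphi}-m)\rangle_2+\frac12\langle\widetilde{\boldsymbol\varphi},2\mathbf D\widetilde{\boldsymbol\varphi}\rangle_2\bigr)$.

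For product collars the BFK constant is $2^{-\zeta_{\Delta_{\mathcal C}}(0)-\dim\ker\Delta_{\mathcal C}}=1$. Your two-cylinder check confirms this: the Dirichlet determinant of $[0,L]\times\mathbb R/2\pi\mathbb Z$ is $\frac{L}{\pi}e^{-L/6}\prod_{n\geq1}(1-e^{-2nL})^2$. Hence $\det\Delta_{g,D}=\det\Delta_{g_1,D}\det\Delta_{g_2,D}\det_\zeta R$, with $\det_\zeta R=\det_\zeta(A)\det_{\rm Fr}(A^{-1}R)$ and $\det_\zeta(A)=\pi^k$ (eigenvalues $1$ and $2|n|$, $n\neq0$). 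Only the Fredholm factor cancels, so $C_k=(2\pi)^{-k/2}\pi^{-k/2}=(\sqrt2\,\pi)^{-k}$.

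The closed case goes the same way. Use one Lebesgue zero mode normalised on $\mathcal C$ (Jacobian $k^{-1/2}$) and the length $\ell_g(\mathcal C)=2\pi k$ from the closed BFK formula. The same count gives $\sqrt2\,(\sqrt2\,\pi)^{-(k-1)}$.

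So your route, once completed, proves \eqref{eq:amplitude-sewing} with these constants, not with the printed $(2\pi)^{-k/2}$ and $\sqrt2\,(2\pi)^{-(k-1)/2}$. The printed values appear to be a misprint of $\sqrt2\,\pi$ as $\sqrt{2\pi}$. The paper itself uses the factor $\frac{1}{\sqrt2\pi}$ when it sews in a disk in the proof of Theorem~\ref{thm: ward identity}, and \eqref{eq:round-annulus-semigroup} seems to carry the same slip.

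Finally, Tonelli covers the $\mu_0$-integrals but not the interchange of $\varepsilon\to0$ with them. That interchange needs a uniform-integrability argument.
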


\subsubsection{Annular amplitudes and Virasoro generators}
\label{sec:Variation of annuli}

Fix $\delta<1$ and set $1+\varepsilon:=\delta^{-1}$. Let $\operatorname{Hol}_\delta(\mathbb A)$ be the Fr\'echet space of holomorphic functions on $\mathbb A_{\delta,\delta^{-1}}$, with topology defined by the seminorms
\begin{equation}\label{eq:hol-annulus-seminorm}
 \|F\|_{\varepsilon,k}^2
 :=\sum_{n\in\mathbb Z}
 (1+\varepsilon)^{2|n|+2}(1+|n|)^{2k}|F_n|^2,
 \qquad
 F(z)=\sum_{n\in\mathbb Z}F_nz^{n+1}.
\end{equation}
Let $\operatorname{BiHol}_\delta(\mathbb A)$ be the subset of maps which are biholomorphic on a neighborhood of $\mathbb T$, and set $\mathcal S_\delta(\mathbb A):=\{f\in\operatorname{BiHol}_\delta(\mathbb A):f(\mathbb T)\subset\mathbb D^\circ\}$. For $f\in\mathcal S_\delta(\mathbb A)$, let $\mathbb D_f$ be the bounded component enclosed by $f(\mathbb T)$ and define $\mathbb A_f:=\mathbb D\setminus\overline{\mathbb D_f}$. We take $\zeta_{\rm out}(w)=w$ on the outer boundary and $\zeta_{\rm in}(w)=f(w)$ on the inner boundary, so that $\boldsymbol\zeta_f=(\zeta_{\rm out},\zeta_{\rm in})$ has one outgoing and one incoming component.

After choosing an admissible metric $g_f$, the annular amplitude $\mathcal A_{\mathbb A_f,g_f,\boldsymbol\zeta_f}$ is a bounded operator on $\mathcal H$ and the family of these operators gives a projective representation of the semigroup of parametrized annuli; see~\cite[Proposition~3.1]{semigroup}. For the round annulus $\mathbb A_{e^{-t}}$ with the cylindrical metric $g_{\mathbb A}=|dz|^2/|z|^2$, the standard normalization is
\begin{equation}\label{eq:round-annulus-semigroup}
 e^{-t\mathbf H}
 =\frac{e^{-tc_{\rm L}/12}}{\sqrt{2\pi}}
 \mathcal A_{\mathbb A_{e^{-t}},g_{\mathbb A},
 \boldsymbol\zeta_t},
\end{equation}
where $\mathbf H$ is the Liouville Hamiltonian. 
More generally, let
$v(z)=-\sum_{n\in\mathbb Z}v_nz^{n+1}\partial_z$ be holomorphic on an
annular neighborhood of $\mathbb T$. We associate to $v$ the
real-linear Virasoro operator
\begin{equation}\label{eq:annular-virasoro-generator}
 \mathbf H_v
 :=
 \sum_{n\in\mathbb Z}
 \left(
 v_n\mathbf L_n+
 \overline{v_n}\,\widetilde{\mathbf L}_n
 \right):
 \mathcal D(\mathcal Q)\longrightarrow
 \mathcal D'(\mathcal Q).
\end{equation}
The series in~\eqref{eq:annular-virasoro-generator} converges in
$\mathcal L(\mathcal D(\mathcal Q),\mathcal D'(\mathcal Q))$.
In particular, for the dilation vector field
$v_{\rm dil}:=-z\partial_z$,
one has
$\mathbf H_{v_{\rm dil}}=\mathbf H$. Let $f_t$ be the flow generated by $v(z)$
Differentiating the annular amptidue $A_{\mathbb A_{f_t},g_{f_t},\boldsymbol\zeta_{f_t}}$ yields the operators $\mathbf H_v$; see~\cite[Theorems~4.23--4.24]{semigroup}.

\subsubsection{Variation of Liouville amplitudes}

A surface is said to be of type
$(\mathfrak g,m,b_{\rm out},b_{\rm in})$
if it has genus $\mathfrak g$, $m$ interior marked points,
$b_{\rm out}$ outgoing boundary components, and
$b_{\rm in}$ incoming boundary components. Let $(\Sigma,g,\mathbf x,\boldsymbol\alpha,\boldsymbol\zeta)$ be an admissible Riemann surface of type $(\mathfrak g,m,0,b)$. Fill the boundary components by copies $\mathcal D_j\simeq\mathbb D$ and set
\begin{equation}\label{eq:filling-surface}
 \widehat\Sigma
 :=\left(\Sigma\sqcup\bigsqcup_{j=1}^{b}\mathcal D_j\right)/\!\sim,
 \qquad
 z\sim\zeta_j(z)\quad(z\in\mathbb T).
\end{equation}
Let $\widehat x_j$ be the center of $\mathcal D_j$ and let $\widehat g$ be the metric which equals $g$ on $\Sigma$ and $|dz|^2/|z|^2$ on every punctured cap $\mathcal D_j\setminus\{\widehat x_j\}$. Choose $\delta<1$ so that every $\zeta_j$ extends to $\mathbb A_{\delta,\delta^{-1}}$. For
$\mathbf f=(f_1,\ldots,f_b)$ in a neighborhood of
$\mathbf f_0=(\mathrm{Id},\ldots,\mathrm{Id})$ in
$\operatorname{Hol}_\delta(\mathbb A)^b$, let $\mathcal D_j(f_j)$ be the component bounded by $f_j(\mathbb T)$ which contains the origin and set
\begin{equation}\label{eq:deformed-surface}
 \Sigma_{\mathbf f}
 :=\widehat\Sigma\setminus\bigcup_{j=1}^{b}\mathcal D_j(f_j),
 \qquad
 \boldsymbol\zeta_{\mathbf f}
 :=(\zeta_1\circ f_1,\ldots,\zeta_b\circ f_b).
\end{equation}
We let $\mathbf f\mapsto g_{\mathbf f}$ be a $C^1$ family of admissible metrics on the corresponding surfaces $\Sigma_{\mathbf f}$, in the sense of~\cite[Theorem 1.2]{semigroup}.

\begin{proposition}[Variation of Liouville amplitudes {\cite[Theorem~1.2]{semigroup}}]
\label{thm: variation of annulus}
Assume that the insertion weights satisfy the Seiberg bounds. There
exists a neighborhood $\mathcal U$ of $\mathbf f_0$ such that
\begin{equation}\label{eq:amplitude-map}
 \mathbf f\longmapsto
 \mathcal A_{\Sigma_{\mathbf f},g_{\mathbf f},
 \mathbf x,\boldsymbol\alpha,\boldsymbol\zeta_{\mathbf f}}
 \in\mathcal L\big(
 \mathcal D(\mathcal Q)^{\otimes b},\mathbb C
 \big)
\end{equation}
is differentiable at $\mathbf f_0$.

We identify a coefficient function $V\in\operatorname{Hol}_\delta(\mathbb A)$ with the holomorphic vector field $V(w)\partial_w$. Let $\boldsymbol v=(v_1,\ldots,v_b)$, where
\[
 v_j=V_j(z)\partial_z,
 \qquad
 V_j(z)=-\sum_{n\in\mathbb Z}[v_j]_n z^{n+1}.
\]
For any $C^1$ curve $t\mapsto\boldsymbol f_t\in\mathcal U$ satisfying $\boldsymbol f_{t=0}=\boldsymbol f_0$ and ${\left.\partial_tf_{j,t}\right|}_{t=0}=V_j$, we set $D_{\boldsymbol v}\mathcal A(\boldsymbol f_0):={\left.\partial_t\mathcal A(\boldsymbol f_t)\right|}_{t=0}$. This derivative depends only on $\boldsymbol v$.

For $F\in\mathcal D(\mathcal Q)^{\otimes b}$, write $\mathbf H_{v_j}^{(j)}:=\mathrm{Id}^{\otimes(j-1)}\otimes\mathbf H_{v_j}\otimes\mathrm{Id}^{\otimes(b-j)}$. Then
\begin{align}
 D_{\mathbf v}\mathcal A(\mathbf f_0)(F)
 :=
 \left.
 \partial_t\mathcal A(\mathbf f_t)
 \right|_{t=0}(F) = {}&
 -\sum_{j=1}^{b}
 \mathcal A(\mathbf f_0)
 \big(
 \mathrm{Id}^{\otimes(j-1)}
 \otimes\mathbf H_{v_j}
 \otimes\mathrm{Id}^{\otimes(b-j)}
 \big)(F)
 \nonumber\\
 &-c_{\rm L}\left(
 D_{\mathbf v}S_{\rm L}^0
 \bigl(
 \Sigma_{\mathbf f_0},
 g_{\mathbf f_0},
 \widehat g
 \bigr)
 -
 \sum_{j=1}^{b}\frac{\Re([v_j]_0)}{12}
\right)
\mathcal A(\mathbf f_0)(F).
 \label{eq:general-amplitude-variation}
\end{align}
\end{proposition}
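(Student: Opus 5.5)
The plan is to reduce the statement to the annular case, \cite[Theorems~4.23--4.24]{semigroup}, using the sewing property in Proposition~\ref{prop: gluing} and Weyl covariance in Proposition~\ref{prop:weyl+diff}.

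\textbf{Step 1: factorization through collars.} Fix $s>0$ small, and write $\Sigma=\Sigma'\#\bigsqcup_j\zeta_j(\mathbb A_{e^{-s},1})$. Here $\Sigma'$ is $\Sigma$ with the collars $\zeta_j(\mathbb A_{e^{-s},1})$ removed, parametrized by $\zeta_j'=\zeta_j(e^{-s}\,\cdot)$; the collars exist because the metric is cylindrical there. For $\mathbf f$ close to $\mathbf f_0$, each $e^{s}f_j$ still maps $\mathbb T$ into the interior of $\mathbb D$, hence lies in $\mathcal S_\delta(\mathbb A)$. Thus
\[
\Sigma_{\mathbf f}=\Sigma'\#\bigsqcup_j\mathbb A_{e^{s}f_j}
\]
after rescaling. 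Here $\mathbb A_{e^sf_j}$ carries the outgoing identity parametrization and the incoming parametrization $e^{s}f_j$. We choose admissible metrics $g_{\mathbf f}$ that agree with $g$ on $\Sigma'$. Proposition~\ref{prop: gluing} then gives
\[
\mathcal A(\mathbf f)=(2\pi)^{-b/2}\,\mathcal A_{\Sigma'}\circ\bigotimes_j\mathcal A_{\mathbb A_{e^sf_j}}.
\]
Any other $C^1$ family of admissible metrics differs from this one by a conformal factor vanishing near $\partial\Sigma$. By \eqref{eq:weylcoa} this changes the amplitude only by $\exp(c_{\rm L}S^0_{\rm L})$, and the chain rule for $S^0_{\rm L}$ shows that the final anomaly term is unchanged when it is written relative to $\widehat g$.

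\textbf{Step 2: differentiate the annuli.} At $\mathbf f_0$ each annular factor equals $\sqrt{2\pi}\,e^{sc_{\rm L}/12}e^{-s\mathbf H}$ by \eqref{eq:round-annulus-semigroup}. The variation $f_{j,t}$ with velocity $V_j$ induces on $\mathbb A_{e^{s}f_j}$ the inner-boundary vector field $e^{s}V_j(e^{-s}\cdot)$. The annular variation theorem then yields
\[
\partial_t\mathcal A_{\mathbb A_{e^sf_{j,t}}}\big|_{t=0}=-e^{-s\mathbf H}\mathbf H_{v_j}\,(\cdots)+(\text{scalar})\,e^{-s\mathbf H}.
\]
To obtain this form we use the intertwining $\mathbf L_ne^{-s\mathbf L_0}=e^{-ns}e^{-s\mathbf L_0}\mathbf L_n$ to move the generator past the semigroup; this is exactly how the rescaled coefficients $e^{-ns}[v_j]_n$ become $[v_j]_n$. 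Recombining with $\mathcal A_{\Sigma'}$ through the sewing identity returns $-\mathcal A(\mathbf f_0)(\mathrm{Id}\otimes\mathbf H_{v_j}\otimes\mathrm{Id})F$. The scalar pieces are the annulus anomaly term and the constant $-\Re([v_j]_0)/12$, which comes from the normalization $e^{-tc_{\rm L}/12}$ in \eqref{eq:round-annulus-semigroup}. Because the Liouville functional is additive under gluing along cylindrical collars, these pieces assemble into $-c_{\rm L}\bigl(D_{\mathbf v}S^0_{\rm L}(\Sigma_{\mathbf f_0},g_{\mathbf f_0},\widehat g)-\sum_j\Re([v_j]_0)/12\bigr)$. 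Differentiability at $\mathbf f_0$, and independence of the chosen curve, follow from the corresponding annular statements.

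\textbf{Main obstacle.} The hard step is functional-analytic: justifying that the derivative passes through the composition. The operator $\mathbf H_{v_j}$ only maps $\mathcal D(\mathcal Q)$ into $\mathcal D'(\mathcal Q)$, so we need $\mathcal A_{\Sigma'}\circ e^{-s\mathbf H}$ to extend boundedly to $\mathcal D'(\mathcal Q)$ in the $j$-th slot. I would first try to derive this from the smoothing of the semigroup, $e^{-s\mathbf H}:\mathcal D'(\mathcal Q)\to\mathcal H$ bounded for $s>0$, combined with the uniform-in-$t$ $C^1$ bounds of the annular amplitudes in $\mathcal L(\mathcal D(\mathcal Q))$ from \cite{semigroup}. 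A secondary check is that the scalar anomaly term is computed with respect to the filling metric $\widehat g$, whose caps are cylindrical; this is what produces the $\Re([v_j]_0)/12$ correction rather than a metric-dependent term.
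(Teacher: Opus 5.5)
Your proposal takes essentially the same route the paper indicates: the core--collar factorization of Remark~\ref{rmk:construct variation annulus} (your $\Sigma'$ is the core $\Sigma^0$), differentiation of the annular factors via \cite[Theorems~4.23--4.24]{semigroup}, additivity of $S_{\rm L}^0$ to assemble the scalar term, and the smoothing of $e^{-s\mathbf H}$ from $\mathcal D'(\mathcal Q)$ into $\mathcal H$ to pass the derivative through the composition. The bookkeeping needs three fixes: under the filling convention \eqref{eq:filling-surface} the collar in $\Sigma$ is $\zeta_j(\mathbb A_{1,e^{s}})$, so the annuli are $\mathbb A_{e^{-s}f_j}$ (i.e.\ $r=e^{-s}$), since $e^{s}f_j(\mathbb T)$ lies outside $\overline{\mathbb D}$; the pushed-forward coefficients are then $e^{ns}[v_j]_n$, which is what your intertwining relation actually turns into $[v_j]_n$; and your reduction to an arbitrary metric family also needs the conformal factor to vanish near the marked points, since otherwise \eqref{eq:weylcoa} contributes an extra term $-\sum_i\Delta_{\alpha_i}\partial_t\omega_t(x_i)$.
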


\begin{remark}[Infinitesimal boundary deformation and sewing]
\label{rmk:construct variation annulus}
We make explicit construction of the family $A(\mathbf f)$ above. Fix
$r\in(\delta,1)$ and remove the collars in $\Sigma_{\mathbf 0}$
\[
 \zeta_j(\mathbb A_{1,r^{-1}}),
 \qquad
 j=1,\ldots,b.
\]
The remaining core $\Sigma^0$ is independent of $\mathbf f$. Let $\mathbb A_{rf_j}$ denote the annular region between $\mathbb T$ and the Jordan curve $rf_j(\mathbb T)$, equipped with an admissible metric $g_{rf_j}$ and the induced boundary parametrizations $\boldsymbol\zeta_{rf_j}$. With the normalized sewing composition defined after Proposition~\ref{prop: gluing}, the amplitude defines as
\begin{equation}\label{eq:core-annulus-factorization}
 \mathcal A(\mathbf f)
 =
 \mathcal A_{\Sigma^0,g_{\mathbf f_0},
 \mathbf x,\boldsymbol\alpha,\boldsymbol\zeta^0}
 \circ
 \bigotimes_{j=1}^{b}
 \mathcal A_{\mathbb A_{rf_j},
 g_{rf_j},
 \boldsymbol\zeta_{rf_j}}.
\end{equation}
\end{remark}

\section{The Schwarzian Variation of Liouville anomaly}\label{sec:metric}

In this section we prove \Cref{thm:metric}. We first work out the genus zero case, so the filled surface is the Riemann sphere equipped with its standard projective structure. Then we extend into the higher-genus case which requires a choice of projective connection.

\paragraph{Setup.}
Let $(\Sigma,g,\mathbf x,\boldsymbol\alpha,\boldsymbol\zeta)$ be an admissible Riemann surface of type $(0,m,0,b)$ with all boundary components incoming, and let $(\widehat\Sigma,\widehat g)$ be its filling defined in~\eqref{eq:filling-surface}. If $b=0$, the boundary statement below is empty, so we assume $b\geq1$. We fix a projective coordinate $z$ on $\widehat\Sigma\simeq\widehat{\mathbb C}$ such that $\overline\Sigma\subset\mathbb C$.

For each $k$, let $U_k$ be an annular neighborhood of $\partial_k\Sigma$ on which $\zeta_k^{-1}$ is defined, and set $U:=\bigcup_{k=1}^{b}U_k$. Let $v=v(z)\partial_z$ be a holomorphic vector field on $U$. Choose a smooth $(1,0)$-vector field $V\in C^\infty(\widehat\Sigma,T^{1,0}\widehat\Sigma)$ which agrees with $v$ on a possibly smaller neighborhood of $\partial\Sigma$ and vanishes near the centers of the filling disks.

Let $\Phi_t^V$ and $\Phi_t^{\mathrm iV}$ be the flows of the real vector fields $X_V:=V+\overline V$ and $X_{\mathrm iV}:=\mathrm iV-\mathrm i\overline V$. For $|t|$ small, define $\omega_t^V,\omega_t^{\mathrm iV}\in C^\infty(\Sigma,\mathbb R)$ by
\begin{equation}\label{eq:metric-Weyl-factor}
 (\Phi_t^V)^*dv_{\widehat g}=e^{\omega_t^V}dv_g,
 \qquad
 (\Phi_t^{\mathrm iV})^*dv_{\widehat g}=e^{\omega_t^{\mathrm iV}}dv_g.
\end{equation}
Since $\widehat g=g$ on $\Sigma$, we have $\omega_0^V=\omega_0^{\mathrm iV}=0$. We set
\begin{align}
 D_VS_{\rm L}^0(\Sigma)
 &:={\left.\partial_tS_{\rm L}^0\bigl(\Sigma,g,e^{\omega_t^V}g\bigr)\right|}_{t=0},\nonumber\\
 D_{\mathrm iV}S_{\rm L}^0(\Sigma)
 &:={\left.\partial_tS_{\rm L}^0\bigl(\Sigma,g,e^{\omega_t^{\mathrm iV}}g\bigr)\right|}_{t=0},
 \label{eq:metric-Weyl-variation}
\end{align}
and define their holomorphic combination by
\begin{equation}\label{eq:metric-holomorphic-Weyl}
 \mathcal D_VS_{\rm L}^0(\Sigma)
 :=\frac12\left(D_VS_{\rm L}^0(\Sigma)
 -\mathrm iD_{\mathrm iV}S_{\rm L}^0(\Sigma)\right).
\end{equation}
If $V$ is holomorphic on $\Sigma$, then $\Phi_t^V$ is conformal and $e^{\omega_t^V}g=(\Phi_t^V)^*\widehat g$.

Write $g=e^{\sigma(z,\bar z)}|dz|^2$ and set
\begin{equation}\label{eq:metric-projective-connection}
 T_g:=\partial_z^2\sigma-\frac12(\partial_z\sigma)^2,
 \qquad
 \mathcal T_g:=T_g(z)\,dz^2.
\end{equation}
Since projective coordinates on $\widehat{\mathbb C}$ differ by M\"obius transformations, $\mathcal T_g$ is a well-defined smooth quadratic differential depending on the standard projective structure. If locally $V=V(z,\bar z)\partial_z$, we denote
\begin{equation}\label{eq:metric-bulk-pairing}
 \langle\bar\partial V,\mathcal T_g\rangle
 :=(\partial_{\bar z}V)T_g(z)\,dz\wedge d\bar z.
\end{equation}
This is a globally defined smooth two-form on $\Sigma$, supported away from $\partial\Sigma$ because $V=v$ is holomorphic on the boundary collars. All contour integrals below are taken with the orientation induced by the incoming parametrizations $\theta\mapsto\zeta_k(e^{\mathrm i\theta})$, which is opposite to the boundary orientation induced from $\Sigma$.

\begin{theorem}[Boundary variation with a smooth extension]\label{prop:metric-extension}
For each $k$, write
\begin{equation}\label{eq:metric-Laurent}
 \zeta_k^*v
 =\frac{v(\zeta_k(w))}{\zeta_k'(w)}\partial_w
 =-\sum_{n\in\mathbb Z}[v_k]_nw^{n+1}\partial_w.
\end{equation}
Then
\begin{align}
 \mathcal D_VS_{\rm L}^0(\Sigma)
 -\frac1{24}\sum_{k=1}^{b}[v_k]_0
 ={}&\frac{\mathrm i}{24\pi}\sum_{k=1}^{b}
 \oint_{\partial_k\Sigma}v(z)S_{\zeta_k^{-1}}(z)\,dz - \frac{\mathrm i}{24\pi}\int_\Sigma
 \langle\bar\partial V,\mathcal T_g\rangle.
 \label{eq:metric-bulk}
\end{align}
\end{theorem}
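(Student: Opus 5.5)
The plan is a direct first-variation computation. The quadratic term $|d\omega|^2_{g}$ in \eqref{def:liouville action} has vanishing first variation at $\omega_0=0$, so
\[
D_VS_{\rm L}^0(\Sigma)=\frac1{48\pi}\int_\Sigma K_g\,\dot\omega^V\,dv_g,
\qquad \dot\omega^V:=\partial_t\omega_t^V|_{t=0}.
\]
Differentiating \eqref{eq:metric-Weyl-factor} and using $\widehat g=g$ on $\Sigma$ gives $\dot\omega^V\,dv_g=\mathcal L_{X_V}dv_g$, i.e.\ $\dot\omega^V=\operatorname{div}_g X_V$, and likewise for $\mathrm iV$. Since $\tfrac12(X_V-\mathrm iX_{\mathrm iV})=V$ and the divergence is $\mathbb C$-linear, the combination \eqref{eq:metric-holomorphic-Weyl} equals $\frac1{48\pi}\int_\Sigma K_g\operatorname{div}_g(V)\,dv_g$.

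In the coordinate $z$ with $g=e^{\sigma}|dz|^2$, the $(1,0)$ field $V=V\partial_z$ has $\operatorname{div}_gV=e^{-\sigma}\partial_z(e^{\sigma}V)=\partial_zV+V\partial_z\sigma$. We also have $K_g\,dv_g=c_0\,\partial_z\partial_{\bar z}\sigma\,dz\wedge d\bar z$ for an explicit constant $c_0$ (a multiple of $\mathrm i$; I will fix it from the convention that $K_g$ is the scalar curvature). The key algebraic step is the pointwise identity, with $\partial=\partial_z$ and $\bar\partial=\partial_{\bar z}$:
\[
\partial\bar\partial\sigma\,(\partial V+V\partial\sigma)
=\partial\bigl(V\partial\bar\partial\sigma\bigr)-\bar\partial\bigl(VT_g\bigr)+T_g\,\bar\partial V .
\]
It follows from $\bar\partial T_g=\partial^2\bar\partial\sigma-\partial\sigma\,\partial\bar\partial\sigma$. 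The last term produces the bulk integral $-\frac{\mathrm i}{24\pi}\int_\Sigma\langle\bar\partial V,\mathcal T_g\rangle$ once constants are matched. The two exact terms become boundary integrals by Stokes. The Möbius invariance of $T_g\,dz^2$ (cocycle of the Schwarzian) makes each term globally meaningful on $\widehat{\mathbb C}$, so no issue arises from the choice of the affine chart with $\overline\Sigma\subset\mathbb C$.

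For the boundary terms, I will use admissibility. On the collar of $\partial_k\Sigma$, put $u=\zeta_k^{-1}$. Then $g=|u'|^2|dz|^2/|u|^2$, so $\sigma=\log|u'|^2-\log|u|^2$ is harmonic and the $\partial(V\partial\bar\partial\sigma)$ contribution vanishes. For $-\bar\partial(VT_g)$, Stokes gives $\oint vT_g\,dz$, with orientation flipped because the incoming orientation is opposite to the induced one. By the Schwarzian chain rule, $T_g(z)=S_u(z)+u'(z)^2\,T_{\rm cyl}(u)$, where $T_{\rm cyl}(w)=1/(2w^2)$ is the projective connection of $|dw|^2/|w|^2$ in the coordinate $w$ (computed directly from $\sigma=-\log|w|^2$). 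The $S_u$ part gives exactly $\frac{\mathrm i}{24\pi}\oint_{\partial_k\Sigma}vS_{\zeta_k^{-1}}dz$. In the remaining part, substitute $z=\zeta_k(w)$: then $v(z)u'(z)^2dz=(\zeta_k^*v)(w)\,dw$, and with \eqref{eq:metric-Laurent} the integral $\oint_{\mathbb T}\bigl(-\sum_n[v_k]_nw^{n+1}\bigr)\frac{dw}{2w^2}$ picks out only $n=0$, giving $-\pi\mathrm i[v_k]_0$. After multiplication by the prefactor this should be exactly the term $\frac1{24}[v_k]_0$ moved to the left-hand side.

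The main obstacle is the bookkeeping of constants and signs, which must all agree to produce exactly $\frac{\mathrm i}{24\pi}$ and $\frac1{24}$. The sources are the normalization of $K_g$, the factor in $dv_g=\frac{\mathrm i}2e^{\sigma}dz\wedge d\bar z$, the orientation reversal for incoming contours, and the sign convention of $[v_k]_n$. A consistency check I will run first: if $V$ is globally holomorphic and the $\zeta_k$ are round, the bulk term vanishes and $S_u=0$, so \eqref{eq:metric-bulk} reduces to a check on the round disk against the dilation field. A secondary point is to confirm that no boundary term is lost: $\omega_t$ does not vanish on $\partial\Sigma$. However, \eqref{def:liouville action} contains no boundary integral, and the only integration by parts is the Stokes step above, where $V=v$ is smooth up to $\partial\Sigma$. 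So the boundary contributions are exactly the contour integrals computed.
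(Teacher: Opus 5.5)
Your proposal is correct and is essentially the paper's proof. Both reduce the variation to $\frac{1}{48\pi}\int_\Sigma K_g\,\operatorname{div}_g V\,dv_g$ and use the same exact-form identity coming from $\partial_{\bar z}T_g=\partial_z^2\partial_{\bar z}\sigma-\partial_z\sigma\,\partial_z\partial_{\bar z}\sigma$. Both then apply Stokes with the incoming orientation, with flatness on the collars, and split $T_g=S_{u}+\frac{(u')^2}{2u^2}$ with $u=\zeta_k^{-1}$ to extract $[v_k]_0$ by a residue.

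The constant you left open is $c_0=-2\mathrm i$. This follows from the scalar curvature $K_g=-4e^{-\sigma}\partial_z\partial_{\bar z}\sigma$ and $dv_g=\frac{\mathrm i}{2}e^{\sigma}dz\wedge d\bar z$, and with it your bookkeeping gives exactly the prefactors $\frac{\mathrm i}{24\pi}$ and $\frac1{24}$.
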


\begin{proof}
Differentiating~\eqref{eq:metric-Weyl-factor} gives ${\left.\partial_t\omega_t^V\right|}_{t=0}=\operatorname{div}_gX_V$. In the coordinate $z$, we therefore have
\begin{align}
 {\left.\partial_t\omega_t^V\right|}_{t=0}
 &=\partial_zV+\partial_{\bar z}\overline V
 +(\partial_z\sigma)V+(\partial_{\bar z}\sigma)\overline V,\nonumber\\
 {\left.\partial_t\omega_t^{\mathrm iV}\right|}_{t=0}
 &=\mathrm i\bigl(\partial_zV+(\partial_z\sigma)V\bigr)
 -\mathrm i\bigl(\partial_{\bar z}\overline V
 +(\partial_{\bar z}\sigma)\overline V\bigr).
 \label{eq:metric-omega-variation}
\end{align}
It follows that
\begin{equation}\label{eq:metric-holomorphic-omega}
 {\left.\bigl(\partial_t\omega_t^V
 -\mathrm i\partial_t\omega_t^{\mathrm iV}\bigr)\right|}_{t=0}
 =2\bigl(\partial_zV+(\partial_z\sigma)V\bigr).
\end{equation}

We now differentiate the anomaly functional~\eqref{def:liouville action}. Since $\omega_0^V=\omega_0^{\mathrm iV}=0$, the Dirichlet-energy terms have zero first variation. Using $K_g\,dv_g=-2\mathrm i\,\partial_z\partial_{\bar z}\sigma\,dz\wedge d\bar z$, we obtain
\begin{align}
 48\pi\left(D_VS_{\rm L}^0(\Sigma)
 -\mathrm iD_{\mathrm iV}S_{\rm L}^0(\Sigma)\right)
 =-4\mathrm i\int_\Sigma
 \bigl(\partial_zV+(\partial_z\sigma)V\bigr)
 \partial_z\partial_{\bar z}\sigma\,dz\wedge d\bar z.
 \label{eq:IBP-master}
\end{align}
A direct calculation gives
\begin{align}
 &\bigl(\partial_zV+(\partial_z\sigma)V\bigr)
 \partial_z\partial_{\bar z}\sigma\,dz\wedge d\bar z =d\left(V\partial_z\partial_{\bar z}\sigma\,d\bar z
 +VT_g\,dz\right)
 +(\partial_{\bar z}V)T_g\,dz\wedge d\bar z.
 \label{eq:metric-exact-form}
\end{align}
Indeed, $\partial_{\bar z}T_g=\partial_z^2\partial_{\bar z}\sigma-(\partial_z\sigma)\partial_z\partial_{\bar z}\sigma$, and expanding the exterior derivative gives~\eqref{eq:metric-exact-form}.

By admissibility, $g$ is flat on each boundary collar, so $\partial_z\partial_{\bar z}\sigma=0$ near $\partial\Sigma$. Applying Stokes' formula to~\eqref{eq:metric-exact-form}, with the incoming orientation convention fixed above, gives
\begin{align}
 &\int_\Sigma
 \bigl(\partial_zV+(\partial_z\sigma)V\bigr)
 \partial_z\partial_{\bar z}\sigma\,dz\wedge d\bar z =-\sum_{k=1}^{b}\oint_{\partial_k\Sigma}v(z)T_g(z)\,dz
 +\int_\Sigma\langle\bar\partial V,\mathcal T_g\rangle.
 \label{eq:metric-Stokes}
\end{align}
Combining~\eqref{eq:IBP-master}, \eqref{eq:metric-holomorphic-Weyl}, and~\eqref{eq:metric-Stokes}, we find
\begin{equation}\label{eq:metric-projective-form}
 \mathcal D_VS_{\rm L}^0(\Sigma)
 =\frac{\mathrm i}{24\pi}\sum_{k=1}^{b}
 \oint_{\partial_k\Sigma}v(z)T_g(z)\,dz
 -\frac{\mathrm i}{24\pi}\int_\Sigma
 \langle\bar\partial V,\mathcal T_g\rangle.
\end{equation}

For the $k$-th boundary component, set $r_k:=\zeta_k^{-1}$. The admissibility condition gives $g=|dr_k|^2/|r_k|^2$ on the corresponding collar, hence $\sigma=\log|r_k'|^2-\log|r_k|^2$. Therefore,
\begin{equation}\label{eq:Schwarzian-split}
 T_g=S_{r_k}+\frac{(r_k')^2}{2r_k^2}.
\end{equation}
Moreover, pulling the second term back by $z=\zeta_k(w)$ and using~\eqref{eq:metric-Laurent} gives
\begin{align}
 \oint_{\partial_k\Sigma}v(z)\frac{(r_k'(z))^2}{r_k(z)^2}\,dz
 &=\oint_{|w|=1}\frac{v(\zeta_k(w))}{\zeta_k'(w)}\frac{dw}{w^2}
 =-2\pi\mathrm i\,[v_k]_0.
 \label{eq:zero-mode}
\end{align}
Substituting~\eqref{eq:Schwarzian-split} and~\eqref{eq:zero-mode} into~\eqref{eq:metric-projective-form} proves~\eqref{eq:metric-bulk}.
\end{proof}

\begin{proof}[Proof of \Cref{thm:metric}]
Let $v$ be as in \Cref{thm:metric}. Choose a smooth global
$(1,0)$-vector field $V$ on $\widehat\Sigma$ which agrees with $v$
on a neighborhood of $\Sigma$. Thus $\bar\partial V=0$ on $\Sigma$.

On a neighborhood of $\overline\Sigma$, the real flow generated by
$V+\overline V$ agrees with the conformal flow $f_t^v$. Since
$g_t^v=(f_t^v)_*g$, diffeomorphism invariance of the anomaly
functional gives
\[
 S_{\rm L}^0(\Sigma_t^v,g_t^v,\widehat g)
 =
 S_{\rm L}^0\bigl(\Sigma,g,(f_t^v)^*\widehat g\bigr).
\]
The map $f_t^v$ is conformal on $\Sigma$, hence
$(f_t^v)^*\widehat g=e^{\omega_t^V}g$. It follows that
\[
 D_VS_{\rm L}^0(\Sigma)
 =
 \left.
 \partial_tS_{\rm L}^0(\Sigma_t^v)
 \right|_{t=0},
 \qquad
 D_{\mathrm iV}S_{\rm L}^0(\Sigma)
 =
 \left.
 \partial_tS_{\rm L}^0(\Sigma_t^{\mathrm i v})
 \right|_{t=0}.
\]
Therefore, \Cref{thm:metric} holds.
\end{proof}

\section{Matrix coefficients}\label{sec:matrix-coefficients}
In this section, we derive local Ward identities for the three geometric building blocks used in the sewing decomposition: a disk with two interior marked points, an annulus with one interior marked point, and a pair of pants. Furthermore, we give an alternative proof of the factorization formula for the matrix
coefficients associated with the building blocks. 

We first
derive a local Ward identity from
Proposition~\ref{thm: variation of annulus}. For round boundary
parametrizations, it gives explicit Virasoro recursions which reduce
descendant coefficients to the corresponding primary coefficient.
The analytic-boundary case is then obtained by sewing annuli along the
boundary components.

More precisely, for $P_1,\ldots,P_b\in\mathbb R_+$ and partitions
$\boldsymbol\nu=(\nu_1,\ldots,\nu_b)$ and
$\widetilde{\boldsymbol\nu}
=(\widetilde\nu_1,\ldots,\widetilde\nu_b)$, we consider, whenever the
denominator is non-zero, the normalized matrix coefficient
\begin{equation}\label{eq:normalized-matrix-coefficient}
 \frac{
 \mathcal A_{\Sigma,g_0,\boldsymbol x,\boldsymbol\alpha,
 \boldsymbol\zeta}
 \left(
 \bigotimes_{j=1}^{b}
 \Psi_{Q+\mathrm iP_j,\nu_j,\widetilde\nu_j}
 \right)}
 {
 \mathcal A_{\Sigma,g_0,\boldsymbol x,\boldsymbol\alpha,
 \boldsymbol\zeta}
 \left(
 \bigotimes_{j=1}^{b}
 \Psi_{Q+\mathrm iP_j,\emptyset,\emptyset}
 \right)}.
\end{equation}
We show that this ratio factors into a holomorphic polynomial
determined by $\boldsymbol\nu$ and an anti-holomorphic polynomial
determined by $\widetilde{\boldsymbol\nu}$. The extension from
$\mathcal D(\mathcal Q)^{\otimes b}$ to generalized descendant states
uses the weighted-space estimates and analytic continuation recalled
in the preliminaries.

\paragraph{Geometric deformation.}
Let $(\Sigma,g_0,\boldsymbol x,\boldsymbol\alpha,\boldsymbol\zeta)$ be an admissible Riemann surface of type $(\mathfrak g,m,0,b)$ with all boundary components incoming, and assume that the insertion weights satisfy the hypotheses of Proposition~\ref{thm: variation of annulus}. Let $(\widehat\Sigma,\widehat g)$ be its filling as in~\eqref{eq:filling-surface}. Let $v$ be a meromorphic vector field on $\widehat\Sigma$ with no pole on a neighborhood of $\Sigma$. For $|t|$ small, denote by $f_t^v$ its local holomorphic flow on that neighborhood, with $f_0^v=\mathrm{Id}$.

For each boundary parametrization $\zeta_j:\mathbb T\to\partial_j\Sigma$, shrink the boundary collar if necessary and set $f_{j,t}:=\zeta_j^{-1}\circ f_t^v\circ\zeta_j$. Writing $v_j:=\zeta_j^*v=V_j(w)\partial_w$, by \eqref{eq:metric-Laurent}
\begin{equation}\label{eq:matrix-boundary-modes}
 v_j=\frac{v\circ\zeta_j}{\zeta_j'}\,\partial_w
 =-\sum_{n\in\mathbb Z}[v_j]_n w^{n+1}\partial_w,
 \qquad
 \left.\partial_t f_{j,t}\right|_{t=0}=V_j.
\end{equation}
Thus $\mathbf f_t:=(f_{1,t},\ldots,f_{b,t})$ is a $C^1$ curve in $\operatorname{Hol}_\delta(\mathbb A)^b$ tangent to $\boldsymbol v=(v_1,\ldots,v_b)$. 
Under the canonical identification induced by $f_t^v$,
the deformed surface $\Sigma_{\mathbf f_t}$ in
\eqref{eq:deformed-surface} is identified with
$f_t^v(\Sigma)$. We henceforth write
\(
 \Sigma_t^v:=\Sigma_{\mathbf f_t}.
\)

Choose the boundary collars disjoint from the marked points. As in Remark~\ref{rmk:construct variation annulus}, we take a $C^1$ family of admissible metrics $g_t$ on $\Sigma_t^v$ which equals $g_0$ on a fixed core $\Sigma^0$ containing $\boldsymbol z$. In the normalized operator convention of Proposition~\ref{prop: gluing}, the corresponding amplitude factors as
\begin{equation}\label{eq:sewn-amplitude}
 \mathcal A_{\Sigma_t^v,g_t,\boldsymbol x,\boldsymbol\alpha,\boldsymbol\zeta_t}
 =\mathcal A_{\Sigma^0,g_0|_{\Sigma^0},\boldsymbol z,
 \boldsymbol\alpha,\boldsymbol\zeta^0}
 \circ\bigotimes_{j=1}^{b}
 \mathcal A_{\mathbb A_{r f_{j,t}},g_{r f_{j,t}},
 \boldsymbol\zeta_{r f_{j,t}}},
\end{equation}
where $r\in(\delta,1)$ is fixed, $\boldsymbol\zeta^0$ parametrizes the boundary of the fixed core, and $\mathbb A_{r f_{j,t}}$ is the annulus bounded by $\mathbb T$ and $r f_{j,t}(\mathbb T)$.

Recall that $S_{\rm L}^0(\Sigma_t^v)=S_{\rm L}^0(\Sigma_t^v,g_t^v,\widehat g)=S_{\rm L}^0\bigl(\Sigma,g_0,(f_t^v)^*\widehat g\bigr)$ and define $C(v,g_0;z):=\frac12\operatorname{tr}_{g_0}(\mathcal L_{X_v}g_0)(z)$, where $X_v:=v+\overline v$ is the associated real vector field.

\begin{lemma}[Local Ward identity]\label{lem:localward}\footnote{Proposition~\ref{lem:localward} was stated in~\cite[Proposition 6.6]{reviewsegal} without proof, which is a direct consequence of~\cite[Theorem 4.21]{semigroup}. We provide the proof for completeness.} 
Let $\boldsymbol z^{\,t}:=f_{-t}^v(\boldsymbol z)$. For every $F\in\mathcal D(\mathcal Q)^{\otimes b}$, the map $t\mapsto\mathcal A_{\Sigma,g_0,\boldsymbol z^{\,t},\boldsymbol\alpha,\boldsymbol\zeta}(F)$ is differentiable at $t=0$. Writing $\mathbf H_{v_j}^{(j)}:=\mathrm{Id}^{\otimes(j-1)}\otimes\mathbf H_{v_j}\otimes\mathrm{Id}^{\otimes(b-j)}$, we have
\begin{align}
 \left.\partial_t
 \mathcal A_{\Sigma,g_0,\boldsymbol z^{\,t},
 \boldsymbol\alpha,\boldsymbol\zeta}(F)\right|_{t=0}
 ={}&-\sum_{j=1}^{b}
 \mathcal A_{\Sigma,g_0,\boldsymbol z,
 \boldsymbol\alpha,\boldsymbol\zeta}
 \bigl(\mathbf H_{v_j}^{(j)}F\bigr)\nonumber\\
 &+\left(\sum_{i=1}^{m}\Delta_{\alpha_i}
 C(v,g_0;z_i)\right)
 \mathcal A_{\Sigma,g_0,\boldsymbol z,
 \boldsymbol\alpha,\boldsymbol\zeta}(F)\nonumber\\
 &-c_{\rm L}\left(
 \left.\partial_tS_{\rm L}^0(\Sigma_t^v)\right|_{t=0}
 -\sum_{j=1}^{b}\frac{\Re([v_j]_0)}{12}
 \right)
 \mathcal A_{\Sigma,g_0,\boldsymbol z,
 \boldsymbol\alpha,\boldsymbol\zeta}(F).
 \label{eq:local-ward}
\end{align}
The Virasoro insertion is introduced in Proposition~\ref{thm: variation of annulus}.
\end{lemma}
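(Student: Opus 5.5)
The plan is to compare two descriptions of the same family of surfaces: the deformed surface $\Sigma_t^v$ with unmoved marked points, and the original surface $\Sigma$ with marked points moved to $\boldsymbol z^{\,t}=f_{-t}^v(\boldsymbol z)$.

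\textbf{Step 1: pull back by the flow.} Apply diffeomorphism invariance \eqref{eq:diffeomorphism-covariance} with $\psi=f_t^v$, a biholomorphism from a neighborhood of $\overline\Sigma$ onto its image. It sends $(\Sigma,\boldsymbol z^{\,t},\boldsymbol\zeta)$ to $(\Sigma_t^v,\boldsymbol z,f_t^v\circ\boldsymbol\zeta)$. Under the identification $\Sigma_{\mathbf f_t}\simeq f_t^v(\Sigma)$, the parametrizations satisfy $f_t^v\circ\zeta_j=\zeta_j\circ f_{j,t}$. This gives
\[
 \mathcal A_{\Sigma,(f_t^v)^*g_t,\boldsymbol z^{\,t},\boldsymbol\alpha,\boldsymbol\zeta}(F)
 =\mathcal A_{\Sigma_t^v,g_t,\boldsymbol z,\boldsymbol\alpha,\boldsymbol\zeta_{\mathbf f_t}}(F).
\]
Here $g_t$ is the $C^1$ family of admissible metrics from \eqref{eq:sewn-amplitude}, equal to $g_0$ on the core $\Sigma^0\ni\boldsymbol z$. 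The right-hand side is differentiable at $t=0$ by Proposition~\ref{thm: variation of annulus}. Its derivative is
\[
 -\sum_j\mathcal A(\mathbf H^{(j)}_{v_j}F)-c_{\rm L}\Bigl(D_{\mathbf v}S_{\rm L}^0-\sum_j\tfrac{\Re[v_j]_0}{12}\Bigr)\mathcal A(F).
\]
One point needs care. In Proposition~\ref{thm: variation of annulus}, $D_{\mathbf v}S^0_{\rm L}$ is computed with the filling metric $\widehat g$. The same family admits the alternative description $S_{\rm L}^0(\Sigma_t^v,g_t^v,\widehat g)$ used in Section~\ref{sec:metric}.

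\textbf{Step 2: Weyl factor.} Since $f_t^v$ is conformal, $(f_t^v)^*g_t=e^{\omega_t}g_0$ with $\omega_t$ smooth and $\omega_0=0$. Near $\partial\Sigma$, admissibility of both metrics for the same parametrizations $\boldsymbol\zeta$ forces $\omega_t=0$ on the collars, because $(f_t^v)^*g_t$ is cylindrical in the $\zeta_j$ coordinates. We can therefore apply \eqref{eq:weylcoa} with $F$ independent of the bulk field; $F$ is a boundary functional, so the shift $-\tfrac Q2\omega$ does not act on it. This gives
\[
 \mathcal A_{\Sigma,e^{\omega_t}g_0,\boldsymbol z^{\,t}}(F)=e^{c_{\rm L}S^0_{\rm L}(\Sigma,g_0,e^{\omega_t}g_0)-\sum_i\Delta_{\alpha_i}\omega_t(z_i^t)}\mathcal A_{\Sigma,g_0,\boldsymbol z^{\,t}}(F).
\]
Differentiating at $t=0$ makes the exponential prefactor contribute
\[
 c_{\rm L}\,\partial_tS^0_{\rm L}(\Sigma,g_0,e^{\omega_t}g_0)-\sum_i\Delta_{\alpha_i}\partial_t\omega_t(z_i).
\]
On the core, $g_t=g_0$, so $e^{\omega_t}g_0=(f_t^v)^*g_0$. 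Hence $\partial_t\omega_t(z_i)=\operatorname{div}_{g_0}X_v(z_i)=C(v,g_0;z_i)$, by the computation \eqref{eq:metric-omega-variation}. The metric factor is then rewritten using diffeomorphism invariance and the cocycle property of $S_{\rm L}^0$:
\[
 S^0_{\rm L}(\Sigma_t^v,g_t,\widehat g)=S^0_{\rm L}(\Sigma,g_0,(f_t^v)^*\widehat g)-S^0_{\rm L}(\Sigma,g_0,e^{\omega_t}g_0)\ (+\text{terms vanishing to first order}).
\]
Moving the anomaly derivative from the Weyl step to the other side, this converts $D_{\mathbf v}S_{\rm L}^0$ into $\partial_tS^0_{\rm L}(\Sigma_t^v)$ as defined before the lemma.

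\textbf{Step 3: differentiability and solving.} Differentiability of $t\mapsto\mathcal A_{\Sigma,g_0,\boldsymbol z^{\,t}}(F)$ follows from the identity in Step 2. That amplitude equals an explicit differentiable scalar times the amplitude on the right of Step 1, which is differentiable. Solving for its derivative yields \eqref{eq:local-ward}, with the sign of the $C$-term positive as claimed.

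\textbf{Main obstacle.} I expect the bookkeeping of the anomaly terms in Steps 1–2 to be the main difficulty. One must check that the $S_{\rm L}^0$ variation from Proposition~\ref{thm: variation of annulus}, taken relative to $\widehat g$, combines with the Weyl anomaly $S^0_{\rm L}(\Sigma,g_0,e^{\omega_t}g_0)$ into exactly $\partial_tS_{\rm L}^0(\Sigma_t^v)$. This requires the cocycle identity $S^0_{\rm L}(g_1,g_3)=S^0_{\rm L}(g_1,g_2)+S^0_{\rm L}(g_2,g_3)$ for Weyl factors supported away from the boundary, which I would verify directly from \eqref{def:liouville action}. It also requires confirming that the choice of the family $g_t$ does not affect the first-order terms.
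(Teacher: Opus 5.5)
Your proposal is correct and follows the paper's proof essentially step for step. Both arguments use diffeomorphism covariance along $f_t^v$ and Weyl covariance with $\omega_t$ vanishing on the collars, then the chain-rule identity $\partial_t\omega_t(z_i^t)|_{t=0}=C(v,g_0;z_i)$, the cocycle identity turning the anomaly of the family $g_t$ into $\partial_tS_{\rm L}^0(\Sigma_t^v)$, and finally solving for the derivative. Your hedge ``(+terms vanishing to first order)'' is unnecessary: since $\omega_t$ vanishes near $\partial\Sigma$, the boundary cross term in the cocycle computation drops out and the identity is exact.
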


\begin{proof}
By~\eqref{eq:matrix-boundary-modes}, the family $\boldsymbol f_t$ has tangent vector $\boldsymbol v$, so Proposition~\ref{thm: variation of annulus} applies to the left-hand side of~\eqref{eq:sewn-amplitude}. Diffeomorphism covariance~\eqref{eq:diffeomorphism-covariance}, applied to $f_t^v:\Sigma\to\Sigma_t^v$, gives
\begin{equation}\label{eq:localward-diffeomorphism}
 \mathcal A_{\Sigma_t^v,g_t,\boldsymbol z,
 \boldsymbol\alpha,\boldsymbol\zeta_t}(F)
 =\mathcal A_{\Sigma,(f_t^v)^*g_t,\boldsymbol z^{\,t},
 \boldsymbol\alpha,\boldsymbol\zeta}(F).
\end{equation}
Write $(f_t^v)^*g_t=e^{\omega_t}g_0$. Since both metrics are admissible for the same boundary parametrizations, $\omega_t$ vanishes in a neighborhood of the boundary, and Weyl covariance~\eqref{eq:weylcoa} yields
\begin{align}
 \mathcal A_{\Sigma,(f_t^v)^*g_t,\boldsymbol z^{\,t},
 \boldsymbol\alpha,\boldsymbol\zeta}(F)
 ={}&\exp\left(c_{\rm L}S_{\rm L}^0
 \bigl(\Sigma,g_0,(f_t^v)^*g_t\bigr)
 -\sum_{i=1}^{m}\Delta_{\alpha_i}
 \omega_t(z_i^t)\right)\mathcal A_{\Sigma,g_0,\boldsymbol z^{\,t},
 \boldsymbol\alpha,\boldsymbol\zeta}(F).
 \label{eq:localward-weyl}
\end{align}

Because $g_t=g_0$ near the fixed marked points, in a local coordinate $w$ one has $\omega_t(w)=\sigma(f_t^v(w),\overline{f_t^v(w)})-\sigma(w,\bar w)+\log|\partial_wf_t^v(w)|^2$. Moreover, $\omega_0=0$ and $\left.\partial_tz_i^t\right|_{t=0}=-v(z_i)$. The chain rule therefore gives $\left.\partial_t\omega_t(z_i^t)\right|_{t=0}=C(v,g_0;z_i)$.

Differentiating~\eqref{eq:localward-weyl} and comparing with Proposition~\ref{thm: variation of annulus} leaves only the anomaly terms to combine. Diffeomorphism invariance first gives $S_{\rm L}^0(\Sigma_t^v,g_t,\widehat g)=S_{\rm L}^0\bigl(\Sigma,(f_t^v)^*g_t,(f_t^v)^*\widehat g\bigr)$, while the cocycle identity gives
\[
 S_{\rm L}^0\bigl(\Sigma,g_0,(f_t^v)^*g_t\bigr)
 +S_{\rm L}^0\bigl(\Sigma,(f_t^v)^*g_t,(f_t^v)^*\widehat g\bigr)
 =S_{\rm L}^0\bigl(\Sigma,g_0,(f_t^v)^*\widehat g\bigr).
\]
The right-hand side is $S_{\rm L}^0(\Sigma_t^v)$ by definition. Substituting this identity gives~\eqref{eq:local-ward}.
\end{proof}

\begin{remark}[Local-coordinate form]\label{r:gennerallem}
Note that $C(v, g_0;\, z_i)$ is independent of the choice of local coordinates \(\{w_i\}\). In a local coordinate $w_i$ centered near $z_i$, if $g_0=e^{\sigma_i(w_i,\bar w_i)}|dw_i|^2$ and $v=v^i(w_i)\partial_{w_i}$, then
\[
 C(v,g_0;z_i)=\left(\partial_{w_i}v^i+\partial_{\bar w_i}\overline{v^i}
 +(\partial_{w_i}\sigma_i)v^i
 +(\partial_{\bar w_i}\sigma_i)\overline{v^i}\right)(z_i).
\]
\end{remark}

\subsection{Round boundary case}\label{sec:round-boundary}

We now apply Lemma~\ref{lem:localward} to the genus-zero building blocks with round boundary components. For pairwise disjoint closed unit disks, we write
\[
 \mathbb D_x^{\mathsf c}:=\widehat{\mathbb C}\setminus\overline{\mathbb D(x,1)},\qquad
 \mathbb A_{x,y}:=\widehat{\mathbb C}\setminus
 \bigl(\overline{\mathbb D(x,1)}\cup\overline{\mathbb D(y,1)}\bigr),
 \qquad
 \mathcal P_{x,y,z}:=\widehat{\mathbb C}\setminus
 \bigcup_{a\in\{x,y,z\}}\overline{\mathbb D(a,1)}.
\]
Here $\mathbb D(a,1):=\{w\in\mathbb C:|w-a|<1\}$.
The boundary circle centered at $x$ is parametrized by $\zeta_x(w):=x+w$, $w\in\mathbb T$, with the incoming orientation. We use the analogous notation at the other boundary components. Let $g_{\mathrm{ad}}=e^{\sigma(w,\bar w)}|dw|^2$ be an admissible metric on the building block.

For $n\geq1$, interior marked points $\boldsymbol z=(z_1,\ldots,z_m)$, and corresponding weights $\boldsymbol\alpha=(\alpha_1,\ldots,\alpha_m)$, define
\begin{equation}\label{eq:round-differential-operator}
 \mathbf D_{n,\boldsymbol z,\boldsymbol\alpha}(x)
 :=\sum_{i=1}^{m}\left(
 -(z_i-x)^{1-n}\partial_{z_i}
 +\left((n-1)(z_i-x)^{-n}
 -(z_i-x)^{1-n}\partial_{z_i}\sigma(z_i,\bar z_i)\right)
 \Delta_{\alpha_i}\right).
\end{equation}
Here and below, a differential operator acts on the entire amplitude written to its right.

\begin{proposition}[Round-boundary Ward identities]\label{prop:matrixcoe}
Let $n\geq1$. In the disk case, let $\alpha=Q+\mathrm iP$, $P\in\mathbb R_+$, and let the two interior weights satisfy $\alpha_1,\alpha_2<Q$ and $\alpha_1+\alpha_2>Q$. In the annulus case, let $\alpha_j=Q+\mathrm iP_j$, $P_j\in\mathbb R_+$, and let the interior weight $\beta$ belong to $(0,Q)$. In the pair-of-pants case, let $\alpha_j=Q+\mathrm iP_j$, $P_j\in\mathbb R_+$. Then the following identities hold for all partitions occurring below.

\begin{enumerate}[(i)]
 \item For the disk $\mathbb D_x^{\mathsf c}$ with two interior marked points $\boldsymbol z=(z_1,z_2)$ and weights $\boldsymbol\alpha=(\alpha_1,\alpha_2)$,
 \begin{equation}\label{eq:round-disk-ward}
  \mathcal A_{\mathbb D_x^{\mathsf c},g_{\mathrm{ad}},
  \boldsymbol x,\boldsymbol\alpha,\zeta_x}
  \bigl(\mathbf L_{-n}\Psi_{\alpha,\nu,\widetilde\nu}\bigr)
  =\mathbf D_{n,\boldsymbol z,\boldsymbol\alpha}(x)
  \mathcal A_{\mathbb D_x^{\mathsf c},g_{\mathrm{ad}},
  \boldsymbol x,\boldsymbol\alpha,\zeta_x}
  \bigl(\Psi_{\alpha,\nu,\widetilde\nu}\bigr).
 \end{equation}

 \item For the annulus $\mathbb A_{x,y}$ with one interior marked point $z$ of weight $\beta$ and incoming parametrizations $\boldsymbol\zeta=(\zeta_x,\zeta_y)$,
 \begin{align}
 &\mathcal A_{\mathbb A_{x,y},g_{\mathrm{ad}},z,\beta,
 \boldsymbol\zeta}
 \bigl(\mathbf L_{-n}\Psi_{\alpha_1,\nu_1,\widetilde\nu_1},
 \Psi_{\alpha_2,\nu_2,\widetilde\nu_2}\bigr)
 \nonumber\\
 &\quad=\mathbf D_{n,z,\beta}(x)
 \mathcal A_{\mathbb A_{x,y},g_{\mathrm{ad}},z,\beta,
 \boldsymbol\zeta}
 \bigl(\Psi_{\alpha_1,\nu_1,\widetilde\nu_1},
 \Psi_{\alpha_2,\nu_2,\widetilde\nu_2}\bigr)
 \nonumber\\
 &\qquad-\sum_{k=0}^{|\nu_2|+1}
 \binom{1-n}{k}(y-x)^{1-n-k}
 \mathcal A_{\mathbb A_{x,y},g_{\mathrm{ad}},z,\beta,
 \boldsymbol\zeta}
 \bigl(\Psi_{\alpha_1,\nu_1,\widetilde\nu_1},
 \mathbf L_{k-1}\Psi_{\alpha_2,\nu_2,\widetilde\nu_2}\bigr).
 \label{eq:round-annulus-ward}
 \end{align}

 \item For the pair of pants $\mathcal P_{x,y,z}$ with incoming parametrizations $\boldsymbol\zeta=(\zeta_x,\zeta_y,\zeta_z)$,
 \begin{align}
 &\mathcal A_{\mathcal P_{x,y,z},g_{\mathrm{ad}},\boldsymbol\zeta}
 \bigl(\mathbf L_{-n}\Psi_{\alpha_1,\nu_1,\widetilde\nu_1},
 \Psi_{\alpha_2,\nu_2,\widetilde\nu_2},
 \Psi_{\alpha_3,\nu_3,\widetilde\nu_3}\bigr)
 \nonumber\\
 &\quad+\sum_{k=0}^{|\nu_2|+1}
 \binom{1-n}{k}(y-x)^{1-n-k}
 \mathcal A_{\mathcal P_{x,y,z},g_{\mathrm{ad}},\boldsymbol\zeta}
 \bigl(\Psi_{\alpha_1,\nu_1,\widetilde\nu_1},
 \mathbf L_{k-1}\Psi_{\alpha_2,\nu_2,\widetilde\nu_2},
 \Psi_{\alpha_3,\nu_3,\widetilde\nu_3}\bigr)
 \nonumber\\
 &\quad+\sum_{k=0}^{|\nu_3|+1}
 \binom{1-n}{k}(z-x)^{1-n-k}
 \mathcal A_{\mathcal P_{x,y,z},g_{\mathrm{ad}},\boldsymbol\zeta}
 \bigl(\Psi_{\alpha_1,\nu_1,\widetilde\nu_1},
 \Psi_{\alpha_2,\nu_2,\widetilde\nu_2},
 \mathbf L_{k-1}\Psi_{\alpha_3,\nu_3,\widetilde\nu_3}\bigr)=0.
 \label{formula: pair of pants}
 \end{align}
\end{enumerate}
The corresponding anti-holomorphic identities are obtained by replacing $\mathbf L_j$, $\partial_{z_i}$, and $\partial_{z_i}\sigma$ by $\widetilde{\mathbf L}_j$, $\partial_{\bar z_i}$, and $\partial_{\bar z_i}\sigma$, respectively. The identities extend to the analytically continued generalized states wherever the corresponding generalized amplitudes are defined.
\end{proposition}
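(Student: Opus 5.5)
We apply Lemma~\ref{lem:localward} with the meromorphic vector fields $v^{(n)}:=(w-x)^{1-n}\partial_w$ and $\mathrm i v^{(n)}$ on $\widehat{\mathbb C}$. For $n\geq1$ these have a pole only at $x$, which lies inside the removed disk $\mathbb D(x,1)$, so they are holomorphic on a neighborhood of the closed building block. At $\infty$, $v^{(n)}$ is holomorphic because $n\geq1$ gives order $\leq 0$ growth, which is fine for vector fields, since $\partial_w$ has a double zero at $\infty$ in the coordinate $1/w$.

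\textbf{Step 1: the boundary modes.} On the boundary at $x$ we have $\zeta_x(w)=x+w$, so $\zeta_x^*v^{(n)}=w^{1-n}\partial_w$. Hence $[v_x]_m=-\delta_{m,-n}$, and $\mathbf H_{v_x}$ contributes $-\mathbf L_{-n}-\widetilde{\mathbf L}_{-n}$; one of the two is removed by the complex combination below. On the boundary at $y$, expand
\[
(w+y-x)^{1-n}=\sum_{k\geq0}\binom{1-n}{k}(y-x)^{1-n-k}w^k,
\]
so that $[v_y]_{k-1}=-\binom{1-n}{k}(y-x)^{1-n-k}$. Applied to a descendant at level $|\nu_2|$, the modes $\mathbf L_{k-1}$ vanish for $k-1>|\nu_2|$ by \eqref{eq:descendant-eigenvalues} and the highest-weight property, so the sum truncates at $k=|\nu_2|+1$. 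The boundaries at $z$ in (iii) are handled the same way.

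\textbf{Step 2: holomorphic projection.} Form $\frac12(\partial_t^{v}-\mathrm i\partial_t^{\mathrm iv})$ of \eqref{eq:local-ward}. Since $[\mathrm iv]_m=\mathrm i[v]_m$ and $\mathbf H_v$ is real-linear with $\overline{v_n}$ on the $\widetilde{\mathbf L}$ part, this projection keeps exactly $\sum_n[v_j]_n\mathbf L_n$.

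\textbf{Step 3: the marked-point side.} The left side becomes $-\sum_i v(z_i)\partial_{z_i}$ applied to the amplitude, because $\partial_t z_i^t=-v(z_i)$ and the $\bar z_i$ derivatives cancel in the projection. By Remark~\ref{r:gennerallem}, the holomorphic part of $C$ is
\[
\partial_w v+(\partial_w\sigma)v=(1-n)(z_i-x)^{-n}+(z_i-x)^{1-n}\partial\sigma.
\]
Moving everything to one side with the signs of \eqref{eq:local-ward} reproduces $\mathbf D_{n,\boldsymbol z,\boldsymbol\alpha}(x)$ in \eqref{eq:round-differential-operator}. In (iii) there are no marked points, which gives the homogeneous relation \eqref{formula: pair of pants}.

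\textbf{Step 4: the anomaly term (main obstacle).} By Theorem~\ref{thm:metric}, the projected anomaly minus $\frac1{24}\sum[v_j]_0$ equals a sum of Schwarzian contour integrals. For round parametrizations, $\zeta_j^{-1}$ is affine, so $S_{\zeta_j^{-1}}=0$ and the anomaly must cancel. One must check carefully that the $\Re([v_j]_0)/12$ in \eqref{eq:local-ward} projects to $\frac1{24}[v_j]_0$, consistently with the sign and orientation conventions. Extra care is needed because $v$ may have $[v_y]_0\neq0$ when $n=1$, so the bookkeeping of the zero modes must be tracked rather than assumed to vanish. Theorem~\ref{thm:metric} applies since its hypotheses, $v$ meromorphic and holomorphic near $\overline\Sigma$, hold here.

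\textbf{Step 5: extension to generalized states.} Lemma~\ref{lem:localward} is stated for $F\in\mathcal D(\mathcal Q)^{\otimes b}$. To reach descendants of $\Psi_{Q+\mathrm iP}$, first prove the identity for weights $\alpha$ in the region where $\Psi_{\alpha,\nu,\widetilde\nu}$ lies in weighted domains, using the duality $\langle\mathbf L_{-n}F,G\rangle=\langle F,\mathbf L_nG\rangle$ and approximation. Then continue analytically in $\alpha$, using the holomorphy of $\alpha\mapsto\Psi_{\alpha,\nu,\widetilde\nu}$ \cite{Baverez_2024} and of the amplitude pairings. Justifying this differentiability and continuation for non-$L^2$ states is the secondary technical point. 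The anti-holomorphic identities follow from the conjugate projection $\frac12(\partial^v+\mathrm i\partial^{\mathrm iv})$.
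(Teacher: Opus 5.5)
Your proposal is correct and follows essentially the same route as the paper. The paper likewise applies Lemma~\ref{lem:localward} to $\pm(w-x)^{1-n}\partial_w$ and its $\mathrm i$-multiple, reads off $\mathbf L_{-n}$ and the truncated $\mathbf L_{k-1}$ sums from the Laurent expansions at the other round boundaries, kills the anomaly via Theorem~\ref{thm:metric} because the affine charts have vanishing Schwarzian, and extends to generalized states by analytic continuation. One small slip: with your $v$, $[v_y]_0=(n-1)(y-x)^{-n}$ vanishes for $n=1$ and is non-zero for $n\geq2$, not the other way round. This is harmless, since the projected $\Re([v_j]_0)/12$ becomes exactly the $\frac1{24}[v_j]_0$ that Theorem~\ref{thm:metric} already accounts for.
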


\begin{proof}
We first work in the region where the generalized amplitudes admit their probabilistic representation and then use analytic continuation in the boundary weights; see~\cite[Section~11.2]{segalaxiom}. Fix the first boundary component and take $v_x^{(n)}(w):=-(w-x)^{1-n}\partial_w$. Its pullback by $\zeta_x$ is $-u^{1-n}\partial_u$, while near the circle centered at $y$ one has
\begin{equation}\label{eq:round-vector-field-expansion}
 \zeta_y^*v_x^{(n)}
 =-\sum_{k\geq0}\binom{1-n}{k}
 (y-x)^{1-n-k}u^k\partial_u.
\end{equation}
The analogous expansion holds at $z$. By the mode convention~\eqref{eq:matrix-boundary-modes}, the first term gives $\mathbf L_{-n}$ and the term in $u^k\partial_u$ gives $\mathbf L_{k-1}$. Combining the variations generated by $v_x^{(n)}$ and $\mathrm i v_x^{(n)}$ extracts the holomorphic Virasoro action. The marked-point contribution in Lemma~\ref{lem:localward} is precisely the operator~\eqref{eq:round-differential-operator}. Due to \Cref{thm:metric}, the variation of Liouville term is equal to the Schwarzian term which vanishes because the inverse
boundary parametrizations are affine maps. This gives~\eqref{eq:round-disk-ward}, \eqref{eq:round-annulus-ward}, and~\eqref{formula: pair of pants}.

Finally, if $k-1>|\nu_j|$, then $\mathbf L_{k-1}\Psi_{\alpha_j,\nu_j,\widetilde\nu_j}=0$. Thus the Laurent expansions in~\eqref{eq:round-vector-field-expansion} give the finite sums displayed above. The anti-holomorphic identities follow in the same way.
\end{proof}

We next record the form of the recursion adapted to the standard annulus. Let $q\in(0,1)$ and $\mathbb A_q:=\{w\in\mathbb C:q\leq|w|\leq1\}$. We use the two incoming boundary charts $\zeta_0(u):=qu$ and $\zeta_\infty(u):=u^{-1}$, and write $\boldsymbol\zeta^{\mathrm{in}}:=(\zeta_0,\zeta_\infty)$. If $g_{\mathrm{ad}}=e^{\sigma(w,\bar w)}|dw|^2$ and the interior insertion at $z$ has weight $\beta$, set
\begin{align}
 \mathbf D^{\mathbb A}_{n,\mathrm{in},\beta}
 &:=q^n\left(-z^{1-n}\partial_z
 +\left((n-1)z^{-n}-z^{1-n}\partial_z\sigma(z,\bar z)\right)
 \Delta_\beta\right),
 \nonumber\\
 \mathbf D^{\mathbb A}_{n,\mathrm{out},\beta}
 &:=z^{n+1}\partial_z
 +\left((n+1)z^n+z^{n+1}\partial_z\sigma(z,\bar z)\right)
 \Delta_\beta.
 \label{eq:canonical-annulus-operators}
\end{align}

\begin{proposition}[Canonical annulus recursion]\label{cor: moduliiannulusward}
Let $n\geq1$, $\beta\in(0,Q)$, and $\alpha_j=Q+\mathrm iP_j$ with $P_j\in\mathbb R_+$. Then
\begin{align}
 &\mathcal A_{\mathbb A_q,g_{\mathrm{ad}},z,\beta,
 \boldsymbol\zeta^{\mathrm{in}}}
 \bigl(\mathbf L_{-n}\Psi_{\alpha_1,\nu_1,\widetilde\nu_1},
 \Psi_{\alpha_2,\nu_2,\widetilde\nu_2}\bigr)
 \nonumber\\
 &\quad=\mathbf D^{\mathbb A}_{n,\mathrm{in},\beta}
 \mathcal A_{\mathbb A_q,g_{\mathrm{ad}},z,\beta,
 \boldsymbol\zeta^{\mathrm{in}}}
 \bigl(\Psi_{\alpha_1,\nu_1,\widetilde\nu_1},
 \Psi_{\alpha_2,\nu_2,\widetilde\nu_2}\bigr)
 +q^n\mathcal A_{\mathbb A_q,g_{\mathrm{ad}},z,\beta,
 \boldsymbol\zeta^{\mathrm{in}}}
 \bigl(\Psi_{\alpha_1,\nu_1,\widetilde\nu_1},
 \mathbf L_n\Psi_{\alpha_2,\nu_2,\widetilde\nu_2}\bigr),
 \label{eq:canonical-annulus-inner}\\
 &\mathcal A_{\mathbb A_q,g_{\mathrm{ad}},z,\beta,
 \boldsymbol\zeta^{\mathrm{in}}}
 \bigl(\Psi_{\alpha_1,\nu_1,\widetilde\nu_1},
 \mathbf L_{-n}\Psi_{\alpha_2,\nu_2,\widetilde\nu_2}\bigr)
 \nonumber\\
 &\quad=\mathbf D^{\mathbb A}_{n,\mathrm{out},\beta}
 \mathcal A_{\mathbb A_q,g_{\mathrm{ad}},z,\beta,
 \boldsymbol\zeta^{\mathrm{in}}}
 \bigl(\Psi_{\alpha_1,\nu_1,\widetilde\nu_1},
 \Psi_{\alpha_2,\nu_2,\widetilde\nu_2}\bigr)
 +q^n\mathcal A_{\mathbb A_q,g_{\mathrm{ad}},z,\beta,
 \boldsymbol\zeta^{\mathrm{in}}}
 \bigl(\mathbf L_n\Psi_{\alpha_1,\nu_1,\widetilde\nu_1},
 \Psi_{\alpha_2,\nu_2,\widetilde\nu_2}\bigr).
 \label{eq:canonical-annulus-outer}
\end{align}
\end{proposition}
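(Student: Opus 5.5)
The plan is to run the same argument as in Proposition~\ref{prop:matrixcoe}: apply Lemma~\ref{lem:localward} to the standard annulus $\mathbb A_q$ with a meromorphic vector field adapted to one boundary component. For \eqref{eq:canonical-annulus-inner}, take $v(w):=-q^{n}w^{1-n}\partial_w$ on $\widehat{\mathbb C}$. It is holomorphic on a neighborhood of $\mathbb A_q$, since $0$ and $\infty$ lie in the filling disks. For \eqref{eq:canonical-annulus-outer}, take $v(w):=w^{n+1}\partial_w$.

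\textbf{Boundary modes.} The first step is to compute the pullbacks using \eqref{eq:matrix-boundary-modes}.
\begin{itemize}
\item[(a)] \emph{Inner boundary, field $-q^nw^{1-n}\partial_w$.} With $\zeta_0(u)=qu$, we get $\zeta_0^*v=v(qu)/q=-q^{n}q^{1-n}q^{-1}u^{1-n}\partial_u=-u^{1-n}\partial_u$. Thus $[v_0]_{-n}=1$ and all other modes vanish, which produces $\mathbf L_{-n}$ on the first slot.
\item[(b)] \emph{Outer boundary, same field.} With $\zeta_\infty(u)=u^{-1}$ and $\zeta_\infty'(u)=-u^{-2}$, we get $\zeta_\infty^*v=-q^nu^{n-1}/(-u^{-2})\,\partial_u=q^nu^{n+1}\partial_u$. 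This is the mode $[v_\infty]_n=-q^n$, which produces $-q^n\mathbf L_n$ on the second slot.
\item[(c)] \emph{Outer boundary, field $w^{n+1}\partial_w$.} We get $u^{-n-1}/(-u^{-2})=-u^{1-n}$, which gives $\mathbf L_{-n}$ on the second slot.
\item[(d)] \emph{Inner boundary, field $w^{n+1}\partial_w$.} We get $q^{n+1}u^{n+1}/q=q^nu^{n+1}$, which gives $-q^n\mathbf L_n$ on the first slot.
\end{itemize}
All these expansions are single monomials, so no truncation argument is needed beyond what was used before.

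\textbf{Assembling the identity.} Next I combine the variations along $v$ and $\mathrm iv$ as $\frac12(\partial_t^{v}-\mathrm i\partial_t^{\mathrm iv})$. Since $\mathbf H_v=\sum_n(v_n\mathbf L_n+\overline{v_n}\widetilde{\mathbf L}_n)$, this combination isolates the holomorphic modes. The marked-point term $C(v,g_{\rm ad};z)$ is computed by Remark~\ref{r:gennerallem}: its holomorphic half is $\partial_wv+(\partial_w\sigma)v$ at $z$.
\begin{itemize}
\item For $v=-q^nw^{1-n}$, this gives $q^n\bigl((n-1)z^{-n}-z^{1-n}\partial_z\sigma\bigr)$.
\item For $v=w^{n+1}$, this gives $(n+1)z^n+z^{n+1}\partial_z\sigma$.
\item The left-hand side $\partial_t$ at $z^t=f^v_{-t}(z)$ contributes $-v(z)\partial_z$, which yields the derivative parts $q^n(-z^{1-n}\partial_z)$ and $z^{n+1}\partial_z$ respectively, after the sign bookkeeping of \eqref{eq:local-ward}.
\end{itemize}
Rearranging so that the $\mathbf L_{-n}$ term stands alone then produces $\mathbf D^{\mathbb A}_{n,\mathrm{in},\beta}$ (resp. $\mathbf D^{\mathbb A}_{n,\mathrm{out},\beta}$) plus the $q^n\mathbf L_n$ term on the other slot, with the sign flipping because the $-q^n\mathbf L_n$ term moves across the equality.

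\textbf{Anomaly term.} By Theorem~\ref{thm:metric}, the anomaly combination minus $\frac1{24}\sum[v_k]_0$ equals the Schwarzian contour sum. Here $\zeta_0^{-1}(w)=w/q$ is affine and $\zeta_\infty^{-1}(w)=1/w$ is Möbius, so both Schwarzians vanish. The $[v_k]_0$ terms are zero since $n\geq1$. Hence the central contribution drops out entirely.

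\textbf{Extension and expected obstacle.} Finally, as in Proposition~\ref{prop:matrixcoe}, I prove the identity first for states in $\mathcal D(\mathcal Q)^{\otimes2}$ where the probabilistic representation holds, and extend to descendant generalized states $\Psi_{\alpha_j,\nu_j,\widetilde\nu_j}$ by analytic continuation in the weights. The main obstacle I expect is the sign and orientation bookkeeping at the outer boundary, where $\zeta_\infty$ reverses orientation, together with checking that the $\Re([v]_0)$ term in Lemma~\ref{lem:localward} matches the $[v_k]_0$ normalization of Theorem~\ref{thm:metric}. I would verify this first on the case $n=0$ (dilation), where $\mathbf H$ must appear consistently with \eqref{eq:round-annulus-semigroup}.
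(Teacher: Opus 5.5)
Your proposal is correct and follows the paper's proof essentially step for step. You use the same vector fields $-q^nw^{1-n}\partial_w$ and $w^{n+1}\partial_w$, and the same single-monomial pullbacks, which give $\mathbf L_{-n}$ on one slot and the coefficient $-q^n$ on $\mathbf L_n$ on the other. You extract the holomorphic part via Lemma~\ref{lem:localward}, and the central term vanishes because both inverse charts are M\"obius.

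Your explicit checks add bookkeeping the paper leaves implicit. You verify that the marked-point term reproduces $\mathbf D^{\mathbb A}_{n,\mathrm{in},\beta}$ and $\mathbf D^{\mathbb A}_{n,\mathrm{out},\beta}$. You also verify that the $\Re([v_k]_0)/12$ term in the lemma matches the $[v_k]_0/24$ normalization after the holomorphic combination.
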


\begin{proof}
For the inner boundary, take $v_{\mathrm{in}}^{(n)}(w):=-q^nw^{1-n}\partial_w$; for the outer boundary, take $v_{\mathrm{out}}^{(n)}(w):=w^{n+1}\partial_w$. Direct computation gives
\begin{align*}
 \zeta_0^*v_{\mathrm{in}}^{(n)}&=-u^{1-n}\partial_u,
 &\zeta_\infty^*v_{\mathrm{in}}^{(n)}&=q^nu^{n+1}\partial_u,\\
 \zeta_0^*v_{\mathrm{out}}^{(n)}&=q^nu^{n+1}\partial_u,
 &\zeta_\infty^*v_{\mathrm{out}}^{(n)}&=-u^{1-n}\partial_u.
\end{align*}
In the convention~\eqref{eq:matrix-boundary-modes}, the coefficient of the positive mode on the opposite boundary is therefore $-q^n$. Applying Lemma~\ref{lem:localward}, extracting the holomorphic part as above, and moving this term to the right-hand side gives~\eqref{eq:canonical-annulus-inner} and~\eqref{eq:canonical-annulus-outer}. Also both inverse boundary charts are M\"obius transformations, hence
their Schwarzian derivatives vanish.
\end{proof}

We finally state the plumbed-annulus coefficient in the operator convention. Let $q\in\mathbb C$ with $0<|q|<1$ and set $\mathbb A_q:=\{w\in\mathbb C:|q|\leq|w|\leq1\}$, equipped with $g_{\mathbb A}:=|dw|^2/|w|^2$ and boundary parametrizations $\boldsymbol\zeta_q(u):=(qu,u)$. The inner boundary is incoming and the outer boundary is outgoing. We therefore view the amplitude as an operator on $\mathcal H$ and write
\begin{equation}\label{eq:plumbed-annulus-matrix-coefficient}
 \mathcal M_q(\boldsymbol\nu,\widetilde{\boldsymbol\nu})
 :=\left\langle
 \mathcal A_{\mathbb A_q,g_{\mathbb A},z,\beta,
 \boldsymbol\zeta_q}
 \Psi_{Q+\mathrm iP_1,\nu_1,\widetilde\nu_1},
 \Psi_{Q+\mathrm iP_2,\nu_2,\widetilde\nu_2}
 \right\rangle,
\end{equation}
where $\boldsymbol\nu=(\nu_1,\nu_2)$ and $\widetilde{\boldsymbol\nu}=(\widetilde\nu_1,\widetilde\nu_2)$.

\begin{corollary}[Plumbed annulus {\cite[Corollary~11.5]{segalaxiom}}]\label{annulusDOZZ}
Assume $\beta\in(0,Q)$, $P_1,P_2\in\mathbb R_+$, $|z|<1$, and $|q/z|<1$. Set $\widehat{\boldsymbol\alpha}:=(\beta,Q+\mathrm iP_1,Q+\mathrm iP_2)$ and $\widetilde{\boldsymbol\alpha}:=(\beta,Q-\mathrm iP_1,Q+\mathrm iP_2)$. Then
\begin{align}
 \mathcal M_q(\boldsymbol\nu,\widetilde{\boldsymbol\nu})
 ={}&\frac{\pi}{\sqrt2\,e}\,
 C_{\gamma,\mu}^{\mathrm{DOZZ}}(\widetilde{\boldsymbol\alpha})
 w_{\mathbb A}(\boldsymbol\Delta_{\widehat{\boldsymbol\alpha}},
 \boldsymbol\nu)
 \overline{w_{\mathbb A}(\boldsymbol\Delta_{\widehat{\boldsymbol\alpha}},
 \widetilde{\boldsymbol\nu})}
 |q|^{-c_{\rm L}/12+2\Delta_{Q+\mathrm iP_1}}
 \nonumber\\
 &\times |z|^{2\Delta_{Q+\mathrm iP_2}-2\Delta_{Q+\mathrm iP_1}}
 \left(\frac qz\right)^{|\nu_1|}
 \left(\frac{\bar q}{\bar z}\right)^{|\widetilde\nu_1|}
 z^{|\nu_2|}\bar z^{|\widetilde\nu_2|}.
 \label{claimannulus}
\end{align}
The coefficient $w_{\mathbb A}(\boldsymbol\Delta_{\widehat{\boldsymbol\alpha}},\boldsymbol\nu)$ is a polynomial in the three conformal weights and is normalized by $w_{\mathbb A}(\boldsymbol\Delta_{\widehat{\boldsymbol\alpha}},(\emptyset,\emptyset))=1$.
\end{corollary}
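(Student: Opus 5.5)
The plan is to reduce \eqref{claimannulus} to the primary coefficient with the canonical annulus recursion, Proposition~\ref{cor: moduliiannulusward}, and then to read off the $q$ and $z$ dependence from the homogeneity of the amplitude under dilations.

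\textbf{Step 1: from operator to bilinear form.} Since $\mathcal M_q$ is a Hermitian pairing, an outgoing boundary can be converted into an incoming one. I would reflect the outer boundary by $u\mapsto u^{-1}$, which turns $\boldsymbol\zeta_q$ into a pair of incoming charts of the type $\boldsymbol\zeta^{\rm in}$. The conjugation in $\langle\cdot,\cdot\rangle_{\mathcal H}$ turns $\Psi_{Q+\mathrm iP_2}$ into its reflected counterpart, and that is the source of the weight $Q-\mathrm iP_1$ in $\widetilde{\boldsymbol\alpha}$, up to relabelling. After this step, \eqref{eq:canonical-annulus-inner} and \eqref{eq:canonical-annulus-outer} apply, with $q$ allowed complex by absorbing $\arg q$ into a rotation of the inner chart.

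\textbf{Step 2: recursion to the primary coefficient.} I would induct on $|\nu_1|+|\nu_2|+|\widetilde\nu_1|+|\widetilde\nu_2|$. Peel off the first mode $\mathbf L_{-n}$ of $\nu_1$ (or of $\nu_2$) using the recursion. This produces two kinds of terms.
\begin{itemize}
\item A differential term $\mathbf D^{\mathbb A}_{n,\cdot,\beta}$ acting on a lower amplitude.
\item A term with $\mathbf L_n$ acting on the opposite state. Commuting $\mathbf L_n$ through $\mathbf L_{-\nu}$ with \eqref{eq:virasoro-commutator} and \eqref{eq:primary-highest-weight} gives a finite combination of descendants of lower level, with polynomial coefficients in $c_{\rm L}$ and $\Delta$.
\end{itemize}
The base case is the primary coefficient. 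I would take this from the sewing of the DOZZ three-point function (Proposition~\ref{prop: gluing} applied to a sphere with three insertions, cut along two circles) together with the semigroup normalization \eqref{eq:round-annulus-semigroup}. That yields the prefactor $\frac{\pi}{\sqrt2 e}C^{\rm DOZZ}$, the factor $|q|^{-c_{\rm L}/12+2\Delta_1}$ and the power $|z|^{2\Delta_2-2\Delta_1}$.

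\textbf{Step 3: factorization and the $z,q$ powers.} By the inductive hypothesis, the lower amplitude is a constant times $z^a\bar z^{b}|z|^{2(\Delta_2-\Delta_1)}$, with $g_{\mathbb A}$ giving $\partial_z\sigma=-1/z$. So $\partial_z$ acts by multiplication by (a polynomial in the $\Delta$'s)$/z$, and $\mathbf D^{\mathbb A}$ only shifts the power of $z$ by $-n$ (inner, with factor $q^n$) or $+n$ (outer). This gives the monomials $(q/z)^{|\nu_1|}z^{|\nu_2|}$ multiplied by polynomials in $\Delta_\beta,\Delta_1,\Delta_2$. The holomorphic recursion never touches $\widetilde\nu$, and the anti-holomorphic one never touches $\nu$. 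The coefficient therefore splits as $w_{\mathbb A}(\boldsymbol\nu)\overline{w_{\mathbb A}(\widetilde{\boldsymbol\nu})}$. The bar is justified by the physical spectrum: there $\Delta$ is real, and $\bar\partial$ mirrors $\partial$ with the same polynomial.

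\textbf{Main obstacle.} The delicate step is the bookkeeping of conventions: the sign of $q^n$ in \eqref{eq:canonical-annulus-inner}, the conjugation from the inner product, and $\arg q$. Any slip here changes $q/z$ into its conjugate or flips $P_1$. Justifying the recursion on generalized states $\Psi_{Q+\mathrm iP}$ also needs the analytic continuation already invoked in Proposition~\ref{prop:matrixcoe}. Alternatively, one can simply cite \cite[Corollary~11.5]{segalaxiom} for the whole statement, and use the argument above only as a geometric rederivation.
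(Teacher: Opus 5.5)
Your outline is sound, but it takes a genuinely different route from the paper.

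\textbf{The paper's route.} The paper rescales by $w\mapsto w/z$ and writes the plumbed annulus as a round pair of pants sewn to two round annuli. The monomials in $q$, $z$ and their conjugates then come structurally from the eigenvalue relations \eqref{eq:descendant-eigenvalues} under the annulus semigroup. The polynomial $w_{\mathbb A}$ is the round pair-of-pants descendant coefficient, so polynomiality and the holomorphic/anti-holomorphic split are inherited from Proposition~\ref{prop:matrixcoe}(iii) and Theorem~\ref{thm: ward identity}. Only the value of the primary coefficient has to be supplied.

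\textbf{Your route.} You stay on the annulus and run Proposition~\ref{cor: moduliiannulusward} as a two-sided recursion. Since $\partial_z\sigma=-1/z$ for $g_{\mathbb A}$, the two operators reduce to $q^n(-z^{1-n}\partial_z+nz^{-n}\Delta_\beta)$ and $z^{n+1}\partial_z+nz^{n}\Delta_\beta$. On $|z|^{2(\Delta_2-\Delta_1)}z^a\bar z^b$ they act by a scalar affine in the weights, and they shift $a$ by $-n$ and $+n$ respectively. The $\mathbf L_n$ term on the opposite boundary carries exactly the compensating $q^n$. So the monomial $(q/z)^{|\nu_1|}z^{|\nu_2|}$ and the polynomial are produced together.

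\textbf{Comparison.} Your route gives an explicit recursion for $w_{\mathbb A}$ that does not pass through the pair-of-pants theorem. It is also exactly the mechanism reused in Corollary~\ref{cor:annulus-shapovalov-limit}. The price is that your base case needs the full $z$-profile of the primary amplitude: DOZZ sewing with $\alpha_j<Q$, then analytic continuation to $Q+\mathrm iP_j$. The paper gets the $q,z$ dependence for free and needs only one number.

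\textbf{Correction to Step~1.} $\mathbf J$ acts on the outgoing (outer) state, so it produces $Q-\mathrm iP_2$, not $Q-\mathrm iP_1$. Matching the stated $\widetilde{\boldsymbol\alpha}=(\beta,Q-\mathrm iP_1,Q+\mathrm iP_2)$ is therefore a matter of conjugation and reflection conventions in the primary coefficient, not a relabelling. This does not affect the descendant structure, because $\Delta_{2Q-\alpha}=\Delta_\alpha$ and the weights are real on the spectrum. It does have to be settled in your base case.
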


\begin{proof}
We give an alternative proof here. The orientation statement follows directly from Definition~\ref{def:orientation}: $u\mapsto qu$ has the orientation opposite to the boundary orientation of the inner circle, whereas $u\mapsto u$ agrees with the boundary orientation of the outer circle. After the rescaling $w\mapsto w/z$, the annulus is obtained by sewing a round pair of pants to two annuli. The eigenvalue relations~\eqref{eq:descendant-eigenvalues} give the powers of $q$, $z$, and their complex conjugates, while the primary three-point coefficient gives the DOZZ factor and the constant $\pi/(\sqrt2\,e)$.
\end{proof}

The following corollary says the zero-weight limit of the annulus coefficient can use to imply the Shapovalov form in \Cref{def:shapovalov-form}.
For $q\in(0,1)$, $q<|z|<1$, $\beta\in(0,Q)$, and
$\alpha=Q+\mathrm iP$ with $P>0$, define
\begin{equation}
\label{eq:annulus-holomorphic-matrix-coefficient}
 \mathcal M_{q,\beta}(\nu_1,\nu_2)
 :=
 \left\langle
 \mathcal A_{\mathbb A_q,g_{\mathbb A},z,\beta,
 \boldsymbol\zeta_q}
 \Psi_{\alpha,\nu_1,\emptyset},
 \Psi_{\alpha,\nu_2,\emptyset}
 \right\rangle.
\end{equation}

\begin{corollary}[Shapovalov limit of annulus coefficients]
\label{cor:annulus-shapovalov-limit}
Let $\nu_1,\nu_2\in\mathcal T_n$. Then
\begin{equation}
\label{eq:annulus-shapovalov-limit}
 \lim_{\beta\downarrow0}
 \frac{\mathcal M_{q,\beta}(\nu_1,\nu_2)}
 {q^n\mathcal M_{q,\beta}(\emptyset,\emptyset)}
 =
 F_\alpha^n(\nu_1,\nu_2).
\end{equation}
For $0<\beta<Q$, the denominator is non-zero by the primary case of
Corollary~\ref{annulusDOZZ}.
\end{corollary}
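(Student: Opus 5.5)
We prove the identity by induction on the level $n$, using the canonical annulus recursion of Proposition~\ref{cor: moduliiannulusward}, transported to the plumbed-annulus convention. We then pass to the limit $\beta\downarrow0$ only at the very end.

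\textbf{Transport to the operator convention.} The plumbed annulus $(\mathbb A_q,\boldsymbol\zeta_q)$ with outgoing outer boundary is related to the two-incoming annulus $(\mathbb A_q,\boldsymbol\zeta^{\rm in})$ by composing the outer chart with $u\mapsto u^{-1}$. Under the pairing $\langle\cdot,\cdot\rangle_{\mathcal H}$, a holomorphic mode $\mathbf L_{-n}$ placed on the outgoing state corresponds to $\mathbf L_n$ acting on the other side. Concretely, we rewrite \eqref{eq:canonical-annulus-inner} as an identity for $\mathcal M_{q,\beta}$. The inner recursion reads
\[
\mathcal M_{q,\beta}(\mathbf L_{-k}\cdot,\cdot)=\mathbf D^{\mathbb A}_{k,{\rm in},\beta}\mathcal M_{q,\beta}(\cdot,\cdot)+q^k\mathcal M_{q,\beta}(\cdot,\mathbf L_{k}^{\dagger}\text{-shifted}\ \cdot).
\]
In the outgoing slot this becomes $q^k$ times the matrix element in which $\mathbf L_{-k}$ has been moved onto the outgoing state, i.e.\ $\mathbf L_k$ acts on $\Psi_{\alpha,\nu_2}$ through the adjoint pairing. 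Here the definition of the negative modes by duality is used. The anti-holomorphic labels are empty, so only the holomorphic sector enters.

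\textbf{Key estimate: the differential term vanishes as $\beta\downarrow0$.} Write $\nu_1=(k,\nu_1')$, so that $\Psi_{\alpha,\nu_1}=\mathbf L_{-k}\Psi_{\alpha,\nu_1'}$. The operator $\mathbf D^{\mathbb A}_{k,{\rm in},\beta}$ consists of a $\partial_z$ term and a term proportional to $\Delta_\beta$. Since $\Delta_\beta\to0$, the second term vanishes in the limit. For the first term, I claim that $\partial_z$ of the normalized coefficient tends to $0$. By Corollary~\ref{annulusDOZZ}, each normalized coefficient equals a polynomial $w_{\mathbb A}(\boldsymbol\Delta,\cdot)$ times explicit powers of $z$:
\[
(q/z)^{|\nu_1|}\,z^{|\nu_2|}\,|z|^{2\Delta_{2}-2\Delta_1},
\]
with $\Delta_1=\Delta_2$ in our equal-momentum setting. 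Because $|\nu_1|=|\nu_2|=n$, the total $z$-dependence of the level-$n$ coefficient is $z^{0}$, so the $\partial_z$ term vanishes identically at level $n$. At intermediate levels, however, $|\nu_1'|\neq|\nu_2|$ in general. There the derivative produces factors $(|\nu_2|-|\nu_1'|)$ times $w_{\mathbb A}$, and I must show that these $w_{\mathbb A}$ vanish as $\beta\to0$ when the levels differ.

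\textbf{Level orthogonality.} This is the main obstacle. I would use the $\mathbf L_0$ version of the local Ward identity, with $v=-w\partial_w$. It gives
\[
(|\nu_2|-|\nu_1|)\,\mathcal M=(\text{terms with }z\partial_z\text{ and }\Delta_\beta).
\]
Combining this with the $z$-power structure from Corollary~\ref{annulusDOZZ}, we obtain $w_{\mathbb A}(\cdot)=O(\Delta_\beta)$ whenever the levels differ. The reason is that at $\beta=0$ the inserted field is the identity, and the annulus becomes $q^{\mathbf L_0}$, which is diagonal in level. I would make this rigorous through the polynomial dependence of $w_{\mathbb A}$ on $\Delta_\beta$ and the recursion itself. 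At $\Delta_\beta=0$ the recursion is exactly the Shapovalov defining relation, with $q$-powers only.

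\textbf{Conclusion.} Given level orthogonality, the limit of the normalized ratio satisfies
\[
\lim\mathcal M(\mathbf L_{-k}\nu_1',\nu_2)/q^{n}=\lim\mathcal M(\nu_1',\mathbf L_k\nu_2)/q^{n-k},
\]
with $\mathbf L_k\Psi_{\alpha,\nu_2}$ expanded in descendants of level $n-k$. This is exactly the contravariance $\langle\mathbf L_{-k}u,v\rangle^{\rm Sh}=\langle u,\mathbf L_kv\rangle^{\rm Sh}$, and the base case is normalized to $1$. By the uniqueness in Definition~\ref{def:shapovalov-form}, the limit equals $F^n_\alpha(\nu_1,\nu_2)$. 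The denominator is nonzero for $\beta\in(0,Q)$ by Corollary~\ref{annulusDOZZ}; its possible decay as $\beta\to0$ cancels in the ratio.
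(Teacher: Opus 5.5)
Your overall route is the paper's. You use the polynomial dependence of $w_{\mathbb A}$ on $\Delta_\beta$ from Corollary~\ref{annulusDOZZ}, set $\Delta_\beta=0$ in the canonical annulus recursion, obtain contravariance, and conclude by uniqueness of the Shapovalov form. The gap is the step you flag yourself, level orthogonality, and the justification you give for it does not work.

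\emph{The dilation identity carries no information beyond the $z$-power law.} For $v=-w\partial_w$ and $g_{\mathbb A}=|dw|^2/|w|^2$, the marked-point coefficient is $\partial_z(-z)+(\partial_z\sigma)(-z)=-1+1=0$, so the $\Delta_\beta$-term vanishes identically. The identity then reads $z\partial_z\mathcal M=(|\nu_2|-|\nu_1|)\mathcal M$, which is exactly what produces the factor $z^{|\nu_2|-|\nu_1|}$ in \eqref{claimannulus}. Combining the two gives $0=0$, not $w_{\mathbb A}=O(\Delta_\beta)$.

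\emph{The other two justifications also fail.} The heuristic ``at $\beta=0$ the annulus is $q^{\mathbf L_0}$'' cannot be used: with $P_1=P_2$ the DOZZ factor has a pole at $\beta=0$, so only the normalized polynomial survives, and it still carries $z^{|\nu_2|-|\nu_1|}$. The claim that at $\Delta_\beta=0$ the recursion ``is exactly the Shapovalov relation'' is circular. After stripping $\mathbf L_{-k}$ from a level-$n$ vector, $u$ has level $n-k\neq n=|v|$. The surviving $\partial_z$-term then contributes $(|u|-|v|)\,\mathsf B(u,v)$, which you may drop only once orthogonality is known. The paper's own proof also attributes orthogonality to the dilation field in a single sentence; the argument below is what actually makes that step work.

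\textbf{How to close the gap.} Run an induction that uses \emph{both} recursions. For homogeneous $u,v$, let $\mathsf B(u,v)$ be $\mathcal M_{q,\beta}(u,v)$ divided by $q^{|u|}z^{|v|-|u|}\mathcal M_{q,\beta}(\emptyset,\emptyset)$, evaluated at $\Delta_\beta=0$; this is legitimate by polynomiality. Then \eqref{eq:canonical-annulus-inner} and \eqref{eq:canonical-annulus-outer} become
$\mathsf B(\mathbf L_{-k}u,v)=(|u|-|v|)\mathsf B(u,v)+\mathsf B(u,\mathbf L_kv)$ and $\mathsf B(u,\mathbf L_{-k}v)=(|v|-|u|)\mathsf B(u,v)+\mathsf B(\mathbf L_ku,v)$.

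Prove $\mathsf B(u,v)=0$ for $|u|\neq|v|$ by induction on $|u|+|v|$. Reduce the first slot with the first identity; if $u=v_\alpha$, reduce the second slot with the second identity.
\begin{itemize}
\item The term $\mathsf B(u,\mathbf L_kv)$, resp.\ $\mathsf B(\mathbf L_ku,v)$, is an unequal-level coefficient of smaller total level, so it vanishes by induction.
\item The term $(|u|-|v|)\mathsf B(u,v)$ is either covered by induction, or it has coefficient zero, and the latter happens precisely when the levels have become equal.
\end{itemize}
This is the idea your argument is missing. The coefficient of the surviving $\partial_z$-term is the level difference, so it vanishes exactly in the one case where the induction hypothesis is unavailable.

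Orthogonality then turns the first identity into $\mathsf B(\mathbf L_{-k}u,v)=\mathsf B(u,\mathbf L_kv)$, and your concluding paragraph goes through. This argument needs the outer recursion \eqref{eq:canonical-annulus-outer} to treat $u=v_\alpha$, and your proposal never uses it.
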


\begin{proof}
Apply Corollary~\ref{annulusDOZZ} with $P_1=P_2=P$ and
$\widetilde\nu_1=\widetilde\nu_2=\emptyset$. Since
$|\nu_1|=|\nu_2|=n$, the powers of $z$ cancel and
\begin{equation}
\label{eq:normalized-annulus-polynomial}
 \frac{\mathcal M_{q,\beta}(\nu_1,\nu_2)}
 {q^n\mathcal M_{q,\beta}(\emptyset,\emptyset)}
 =
 w_{\mathbb A}
 \bigl(
 (\Delta_\beta,\Delta_\alpha,\Delta_\alpha),
 (\nu_1,\nu_2)
 \bigr).
\end{equation}
The right-hand side is a polynomial in the conformal weights.
Since $\Delta_\beta\to0$ as $\beta\downarrow0$, the limit exists and
is obtained by setting the first conformal weight equal to zero.

To identify the resulting polynomial, define a bilinear form
$\mathsf B_\alpha$ on the PBW basis of
$M(c_{\rm L},\Delta_\alpha)$ by
\begin{equation}
\label{eq:identity-channel-form}
 \mathsf B_\alpha
 \bigl(
 \mathbf L_{-\nu_1}v_\alpha,
 \mathbf L_{-\nu_2}v_\alpha
 \bigr)
 :=
 w_{\mathbb A}
 \bigl(
 (0,\Delta_\alpha,\Delta_\alpha),
 (\nu_1,\nu_2)
 \bigr),
\end{equation}
and extend it bilinearly. The annular Ward recursion is polynomial
in $\Delta_\beta$, hence it extends algebraically to
$\Delta_\beta=0$. The same recursion for the dilation vector field
gives $\mathsf B_\alpha(u,v)=0$ when $u$ and $v$ have different
levels.

Let $u$ have level $m$ and let $v$ have level $m+k$. Setting
$\Delta_\beta=0$ in
Proposition~\ref{cor: moduliiannulusward}, the differential term is
proportional to the unequal-level coefficient
$\mathsf B_\alpha(u,v)$ and therefore vanishes. The remaining terms
give
$\mathsf B_\alpha(\mathbf L_{-k}u,v)
=\mathsf B_\alpha(u,\mathbf L_kv)$.
Moreover, the normalization
$w_{\mathbb A}((0,\Delta_\alpha,\Delta_\alpha),
(\emptyset,\emptyset))=1$ gives
$\mathsf B_\alpha(v_\alpha,v_\alpha)=1$.
By uniqueness of the Shapovalov form,
$\mathsf B_\alpha=\langle\cdot,\cdot\rangle_\alpha^{\rm Sh}$.
Combining this identification with
\eqref{eq:normalized-annulus-polynomial}
proves~\eqref{eq:annulus-shapovalov-limit}.
\end{proof}

\subsection{Analytic boundary case}\label{sec: pair of pants}\par

We now pass from round boundary parametrizations to general analytic ones. Let $J\subset\{1,2,3\}$ label the incoming boundary components and let $I:=\{1,2,3\}\setminus J$ label the interior marked points. Thus $|J|=b\in\{1,2,3\}$, the boundary parametrizations are $\boldsymbol\zeta=(\zeta_j)_{j\in J}$, and the interior marked points are $\mathbf x^{\mathrm{mp}}=(x_i)_{i\in I}$. We fix an admissible metric $g_{\mathrm{ad}}$ on the building block.

\paragraph{Model form.}
We use the model form introduced in \cite[Section~11.1]{segalaxiom}; see also \cite[Section~2.7]{reviewsegal}.Glue every boundary component by a copy of $\overline{\mathbb D}$ and denote the resulting sphere by $\widehat{\mathcal S}$. After choosing a biholomorphism $\Psi:\widehat{\mathcal S}\to\widehat{\mathbb C}$, each boundary parametrization extends to a univalent map $\psi_j$ on a neighborhood of $\overline{\mathbb D}$, with $\psi_j(0)=x_j$. At each interior marked point we also choose a univalent coordinate $\psi_i$ centered at $x_i$, small enough that the three domains $\psi_k(\mathbb D)$ have pairwise disjoint closures. The resulting triple $\boldsymbol\psi=(\psi_1,\psi_2,\psi_3)$ will be kept fixed.

In this notation, we write $\mathcal P_{\boldsymbol\psi}:=\widehat{\mathbb C}\setminus\bigcup_{j=1}^3\psi_j(\mathbb D)$, $\mathbb A_{\boldsymbol\psi}:=\widehat{\mathbb C}\setminus\bigl(\psi_2(\mathbb D)\cup\psi_3(\mathbb D)\bigr)$, and $\mathbb D^{\mathsf c}_{\psi}:=\widehat{\mathbb C}\setminus\psi_3(\mathbb D)$, where $\psi:=\psi_3$. Thus the annulus has the interior marked point $x_1$, while the disk has the interior marked points $(x_1,x_2)$. All boundary parametrizations in this subsection are incoming.

\begin{theorem}[Factorization of descendant coefficients
{\cite{segalaxiom}}]\label{thm: ward identity}
	For  $\mathrm{Re}( \alpha_1)+\mathrm{Re}( \alpha_2)+\mathrm{Re}( \alpha_3)>2Q$ and  $\mathrm{Re}( \alpha_j)\leq Q$ for $1\leq j\leq 3$, $\boldsymbol\Delta_{\boldsymbol\alpha} =(\Delta_{ \alpha_1},\Delta_{  \alpha_2},\Delta_{ \alpha_3})$ and tripe Young diagrams $\bs\nu=(\nu_1,\nu_2,\nu_3)\in\mathcal{T}^3$, $\bs{\tilde\nu}=(\tilde\nu_1,\tilde\nu_2,\tilde\nu_3)\in\mathcal{T}^3$. We take $(\psi_1,\psi_2,\psi_3)\in \mathrm{BiHol}_\delta(\D)^3$ with $\psi_j(0)= x_j$, the incoming boundary parameterization is $\boldsymbol\psi$. We have
	\begin{equation}\label{e:holormorphic fact}
		\frac{\mathcal{A}_{ \mathcal{P}_{\boldsymbol{\psi}} , g_{\mathrm{ad}}, \boldsymbol\psi}(\Psi_{  \alpha_1,\nu_1,\tilde\nu_1}, \Psi_{  \alpha_2,\nu_2,\tilde\nu_2}, \Psi_{  \alpha_3,\nu_3,\tilde\nu_3})}{\mathcal{A}_{  \mathcal{P}_{\boldsymbol{\psi}} , g_{\mathrm{ad}}, \boldsymbol\psi}(\Psi_{  \alpha_1}, \Psi_{  \alpha_2}, \Psi_{  \alpha_3})}
        =
        \omega(  \mathcal{P}_{\boldsymbol{\psi}} , \boldsymbol\Delta_{\boldsymbol\alpha} ,\bs\nu)\overline{\omega(  \mathcal{P}_{\boldsymbol{\psi}},\boldsymbol\Delta_{\boldsymbol\alpha},\bs{\tilde\nu})}
	\end{equation}
where $\omega(  \mathcal{P}_{\boldsymbol{\psi}}, \boldsymbol\Delta_{\boldsymbol\alpha} ,\bs\nu)$ is a polynomial in $\boldsymbol\Delta_{\boldsymbol\alpha}$	and $\omega(  \mathcal{P}_{\boldsymbol{\psi}}, \boldsymbol\Delta_{\boldsymbol\alpha},\emptyset)=1$.
\end{theorem}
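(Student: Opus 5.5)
The plan is to derive the factorization directly from the local Ward identity, Lemma~\ref{lem:localward}, applied on the pair of pants $\mathcal P_{\boldsymbol\psi}$ with all three boundaries incoming. The argument is an induction on the total level $N:=\sum_j(|\nu_j|+|\widetilde\nu_j|)$. We first work in the probabilistic region where the amplitudes are given by \eqref{eq:liouville-amplitude-prob}, and we extend to the stated range of $\boldsymbol\alpha$ at the end by analytic continuation.

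\textbf{Step 1: the Ward identity for $\mathcal P_{\boldsymbol\psi}$.} To remove a mode $\mathbf L_{-n}$ from the $j$-th slot, choose the meromorphic vector field $v=-(w-x_j)^{1-n}\partial_w$. This field is holomorphic near $\overline{\mathcal P_{\boldsymbol\psi}}$ (for $n\ge1$) and has at most a pole at $\infty$ of order $3$ for $n=1$. We may first normalize by a M\"obius map so that $\infty\in\psi_j(\mathbb D)$ for some $j$ if needed; the Schwarzian is M\"obius invariant, so nothing changes.
\begin{itemize}
\item Pulling back by $\psi_k$ gives Laurent coefficients $[v_k]_m$. These are explicit rational functions of the Taylor coefficients of $\psi_k$ and of the centers $x_k$.
\item At the slot $j$, the pullback $\psi_j^*v$ has modes $[v_j]_m$ with $m\ge -n$, and $[v_j]_{-n}$ equals the nonzero constant $\psi_j'(0)^{-n}$.
\item At the other slots $k\neq j$, only the modes $m\ge -1$ appear.
\item There are no interior marked points, so the differential term is absent.
\end{itemize}
Combining the $v$ and $\mathrm iv$ variations as in Proposition~\ref{prop:matrixcoe} isolates the holomorphic action $\sum_k\sum_m[v_k]_m\mathbf L_m^{(k)}$. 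Since no marked points move, the left side of \eqref{eq:local-ward} vanishes. By Theorem~\ref{thm:metric}, the anomaly contribution is the explicit scalar $\frac{c_{\rm L}\mathrm i}{24\pi}\sum_k\oint v\,S_{\psi_k^{-1}}\,dz$, which depends only on $\boldsymbol\psi$. Applying the resulting identity to the state $\mathbf L_{-\nu'}\widetilde{\mathbf L}_{-\widetilde\nu}\Psi_\alpha$ in slot $j$, where $\nu_j=(n,\nu')$ and the tensor factors are descendants, we obtain
\[
\psi_j'(0)^{-n}\mathcal A(\dots,\mathbf L_{-\nu_j}\dots)=-\sum_{(k,m)\neq(j,-n)}[v_k]_m\,\mathcal A(\dots,\mathbf L_m^{(k)}\dots)+(\text{Schwarzian scalar})\,\mathcal A(\dots).
\]

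\textbf{Step 2: the recursion terminates.} Every term on the right has strictly smaller total holomorphic level.
\begin{itemize}
\item For $m\ge0$ at any slot, the operator $\mathbf L_m$ acting on a descendant is reduced by the Virasoro relations \eqref{eq:virasoro-commutator} together with \eqref{eq:primary-highest-weight}. This produces polynomial combinations, in $\Delta_{\alpha_k}$ and $c_{\rm L}$, of descendants of level $|\nu_k|-m$. It vanishes if $m>|\nu_k|$, so each sum is finite.
\item The term $\mathbf L_{-1}$ at a slot $k\neq j$ raises level, and so do the modes $-n<m<0$ at slot $j$. These must be handled separately.
\end{itemize}
To deal with them, choose the vector field $v=-(w-x_j)^{1-n}\prod_{k\neq j}(w-x_k)^{2}\,\partial_w$, adjusted by subtracting lower powers of $(w-x_j)$. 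This makes $v$ vanish to second order at each $x_k$ with $k\neq j$, so only modes $m\ge1$ appear at those slots. The price is a pole at $\infty$, which we place inside one of the discs by the M\"obius normalization above. The order of the pole at $x_j$ is still $n$. The triangular structure at slot $j$ is dealt with by a secondary induction on the PBW order. The total order is ordered by level, then reverse-lexicographically, so that $\mathbf L_{-m}$ with $m<n$ applied to $\mathbf L_{-\nu'}$ is rewritten in PBW form with leading term of smaller order.

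\textbf{Step 3: conclusion.} Induction on the level expresses the ratio as a polynomial in $\boldsymbol\Delta_{\boldsymbol\alpha}$. Its coefficients depend on $\boldsymbol\psi$ and $c_{\rm L}$ (the latter is fixed). It is normalized to $1$ at $\boldsymbol\nu=\boldsymbol\emptyset$. The holomorphic recursion acts only on the $\mathbf L$-indices and commutes with $\widetilde{\mathbf L}$, since $[\mathbf L_n,\widetilde{\mathbf L}_m]=0$. It therefore reduces $\boldsymbol\nu$ to $\boldsymbol\emptyset$ while keeping $\widetilde{\boldsymbol\nu}$ fixed, and produces the factor $\omega(\boldsymbol\nu)$ times the ratio for $(\boldsymbol\emptyset,\widetilde{\boldsymbol\nu})$.

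The anti-holomorphic recursion is the complex conjugate. The coefficients become $\overline{[v_k]_m}$, the Schwarzian term is conjugated, and the weights $\Delta_{\alpha_k}$ are real on the physical spectrum. This yields $\overline{\omega(\widetilde{\boldsymbol\nu})}$. The identity then extends off the spectrum by holomorphy of both sides in $\boldsymbol\alpha$: each side is analytic in $\boldsymbol\alpha$ (the right-hand side is $\omega(\boldsymbol\Delta)$ times the conjugate polynomial continued holomorphically). This uses \cite[Theorem~1.2]{Baverez_2024} for the analyticity of the states.

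The main obstacle is Step 2: making sure the raising modes appearing at the other slots and at intermediate orders at slot $j$ do not spoil termination. The first attempt is the vanishing-order trick for the vector field, with the M\"obius normalization to control the behaviour at $\infty$. A second issue is justifying the use of Lemma~\ref{lem:localward} on generalized states rather than on $\mathcal D(\mathcal Q)^{\otimes3}$. This is handled by the weighted-space continuation of \cite[Section~11.2]{segalaxiom}.
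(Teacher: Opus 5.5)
\textbf{Verdict: there is a genuine gap in Step~2.} Your Step~1 is correct and is the same identity the paper writes for $\mathrm v_1^{(n)}$. The problem is that the vector field Step~2 relies on does not exist in exactly the cases that need it.

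\textbf{Where the naive field fails.} With $-(w-x_j)^{1-n}\partial_w$, an $\mathbf L_{-1}$ produced at a slot $k\neq j$ gives total level $N-n+1$. This is already $<N$ when $n\geq2$. Only $n=1$, a global translation that just moves $\mathbf L_{-1}$ from slot $j$ to the other slots, fails to lower $N$.

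\textbf{Why your field cannot exist for $n=1,2$.} A nonzero meromorphic vector field on $\widehat{\mathbb C}$ has $\#\text{zeros}-\#\text{poles}=2$. A field whose only pole is at $x_j$, of order $n-1$, with double zeros at the two other centres therefore needs $n+1\geq4$, i.e.\ $n\geq3$. Concretely, your coefficient grows like $w^{5-n}$, so for $n=1,2$ the field has a pole of order $3-n$ at $\infty$. Placing that pole inside a disc does not help:
\begin{enumerate}
\item If it lies in $\psi_\ell(\mathbb D)$ away from the centre, then $\psi_\ell^*v$ has a pole at some $u_0\neq0$. Its Laurent expansion near $\mathbb T$ contains $u^{-k-1}$ for every $k\geq0$. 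The Ward identity then involves $\mathbf L_{-m}$ at slot $\ell$ for all large $m$, so levels are no longer controlled.
\item If it sits at a centre $x_\ell$ with $\ell\neq j$, it contradicts the zero you imposed there.
\item If it sits at $x_j$, it raises the pole order there above $n-1$, creating modes $\mathbf L_{-m}$ with $m>n$ at slot $j$.
\end{enumerate}
Separately, the naive field is holomorphic at $\infty$ for every $n\geq1$; it behaves like $u^{n+1}\partial_u$ in $u=1/w$. So Step~1 needs no M\"obius normalization.

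\textbf{The repair.} Ask for less: simple zeros at the other two centres suffice, because $\mathbf L_0$ preserves level. With finite centres, take
$v=-(w-x_j)^{1-n}\frac{(w-x_k)(w-x_l)}{(x_j-x_k)(x_j-x_l)}\partial_w$.
\begin{enumerate}
\item Its coefficient is $O(w^{3-n})$, so it is holomorphic at $\infty$ for $n\geq1$.
\item Its only pole is at $x_j$, with $[v_j]_{-n}=\psi_j'(0)^{-n}$ unchanged.
\item Only modes $m\geq0$ occur at slots $k,l$.
\end{enumerate}
Every term on the right then has total holomorphic level at most $N-1$. No secondary PBW induction is needed either, since $\mathbf L_m\mathbf L_{-\nu'}\Psi_{\alpha_j}$ with $-n<m<0$ already has level below $|\nu_j|$.

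\textbf{Comparison with the paper.} With this change your argument works and genuinely differs from the paper's. The paper inducts on the level of one boundary at a time, so raising levels on the other boundaries is harmless. It then caps each reduced boundary by a disc, turning it into an interior insertion. The $\mathbf L_{-1},\mathbf L_0$ modes there become the differential operators $\mathbf D^{I,J}$ in $x_1,x_2$. Your repaired version stays on $\mathcal P_{\boldsymbol\psi}$ with purely scalar coefficients, so polynomiality in $\boldsymbol\Delta_{\boldsymbol\alpha}$ is immediate. The paper's route ends with differential operators acting on the primary amplitude, and implicitly uses how that amplitude depends on the insertion points.
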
	      
\begin{remark}
	From the definition, the ratio on the left-hand side depends on metric $g$ only through its conformal class $[g]$.
\end{remark}

\begin{proof}
We give a new derivation from the local Ward identity.
	It suffices to prove the Ward identities for real $(\alpha_1,\alpha_2,\alpha_3)$; the general case follows from analytic continuation. 
    We first treat the case $\tilde\nu_j=\emptyset$.
	For $n\geq 1$, we choose a meromorphic vector field   that exhibits a $(n-1)$-th order pole only at point \(  x_1 \) and is holomorphic elsewhere:  $\mathrm{v}_{1}^{(n)}=-(z- x_1)^{-n+1}\partial_z$, with expansions:
	$$
    \psi_1^*(\mathrm{v}^{(n)}_{1})=-\sum_{k\geq -n} [\psi_1^*(\mathrm{v}^{(n)}_{
    1})]_k\cdot w^{k+1} \partial_w, \quad \text{ with } [\psi_1^*(\mathrm{v}^{(n)}_{1})]_{-n}=\frac{1}{\psi_1'(0)^n},
    $$
    by holomorphicity near $x_2$ and $x_3$,
	$$
    \psi_2^*(\mathrm{v}_{1}^{(n)})=-\sum_{k\geq -1} [\psi_2^*(\mathrm{v}_{1}^{(n)})]_k\cdot w^{k+1} \partial_w, \quad \text{ and } \psi_3^*(\mathrm{v}_{1}^{(n)})=-\sum_{k\geq -1} [\psi_3^*(\mathrm{v}_{1}^{(n)})]_k\cdot w^{k+1}\partial_w.
    $$
	Plugging $\mathrm{v}_1^{(n)} $ and $i\mathrm{v}_1^{(n)}$ into \Cref{lem:localward} and take the difference, it gives:
	\begin{align*}
		& -\frac{1}{{\psi_1^{'}(0)}^n} \mathcal{A}_{ \mathcal{P}_{\boldsymbol{\psi}} , g_{\mathrm{ad}}, \boldsymbol\psi} (\bL_{-n}\otimes \mathrm{Id}\otimes\mathrm{Id})\\
        = 
		&\sum_{k > -n} [\psi_1^*(\mathrm{v}^{(n)}_{1})]_k\cdot\mathcal{A}_{ \mathcal{P}_{\boldsymbol{\psi}} , g_{\mathrm{ad}}, \boldsymbol\psi}(\bL_{k}\otimes \mathrm{Id}\otimes\mathrm{Id})
		\\
		& + \sum_{k\geq -1}[\psi_2^*(\mathrm{v}_{1}^{(n)})]_k\cdot \mathcal{A}_{ \mathcal{P}_{\boldsymbol{\psi}} , g_{\mathrm{ad}}, \boldsymbol\psi} (\mathrm{Id}\otimes \mathbf{L}_{k}\otimes\mathrm{Id})+\sum_{k \geq -1} [\psi_3^*(\mathrm{v}_{1}^{(n)})]_k\cdot\mathcal{A}_{ \mathcal{P}_{\boldsymbol{\psi}} , g_{\mathrm{ad}}, \boldsymbol\psi} (\mathrm{Id}\otimes\mathrm{Id}\otimes \mathbf{L}_{k})\\
		& + \frac{\mathbf{i} c_{\rm L}}{24 \pi} \sum_{j=1}^3 \int_{\psi_j(\T)} \mathrm{v}_{1}^{(n)} S_{\psi_j^{-1}} \mathrm d z \cdot \mathcal{A}_{ \mathcal{P}_{\boldsymbol{\psi}} , g_{\mathrm{ad}}, \boldsymbol\psi}(\mathrm{Id}\otimes\mathrm{Id}\otimes \mathrm{Id}).
	\end{align*}
	From this recursion, we can reduce the computation of $\mathcal{A}_{ \mathcal{P}_{\boldsymbol{\psi}} , g_{\mathrm{ad}}, \boldsymbol\psi}(\otimes_{j=1}^3\Psi_{\alpha_j,\nu_j,\emptyset})$ to the computation of $\mathcal{A}_{ \mathcal{P}_{\boldsymbol{\psi}} , g_{\mathrm{ad}}, \boldsymbol\psi}(\Psi_{\alpha_1}\otimes(\otimes_{j=2}^3\Psi_{\alpha_j,\nu_j,\emptyset}))$. 
    
    Note that for admissible metric $g$ on $\D$ with $g_\D:=|\dd z|^2$ near the origin,
	\begin{equation}
		\mathcal{A}_{\psi_j(\D),(\psi_j)_*g,  x_j,\alpha_j,\psi_j}=e^{c_{\rm L} S_{\rm L}^0(\D,g_\D,g)}Z_{\D,g_\D} \Psi_{\alpha_j,\emptyset,\emptyset}.
	\end{equation}
	Then, we have $g_{\mathrm{ad}}\#(\psi_1)_*g$, which still denoted by $g_{\mathrm{ad}}$, as the admissible metric on $\mathbb{A}_{\boldsymbol\psi} = \mathbb{A}_{\psi_2,\psi_3}:=\hat\C\setminus(\psi_2(\D)\cup\psi_3(\D))$, with the boundary parameterization $\boldsymbol\psi=(\psi_2,\psi_3)$, then
	\begin{equation}
		\frac{1}{\sqrt{2}\pi}e^{c_{\rm L} S_{\rm L}^0(\D,g_\D,g)}Z_{\D,g_\D}
        \mathcal{A}_{ \mathcal{P}_{\boldsymbol{\psi}} , g_{\mathrm{ad}}, \boldsymbol\psi}(\Psi_{\alpha_1}\otimes(\otimes_{j=2}^3\Psi_{\alpha_j,\nu_j,\emptyset}))
        =
        \mathcal{A}_{ \mathbb{A}_{\boldsymbol\psi}, g_{\mathrm{ad}}, x_1,\alpha_1, \boldsymbol\psi}(\otimes_{j=2}^3\Psi_{\alpha_j,\nu_j,\emptyset}).
    \end{equation}
	Similarly, take  $\mathrm{v}_{2}^{(n)}=-(z- x_2)^{-n+1}\partial_z$, with expansions:
	$$
    \psi_2^*(\mathrm{v}_{2}^{(n)})=-\sum_{k\geq -n}[\psi_2^*(\mathrm{v}_{2}^{(n)})]_{k}\cdot z^{k+1}\partial_z, \quad \text{ with } [\psi_2^*(\mathrm{v}_{2}^{(n)})]_{-n}=\frac{1}{\psi_2'(0)^n},
    $$
    and 
    $$
    \psi_3^*(\mathrm{v}_{2}^{(n)})=-\sum_{k\geq -1} [\psi_3^*(\mathrm{v}_{2}^{(n)})]_{k}\cdot z^{k+1}\partial_z.
    $$
	Then, we have
	\begin{align*}
		-\frac{1}{\psi_2^{'}(0)^n} \mathcal{A}_{\mathbb{A}_{\boldsymbol\psi}, g_{\mathrm{ad}},  x_1,\alpha_1, \boldsymbol\psi}(\bL_{-n}\otimes \mathrm{Id})= 
		&\sum_{k > -n} [\psi_2^*(\mathrm{v}_{2}^{(n)})]_{k}\cdot\mathcal{A}_{\mathbb{A}_{\boldsymbol\psi}, g_{\mathrm{ad}},  x_1,\alpha_1, \boldsymbol\psi} (\bL_{k}\otimes \mathrm{Id}) \\
		&+ \sum_{k \geq -1} [\psi_3^*(\mathrm{v}_{2}^{(n)})]_{k}\cdot\mathcal{A}_{\mathbb{A}_{\boldsymbol\psi}, g_{\mathrm{ad}},  x_1,\alpha_1, \boldsymbol\psi} (\mathrm{Id}\otimes \mathbf{L}_{k})\\
		&- \mathbf{D^{1,23}}( x_1, \alpha_1 , \boldsymbol\psi, \mathrm{v}_{2}^{(n)}) \mathcal{A}_{\mathbb{A}_{\boldsymbol\psi}, g_{\mathrm{ad}},  x_1,\alpha_1, \boldsymbol\psi}( \mathrm{Id}\otimes  \mathrm{Id})
	\end{align*}
Here for $\mathbf{I}\cup \mathbf{J}=\{1,2,3\}$, $\mathbf{I}\cap \mathbf{J}=\emptyset$, $\mathbf{D^{I,J}}({\bf x},\bs\alpha,\mathrm{v})$ is the differential operator defined by
        \begin{align}
            \mathbf{D^{I,J}}({\bf x},\bs\alpha, \boldsymbol\psi,\mathrm{v})=\sum_{i\in I} \left( \Delta_{\alpha_i}(\partial_z\mathrm{v}(x_i)+\partial_z\sigma(x_i,\bar x_i)\cdot \mathrm{v}(x_i))+\mathrm{v}(x_i)\partial_{x_i} \right)-\frac{\mathrm{i}c_{\rm L}}{24\pi}\sum_{j\in J}\int_{\psi_j(\T)}\mathrm{v}S_{\psi^{-1}_j}\mathrm{d}z
        \end{align}

    \noindent
	From above, we reduce the computation to (also its derivative) 
    $$
    \frac{1}{\sqrt{2}\pi}e^{c_{\rm L} S_{\rm L}^0(\D,g_\D,g)}Z_{\D,g_\D}\mathcal{A}_{\mathbb{A}_{\boldsymbol\psi}, g_{\mathrm{ad}},  x_1,\alpha_1, \boldsymbol\psi}(\Psi_{\alpha_2}\otimes\Psi_{\alpha_3,\nu_3,\emptyset})=\mathcal{A}_{\mathbb{D}^{\mathsf{c}}_{\psi}, g_{\mathrm{ad}}, \bf x,\bs \alpha, \psi}(\Psi_{\alpha_3,\nu_3,\emptyset}),
    $$
	where $\mathbb{D}^{\mathsf{c}}_{\psi}=\hat\C\setminus\psi_3(\D)$, $\psi=\psi_3$. 
    
    Take $\mathrm{v}^{(n)}_{3}=-(z-x_3)^{-n+1}\partial_z$ with expansion 
    $$
    \psi_3^*(\mathrm{v}^{(n)}_{3})=-\sum_{k\geq -n} [\psi_3^*(\mathrm{v}^{(n)}_{3})]_{k}\cdot z^{k+1}\partial_z, \quad \text{ with }[\psi_3^*(\mathrm{v}^{(n)}_{3})]_{-n}=\frac{1}{\psi_3'(0)^n}.
    $$
	Then we have
	\begin{align}\label{e:disk ward identity analytic}
    \begin{split}
		-\frac{1}{\psi'_3(0)^n} \mathcal{A}_{\mathbb{D}^{\mathsf{c}}_{\psi}, g_{\mathrm{ad}}, \bf x,\bs  \alpha, \psi}(\bL_{-n})= 
		&\sum_{k > -n} [\psi_3^*(\mathrm{v}^{(n)}_{3})]_{k}\cdot\mathcal{A}_{\mathbb{D}^{\mathsf{c}}_{\psi}, g_{\mathrm{ad}}, \bf x,\bs  \alpha, \psi}(\bL_{k})
		\\-&
        \mathbf{D^{12,3}}({\bf x},\bs\alpha,\psi,\mathrm{v}^{(n)}_{3}) \cdot
        \mathcal{A}_{\mathbb{D}^{\mathsf{c}}_{\psi}, g_{\mathrm{ad}}, \bf x,\bs \alpha, \psi}(\mathrm{Id})
    \end{split}
	\end{align}
    In the end, by iterating the three recursion relations, we reduce every
holomorphic descendant coefficient to a finite linear combination of
compositions of the operators $\mathbf D^{I,J}$ applied to the primary
amplitude
\(
 \mathcal A_{\mathcal P_{\boldsymbol\psi},
 g_{\mathrm{ad}},\boldsymbol\psi}
 (\Psi_{\alpha_1},\Psi_{\alpha_2},\Psi_{\alpha_3}).
\)
At each step, the level on the boundary component under consideration
strictly decreases, while the Virasoro commutation relations produce
coefficients polynomial in
$\boldsymbol\Delta_{\boldsymbol\alpha}$. We therefore obtain
\eqref{e:holormorphic fact} when
$\widetilde\nu_j=\emptyset$ for all $j$. When anti-holomorphic
descendants are included, the commutative relation
\(
 [\mathbf L_n,\widetilde{\mathbf L}_m]=0
\)
shows that the holomorphic recursion leaves the anti-holomorphic
descendants unchanged. Applying the conjugate recursion to the
anti-holomorphic sector gives a second polynomial factor depending
only on $\boldsymbol{\widetilde\nu}$. This factor is
the complex conjugate of the corresponding holomorphic polynomial.
The claimed factorization follows, and the general case is obtained
by analytic continuation.
\end{proof}

As a consequence, we have the following Ward identities for 
\(\mathbb{D}^{\mathsf{c}}_{\psi}\) and \(\mathbb{A}_{\boldsymbol{\psi}}\),the matrix coefficients can also be written as $\omega(  \mathcal{P}_{\boldsymbol{\psi}}, \boldsymbol\Delta_{\boldsymbol\alpha} ,\bs\nu)$.
\begin{corollary}
	Retain the above notations, we have:
	\begin{enumerate}
		\item For $\bs\nu=(\emptyset,\nu_2,\nu_3)$, $\tilde{\bs\nu}=(\emptyset,\tilde\nu_2,\tilde\nu_3)$, \begin{equation}
			\frac{\mathcal{A}_{\mathbb{A}_{\boldsymbol\psi}, g_{\mathrm{ad}},  x_1,\alpha_1, \boldsymbol\psi}( \Psi_{\alpha_2,\nu_2,\tilde\nu_2}, \Psi_{\alpha_3,\nu_3,\tilde\nu_3})}{\mathcal{A}_{\mathbb{A}_{\boldsymbol\psi}, g_{\mathrm{ad}},  x_1,\alpha_1, \boldsymbol\psi}(\Psi_{\alpha_2}, \Psi_{\alpha_3})}=
        \omega(  \mathcal{P}_{\boldsymbol{\psi}} , \boldsymbol\Delta_{\boldsymbol\alpha} ,\bs\nu)\overline{\omega(  \mathcal{P}_{\boldsymbol{\psi}},\boldsymbol\Delta_{\boldsymbol\alpha},\bs{\tilde\nu})}.
		\end{equation}
		\item For $\bs\nu=(\emptyset,\emptyset,\nu_3)$, $\tilde{\bs\nu}=(\emptyset,\emptyset,\tilde\nu_3)$,
		\begin{equation}
			\frac{\mathcal{A}_{\mathbb{D}^{\mathsf{c}}_{\psi}, g_{\mathrm{ad}}, \bf x,\bf  \alpha, \psi}( \Psi_{\alpha_3,\nu_3,\tilde\nu_3})}{\mathcal{A}_{\mathbb{D}^{\mathsf{c}}_{\psi}, g_{\mathrm{ad}}, \bf x,\bf  \alpha, \psi}( \Psi_{\alpha_3,\emptyset,\emptyset})}=
        \omega(  \mathcal{P}_{\boldsymbol{\psi}} , \boldsymbol\Delta_{\boldsymbol\alpha} ,\bs\nu)\overline{\omega(  \mathcal{P}_{\boldsymbol{\psi}},\boldsymbol\Delta_{\boldsymbol\alpha},\bs{\tilde\nu})}.
		\end{equation}
	\end{enumerate}
\end{corollary}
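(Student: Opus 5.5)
The plan is to read off both identities from the sewing relations and the recursions already set up in the proof of Theorem~\ref{thm: ward identity}.

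\emph{Sewing step.} First I would record the two capping identities. For the annulus, capping $\psi_1(\mathbb D)$ with the disk amplitude, which is proportional to $\Psi_{\alpha_1}$, gives
\[
 \frac{1}{\sqrt2\pi}e^{c_{\rm L}S_{\rm L}^0(\mathbb D,g_{\mathbb D},g)}Z_{\mathbb D,g_{\mathbb D}}\,
 \mathcal A_{\mathcal P_{\boldsymbol\psi},g_{\mathrm{ad}},\boldsymbol\psi}\bigl(\Psi_{\alpha_1},\Psi_{\alpha_2,\nu_2,\tilde\nu_2},\Psi_{\alpha_3,\nu_3,\tilde\nu_3}\bigr)
 =\mathcal A_{\mathbb A_{\boldsymbol\psi},g_{\mathrm{ad}},x_1,\alpha_1,\boldsymbol\psi}\bigl(\Psi_{\alpha_2,\nu_2,\tilde\nu_2},\Psi_{\alpha_3,\nu_3,\tilde\nu_3}\bigr).
\]
This holds for arbitrary descendants on boundaries $2$ and $3$. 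The reason is that Proposition~\ref{prop: gluing} contracts only the first tensor factor, and the disk amplitude does not depend on what is inserted on the other boundaries.

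\emph{Scalar independence.} The prefactor on the left is a nonzero scalar that does not depend on $(\nu_2,\tilde\nu_2,\nu_3,\tilde\nu_3)$. Hence it cancels in the ratio, and the normalized annulus coefficient equals the normalized pants coefficient with $\nu_1=\tilde\nu_1=\emptyset$. Applying Theorem~\ref{thm: ward identity} with $\boldsymbol\nu=(\emptyset,\nu_2,\nu_3)$ then gives item~1.

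\emph{Disk case.} Item~2 follows in the same way. I cap the second boundary in the annulus identity, again by Proposition~\ref{prop: gluing}, using the identity already displayed in that proof for $\mathbb D^{\mathsf c}_\psi$. This reduces the disk ratio to the pants ratio with $\boldsymbol\nu=(\emptyset,\emptyset,\nu_3)$.

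\emph{Hypotheses and extension.} One must check that the normalizing constants do not vanish and that the hypotheses of Theorem~\ref{thm: ward identity} hold: real weights with $\alpha_1,\alpha_2<Q$ in the interior, Seiberg bounds, and $\Re\alpha_j\le Q$. The statement then extends to generalized descendant states by the same analytic continuation in $\boldsymbol\alpha$ used at the end of that proof.

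\emph{Main obstacle.} The delicate point is justifying the sewing with generalized (non-$L^2$) descendant states on the uncapped boundaries. Here I would work first in the probabilistic region, where Proposition~\ref{prop: gluing} applies to kernels in the weighted spaces $e^{-\beta c_-}\mathcal D(\mathcal Q)$, and then continue analytically.

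\emph{Alternative route.} One could instead derive the identities directly from the annulus and disk recursions in the proof of Theorem~\ref{thm: ward identity}, together with \eqref{e:disk ward identity analytic}. These recursions are exactly the pants recursions with the first, and then the second, boundary carrying only the primary. They produce the same polynomials $\omega$, which confirms the identification.
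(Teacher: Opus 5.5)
Your proposal is correct and is essentially the argument the paper intends; the paper gives no separate proof. The two capping identities displayed in the proof of Theorem~\ref{thm: ward identity} identify the annulus and disk amplitudes with the pants amplitude evaluated at $\Psi_{\alpha_1}$ (resp.\ at $\Psi_{\alpha_1},\Psi_{\alpha_2}$), up to a nonzero scalar independent of the remaining descendants. That scalar cancels in the ratio, which reduces both items to the theorem with $\nu_1=\tilde\nu_1=\emptyset$ (resp.\ also $\nu_2=\tilde\nu_2=\emptyset$). The only point worth adding is that the metric in the statement is the glued metric from that proof; for any other admissible metric in the conformal class, Weyl covariance multiplies numerator and denominator by the same factor.
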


\subsection{Relation to the Liouville vertex operator}\label{sec:liouville-vertex-operator}

We now compare the amplitude Ward identities with formal chiral vertex operators. Let $\alpha_j=Q+\mathrm iP_j$ with $P_j\in\mathbb R_+$ for $j=1,2,3$. By Proposition~\ref{liouville module}, the modules $\mathcal V_{\alpha_j}$ are irreducible highest-weight modules with central charge $c_{\rm L}$ and highest weights $\Delta_{\alpha_j}$. We denote their highest-weight vectors by $v_{\alpha_j}$ and use the partition convention $\mathcal T$.

\paragraph{Formal chiral vertex operators.}
Following \cite[Section~17.1]{Teschner_2001}, let $\widehat{\mathcal V}_{\alpha_3}:=\prod_{N\geq0}\mathcal V_{\alpha_3}[N]$ be the formal completion. There is a unique formal chiral vertex operator
\begin{equation}\label{eq:formal-chiral-vertex-map}
 \mathsf V_{\boldsymbol\alpha}(\,\cdot\,;z):
 \mathcal V_{\alpha_2}\longrightarrow
 \operatorname{Hom}\bigl(\mathcal V_{\alpha_1},\widehat{\mathcal V}_{\alpha_3}\bigr)\{z\},
 \qquad \boldsymbol\alpha=(\alpha_1,\alpha_2,\alpha_3),
\end{equation}
where $\{z\}$ denotes formal Laurent series with the overall power $z^{\Delta_{\alpha_3}-\Delta_{\alpha_2}-\Delta_{\alpha_1}}$. which is characterized by $\mathsf V_{\boldsymbol\alpha}(v_{\alpha_2};z)v_{\alpha_1}=z^{\Delta_{\alpha_3}-\Delta_{\alpha_2}-\Delta_{\alpha_1}}(v_{\alpha_3}+O(z))$ and, for $n\in\mathbb Z$ and $w\in\mathcal V_{\alpha_2}$,
\begin{equation}\label{eq:chiral-vertex-commutator}
 [\mathbf L_n,\mathsf V_{\boldsymbol\alpha}(w;z)]
 =\sum_{k=-1}^{\infty}\binom{n+1}{k+1}z^{n-k}
 \mathsf V_{\boldsymbol\alpha}(\mathbf L_kw;z).
\end{equation}
For $n<-1$, the binomial coefficient is understood in its generalized sense. For a fixed descendant $w$, the sum is finite because $\mathbf L_kw=0$ for all sufficiently large $k$. In particular, for $w=v_{\alpha_2}$, \eqref{eq:chiral-vertex-commutator} becomes $[\mathbf L_n,\mathsf V_{\boldsymbol\alpha}(v_{\alpha_2};z)]=z^n\bigl(z\partial_z+(n+1)\Delta_{\alpha_2}\bigr)\mathsf V_{\boldsymbol\alpha}(v_{\alpha_2};z)$.

\
We now explain the relation between the Liouville vertex operators and the Liouville amplitudes. 
Fix $q\in(0,1)$ and $z$ with $q<|z|<1$, and choose $r>0$ such that $\overline{B(z,r)}\subset\mathbb A_q$. Set $\Sigma_{q,r,z}:=\mathbb A_q\setminus\overline{B(z,r)}$ and parametrize its inner, outer, and circular-hole boundaries by $\zeta_1(w)=qw$, $\zeta_2(w)=w$, and $\zeta_3(w)=z+rw$, respectively. Thus $\zeta_1$ and $\zeta_3$ are incoming, whereas $\zeta_2$ is outgoing. We write $\boldsymbol\zeta=(\zeta_1,\zeta_2,\zeta_3)$ and take $g$ to be an admissible metric for these parametrizations.

Let $o:\mathbb T\to\mathbb T$ be given by $o(e^{\mathrm i\theta})=e^{-\mathrm i\theta}$, let $\mathbf OF(\varphi):=F(\varphi\circ o)$, and let $\mathbf C$ be complex conjugation on $\mathcal H$. Following~\cite[Section~6.6]{segalaxiom}, let
$\mathbf J:=\mathbf O\mathbf C$ denote the anti-linear involution
associated with orientation reversal and complex conjugation. By
\cite[Proposition~6.11]{segalaxiom},
\(
 \mathbf J\Psi_{Q+\mathrm iP,\nu,\widetilde\nu}
 =
 \Psi_{Q-\mathrm iP,\nu,\widetilde\nu}.
\)
For finite descendant vectors $u_1,u_2,u_3$, as we view the amplitude as an operator
\[
 \mathcal A_\Sigma:
 \mathcal H_{\rm in}\otimes\mathcal H_{\rm in}
 \longrightarrow
 \mathcal H_{\rm out}.
\] we have
\begin{equation}\label{eq:outgoing-incoming-pairing}
 \left\langle
 \mathcal A_{\Sigma_{q,r,z},g,\boldsymbol\zeta}
 (u_1\otimes u_2),u_3
 \right\rangle_{\mathcal H}
 =
 \mathcal A_{\Sigma_{q,r,z},g,\boldsymbol\zeta^{\rm in}}
 \bigl(u_1\otimes u_2\otimes\mathbf J u_3\bigr),
\end{equation}
where
$\boldsymbol\zeta^{\rm in}:=(\zeta_1,\zeta_3,\zeta_2\circ o)$
lists the three boundary parametrizations in incoming order.
Together with the fact that $\mathbf{O}\mathbf{C}\mathbf{L}_n=\mathbf{O}\mathbf{C}\mathbf{L}^*_{-n}$, we have
\begin{equation}\label{eq:amplitude-vertex-intertwining}
 \mathbf L_n\mathcal A_{\Sigma_{q,r,z},g,\boldsymbol\zeta}
 -q^n\mathcal A_{\Sigma_{q,r,z},g,\boldsymbol\zeta}
 \circ(\mathbf L_n\otimes\mathrm{Id})
 =\sum_{k=-1}^{\infty}\binom{n+1}{k+1}z^{n-k}r^k
 \mathcal A_{\Sigma_{q,r,z},g,\boldsymbol\zeta}
 \circ(\mathrm{Id}\otimes\mathbf L_k).
\end{equation}

\begin{proposition}[Amplitude realization of the chiral vertex operator]
\label{prop:amplitude-chiral-vertex}
For \(\nu_1,\nu_2,\nu_3\in\mathcal{T}\), fix a branch of
\(z^{\Delta_{\alpha_3}-\Delta_{\alpha_2}-\Delta_{\alpha_1}}\), one has
\begin{align}
&\left\langle
\mathsf{V}_{\boldsymbol{\alpha}}
\bigl(\mathbf{L}_{-\nu_2}v_{\alpha_2};z\bigr)
\mathbf{L}_{-\nu_1}v_{\alpha_1},
\mathbf{L}_{-\nu_3}v_{\alpha_3}
\right\rangle_{\alpha_3}^{\mathrm{Sh}}
=
z^{\Delta_{\alpha_3}-\Delta_{\alpha_2}-\Delta_{\alpha_1}}
\frac{
\left\langle
\mathcal{A}_{\Sigma_{q,r,z},g,\boldsymbol{\zeta}}
\bigl(
\Psi_{\alpha_1,\nu_1,\emptyset}
\otimes
\Psi_{\alpha_2,\nu_2,\emptyset}
\bigr),
\Psi_{\alpha_3,\nu_3,\emptyset}
\right\rangle
}{
q^{|\nu_1|}
r^{|\nu_2|}
\left\langle
\mathcal{A}_{\Sigma_{q,r,z},g,\boldsymbol{\zeta}}
\bigl(\Psi_{\alpha_1}\otimes\Psi_{\alpha_2}\bigr),
\Psi_{\alpha_3}
\right\rangle
}.
\label{eq:amplitude-chiral-vertex}
\end{align}
Equivalently, the two sides are
\(
C_{\boldsymbol{\alpha}}(\nu_1,\nu_2;\nu_3)
z^{
\Delta_{\alpha_3}
-\Delta_{\alpha_2}
-\Delta_{\alpha_1}
+|\nu_3|
-|\nu_2|
-|\nu_1|
},
\)
where
\(C_{\boldsymbol{\alpha}}(\nu_1,\nu_2;\nu_3)\)
is a polynomial in
\((\Delta_{\alpha_1},\Delta_{\alpha_2},\Delta_{\alpha_3})\)
of total degree at most
\(\ell(\nu_1)+\ell(\nu_2)+\ell(\nu_3)\).
\end{proposition}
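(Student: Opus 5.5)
The plan is a uniqueness argument. Both sides of \eqref{eq:amplitude-chiral-vertex} are trilinear in $(\nu_1,\nu_2,\nu_3)$ after extending by linearity to the descendant spaces. I will show that both satisfy the same recursion, which removes negative modes one at a time, and that they agree on primaries.

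Define
\[
R(\nu_1,\nu_2;\nu_3):=z^{\Delta_{\alpha_3}-\Delta_{\alpha_2}-\Delta_{\alpha_1}}\frac{\langle\mathcal A_{\Sigma_{q,r,z},g,\boldsymbol\zeta}(\Psi_{\alpha_1,\nu_1,\emptyset}\otimes\Psi_{\alpha_2,\nu_2,\emptyset}),\Psi_{\alpha_3,\nu_3,\emptyset}\rangle}{q^{|\nu_1|}r^{|\nu_2|}\langle\mathcal A(\Psi_{\alpha_1}\otimes\Psi_{\alpha_2}),\Psi_{\alpha_3}\rangle}.
\]
The primary coefficient is nonzero because it is a DOZZ-type constant, via sewing and Theorem~\ref{thm: ward identity}. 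The base case $\nu_1=\nu_2=\nu_3=\emptyset$ is then trivial: both sides equal $z^{\Delta_{\alpha_3}-\Delta_{\alpha_2}-\Delta_{\alpha_1}}$ by the normalization of $\mathsf V_{\boldsymbol\alpha}$.

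\textbf{Removing modes on $\nu_3$.} Write $\nu_3=(m,\nu_3')$. The adjoint property of the Hilbert pairing, $\mathbf L_{-m}^*=\mathbf L_m$ (the duality definition of negative modes), moves $\mathbf L_{-m}$ to the left as $\mathbf L_m$. Then \eqref{eq:amplitude-vertex-intertwining} with $n=m$ turns $\mathbf L_m\mathcal A$ into
\[
q^m\mathcal A\circ(\mathbf L_m\otimes\mathrm{Id})+\sum_k\binom{m+1}{k+1}z^{m-k}r^k\mathcal A\circ(\mathrm{Id}\otimes\mathbf L_k).
\]
After dividing by $q^{|\nu_1|}r^{|\nu_2|}$, the factor $q^m$ exactly compensates the drop of level $m$ in $\nu_1$. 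Likewise $r^k$ compensates the change $-k$ in level for the second slot. So $R$ obeys
\[
R(\nu_1,\nu_2;(m,\nu_3'))=R(\mathbf L_m\nu_1,\nu_2;\nu_3')+\sum_k\binom{m+1}{k+1}z^{m-k}R(\nu_1,\mathbf L_k\nu_2;\nu_3').
\]
Here $\mathbf L_m\nu_1$ means re-expanding $\mathbf L_m\mathbf L_{-\nu_1}v_{\alpha_1}$ in the PBW basis via \eqref{eq:virasoro-commutator}. On the vertex-operator side, the contravariance $\langle\mathbf L_{-m}a,b\rangle^{\rm Sh}=\langle a,\mathbf L_mb\rangle^{\rm Sh}$ together with \eqref{eq:chiral-vertex-commutator} gives the identical relation.

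\textbf{Removing modes on $\nu_1$ and $\nu_2$.} These reductions need the amplitude analogue of the commutator for \emph{negative} $n$. I would obtain it either from \eqref{eq:amplitude-vertex-intertwining} with $n<0$, or directly from Lemma~\ref{lem:localward} with the vector fields $-q^nw^{1-n}\partial_w$ and $-r^n(w-z)^{1-n}\partial_w$, exactly as in Propositions~\ref{prop:matrixcoe} and \ref{cor: moduliiannulusward}. The Schwarzian term vanishes because all charts are affine (Theorem~\ref{thm:metric}), and $g$ is flat since it is cylindrical near the boundaries. The resulting relations are the mode-by-mode form of \eqref{eq:chiral-vertex-commutator}, expressing $R$ with $\mathbf L_{-n}$ on slot $1$ or $2$ in terms of lower levels.

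Whenever $\nu_3=\emptyset$, the highest-weight property $\mathbf L_n v_{\alpha_3}=0$ and $\mathbf L_n\Psi_{\alpha_3}=0$ for $n>0$ kills the left-hand side of the relation. What remains removes the top mode of $\nu_1$ or $\nu_2$ in terms of terms of strictly lower total level $|\nu_1|+|\nu_2|+|\nu_3|$. Note that $\mathbf L_k$ with $k\ge0$ does not raise levels, and $\mathbf L_{-1}$ on slot $2$ appears with its $z$-derivative counterpart. Induction on total level then proves \eqref{eq:amplitude-chiral-vertex}.

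\textbf{Main obstacle: the $\mathbf L_{-1}$ terms in slot $2$.} Each such term raises $|\nu_2|$ while lowering another level, so the induction must be set up carefully. I would induct on $(|\nu_3|,|\nu_1|+|\nu_2|)$ lexicographically. The case $\mathbf L_{-1}$ acting on $\Psi_{\alpha_2}$ is handled via translation: $\mathbf L_{-1}$ in slot $2$ corresponds to $\partial_z$ of the amplitude. This follows from Lemma~\ref{lem:localward} with the constant vector field $\partial_w$, whose pullback to the inner and outer circles involves $q^{-1}$ and positive modes only.

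\textbf{Homogeneity and degree bound.} The $\mathbf L_0$-eigenvalues give homogeneity in $z$ of degree $\Delta_{\alpha_3}-\Delta_{\alpha_2}-\Delta_{\alpha_1}+|\nu_3|-|\nu_2|-|\nu_1|$: use the dilation field $-w\partial_w$ in the Ward identity, or the homogeneity of $\mathsf V$. Each application of a single mode contributes at most one factor of $\Delta$, arising from $\mathbf L_0$ or from commutators with $\mathbf L_n$ at the primary. This gives the total degree bound $\ell(\nu_1)+\ell(\nu_2)+\ell(\nu_3)$ for $C_{\boldsymbol\alpha}$.
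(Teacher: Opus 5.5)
Your strategy is the paper's: show that the normalized amplitude coefficients satisfy the same Virasoro relations as the matrix coefficients of $\mathsf V_{\boldsymbol\alpha}$, check that they agree on primaries, and conclude by uniqueness. The paper just verifies \eqref{eq:amplitude-vertex-intertwining} and cites uniqueness. You run the uniqueness induction by hand, and two of its steps do not go through as written.

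\textbf{First gap: the slot-$2$ $\mathbf L_{-1}$ step is circular.} Since $\Sigma_{q,r,z}$ carries no interior insertion, Lemma~\ref{lem:localward} with the constant field $\partial_w$ produces no $z$-derivative at all. It is exactly the case $n=-1$ of \eqref{eq:amplitude-vertex-intertwining}. Its pullback to the inner circle is $q^{-1}\partial_u$, which is an $\mathbf L_{-1}$, not a positive mode. With $\nu_3=\emptyset$ it reads $R(\mathbf L_{-1}u_1,u_2;v_{\alpha_3})=-R(u_1,\mathbf L_{-1}u_2;v_{\alpha_3})$. This is the very equal-level trade you were trying to break. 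A genuine $\partial_z$ would require moving the hole via Proposition~\ref{thm: variation of annulus}, with a $z$-dependent admissible metric and its anomaly, and then knowing the $z$-dependence of the ratio, which is part of the claim. Use the dilation field instead, i.e.\ $n=0$ in \eqref{eq:amplitude-vertex-intertwining}. For PBW monomials $u_1,u_2$ of levels $N_1,N_2$ it gives
\[
 z\,R\bigl(u_1,\mathbf L_{-1}u_2;v_{\alpha_3}\bigr)
 =\bigl(\Delta_{\alpha_3}-\Delta_{\alpha_1}-\Delta_{\alpha_2}-N_1-N_2\bigr)\,
 R\bigl(u_1,u_2;v_{\alpha_3}\bigr).
\]
This identity holds on both sides (on the vertex side by \eqref{eq:chiral-vertex-commutator} with $n=0$) and strictly lowers $|\nu_1|+|\nu_2|$. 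The same remark applies to your homogeneity paragraph. On the fixed surface, the dilation field gives this algebraic identity at fixed $z$, not the $z$-dependence; the latter follows afterwards from the equality.

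\textbf{Second gap: the vertex side for slot-$2$ modes $\mathbf L_{-n}$ with $n\geq 2$.} Your amplitude identity from $-r^n(w-z)^{1-n}\partial_w$ is correct: the charts are M\"obius, so the Schwarzian term of Theorem~\ref{thm:metric} vanishes. But it is not a mode-by-mode form of \eqref{eq:chiral-vertex-commutator}, which only ever applies $\mathbf L_k$ with $k\geq-1$ to the middle vector. Its counterpart for $\mathsf V_{\boldsymbol\alpha}$ is the descendant-field formula
\[
 \mathsf V_{\boldsymbol\alpha}(\mathbf L_{-n}w;z)
 =\sum_{j\geq0}\binom{1-n}{j}(-z)^j\,
 \mathbf L_{-n-j}\,\mathsf V_{\boldsymbol\alpha}(w;z)
 -\sum_{j\geq0}\binom{1-n}{j}(-z)^{1-n-j}\,
 \mathsf V_{\boldsymbol\alpha}(w;z)\,\mathbf L_{j-1}.
\]
This is the mode expansion of the contour integral of the stress tensor against $(\zeta-z)^{1-n}$ around $z$. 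You have to take it from the construction of $\mathsf V_{\boldsymbol\alpha}$ on descendants. It does not follow from \eqref{eq:chiral-vertex-commutator} plus the normalization: together these leave, for instance, $\langle\mathsf V_{\boldsymbol\alpha}(\mathbf L_{-2}v_{\alpha_2};z)v_{\alpha_1},v_{\alpha_3}\rangle_{\alpha_3}^{\mathrm{Sh}}$ undetermined.

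This formula is also the ingredient hidden in the paper's one-line appeal to uniqueness. Relation \eqref{eq:amplitude-vertex-intertwining} comes only from the fields $w^{n+1}\partial_w$, which have no pole at $z$. So your Ward identities with poles at the hole centre are exactly the missing amplitude-side input. With the descendant formula quoted for $\mathsf V_{\boldsymbol\alpha}$ and the $n=0$ repair above, your induction on $(|\nu_3|,|\nu_1|+|\nu_2|)$ closes and gives a more complete argument than the paper's.
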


\begin{proof}
Since the factors $q^n$ and $r^k$ in
\eqref{eq:amplitude-vertex-intertwining} are exactly cancelled by the
corresponding changes of level, the resulting right-hand side of \eqref{eq:amplitude-chiral-vertex} satisfies
the commutation relation~\eqref{eq:chiral-vertex-commutator}. Its value
on the three highest-weight vectors is
\(
 z^{\Delta_{\alpha_3}-\Delta_{\alpha_2}-\Delta_{\alpha_1}}.
\)
The uniqueness of formal chiral vertex-operator matrix
coefficients therefore gives~\eqref{eq:amplitude-chiral-vertex}.
\end{proof}
\begin{corollary}[M\"obius parametrizations of a pair of pants]
Let
\[
  \psi_j:U_j\longrightarrow \widehat{\mathbb C},
  \qquad j=1,2,3,
\]
be univalent holomorphic maps on neighborhoods $U_j$ of
$\overline{\mathbb D}$. Assume that the compact
sets $\psi_j(\overline{\mathbb D})$ are pairwise disjoint, and set
\[
  \mathcal P_{\boldsymbol\psi}
  :=\widehat{\mathbb C}\setminus
  \bigcup_{j=1}^3\psi_j(\mathbb D),
  \qquad
  \boldsymbol\psi:=(\psi_1,\psi_2,\psi_3),
\]
with all three boundary parametrizations incoming. Put
\[
  h_j:=\Delta_{\alpha_j},
  \qquad
  \check\alpha_3:=2Q-\alpha_3,
  \qquad
  N_j:=|\nu_j|.
\]
Define
\[
\begin{aligned}
  \Omega_{\boldsymbol\psi}(\nu_1,\nu_2,\nu_3)
  :=\frac{
  \mathcal A_{\mathcal P_{\boldsymbol\psi},g,\boldsymbol\psi}
  \bigl(
    \Psi_{\alpha_1,\nu_1,\emptyset},
    \Psi_{\alpha_2,\nu_2,\emptyset},
    \Psi_{\check\alpha_3,\nu_3,\emptyset}
  \bigr)
  }{
  \mathcal A_{\mathcal P_{\boldsymbol\psi},g,\boldsymbol\psi}
  \bigl(
    \Psi_{\alpha_1},
    \Psi_{\alpha_2},
    \Psi_{\check\alpha_3}
  \bigr)
  }.
\end{aligned}
\]
\begin{enumerate}[(i)]
\item Suppose that
\[
  \psi_1(w)=qw,
  \qquad
  \psi_2(w)=1+rw,
  \qquad
  \psi_3(w)=\frac{R}{w},
\]
where $q,r,R>0$, $q+r<1$, and $1+r<R$. Then
\begin{equation}
  C_{\boldsymbol\alpha}(\nu_1,\nu_2;\nu_3)
  =
  \frac{
  \mathcal A_{\mathcal P_{\boldsymbol\psi},g,\boldsymbol\psi}
  \bigl(
    \Psi_{\alpha_1,\nu_1,\emptyset},
    \Psi_{\alpha_2,\nu_2,\emptyset},
    \Psi_{\check\alpha_3,\nu_3,\emptyset}
  \bigr)
  }{
  q^{N_1}r^{N_2}R^{-N_3}
  \mathcal A_{\mathcal P_{\boldsymbol\psi},g,\boldsymbol\psi}
  \bigl(
    \Psi_{\alpha_1},
    \Psi_{\alpha_2},
    \Psi_{\check\alpha_3}
  \bigr)
  }.
\label{eq:standard-mobius-pants-coefficient}
\end{equation}

\item Suppose that each $\psi_j$ is a M\"obius embedding. Let
$x_j:=\psi_j(0)$ and let $M\in\operatorname{PSL}(2,\mathbb C)$ be the
unique M\"obius transformation satisfying
\[
  M(x_1)=0,
  \qquad
  M(x_2)=1,
  \qquad
  M(x_3)=\infty.
\]
Set $\phi_j:=M\circ\psi_j$. There are unique parameters
\[
  \rho_1,\rho_2,\rho_3\in\mathbb C^\times,
  \qquad
  \kappa_1,\kappa_2,\kappa_3\in\mathbb C,
\]
such that
\begin{equation}
  \phi_1(w)=\frac{\rho_1w}{1-\kappa_1w},\qquad
  \phi_2(w)=1+\frac{\rho_2w}{1-\kappa_2w},\qquad
  \phi_3(w)=\frac{\rho_3}{w}+\sigma_3
             =\frac{\rho_3}{w}(1-\kappa_3w),
  \qquad \kappa_3=-\frac{\sigma_3}{\rho_3}.
\label{eq:mobius-germ-normal-forms}
\end{equation}
Equivalently,
\[
  \rho_j=\phi_j'(0),
  \qquad
  \kappa_j=\frac{\phi_j''(0)}{2\phi_j'(0)}
  \quad (j=1,2),
  \qquad
  \rho_3=\lim_{w\to0}w\phi_3(w).
\]
For partitions $\nu,\mu$, define, in the universal Verma module, the
universal lower-triangular coefficients $\theta_{\nu,\mu}(h)$ by
\begin{equation}
  e^{\kappa L_1}L_{-\nu}v_h
  =
  \sum_{\substack{\mu\in\mathcal T\\ |\mu|\le |\nu|}}
  \kappa^{|\nu|-|\mu|}
  \theta_{\nu,\mu}(h)L_{-\mu}v_h.
\label{eq:theta-coordinate-change}
\end{equation}
Thus the sum is finite and, whenever $|\mu|=|\nu|$,
$\theta_{\nu,\mu}(h)=\delta_{\nu,\mu}$. Then
\begin{equation}
  \Omega_{\boldsymbol\psi}(\nu_1,\nu_2,\nu_3)
  ={}
  \sum_{\substack{
       \mu_j\in\mathcal T,\ |\mu_j|\le N_j\\
       j=1,2,3}}
  \rho_1^{|\mu_1|}
  \rho_2^{|\mu_2|}
  \rho_3^{-|\mu_3|}\times
  \prod_{j=1}^3
  \Bigl[
    \kappa_j^{N_j-|\mu_j|}
    \theta_{\nu_j,\mu_j}(h_j)
  \Bigr]
  C_{\boldsymbol\alpha}(\mu_1,\mu_2;\mu_3).
\label{eq:general-mobius-pants-coefficient}
\end{equation}
\end{enumerate}
\end{corollary}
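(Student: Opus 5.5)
For part (i), I would realize the three-incoming pair of pants $\mathcal P_{\boldsymbol\psi}$ with $\psi_1(w)=qw$, $\psi_2(w)=1+rw$, $\psi_3(w)=R/w$ as a geometric reparametrization of the surface $\Sigma_{q,r,z}$ of Proposition~\ref{prop:amplitude-chiral-vertex} with $z=1$. After the scaling $w\mapsto w/R$, the disk $\widehat{\mathbb C}\setminus\psi_3(\mathbb D)$ becomes the closed unit disk. Moreover, $\psi_3(w)/R=1/w=\zeta_2\circ o(w)$ on $\mathbb T$, which is exactly the incoming form of the outgoing boundary used in \eqref{eq:outgoing-incoming-pairing}.

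The key steps are as follows. First, I apply diffeomorphism covariance \eqref{eq:diffeomorphism-covariance} for the dilation $w\mapsto w/R$. Together with Weyl covariance (Proposition~\ref{prop:weyl+diff}), this transports the amplitude to $\Sigma_{q/R,r/R,1/R}$. The conformal factor contributes only an overall scalar, which cancels in the normalized ratio.

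Second, I use \eqref{eq:outgoing-incoming-pairing} and $\mathbf J\Psi_{\alpha_3,\nu_3,\emptyset}=\Psi_{\check\alpha_3,\nu_3,\emptyset}$, valid for $\alpha_3=Q+\mathrm iP_3$ since $\check\alpha_3=Q-\mathrm iP_3$. This rewrites the three-incoming coefficient as the operator matrix coefficient appearing on the right of \eqref{eq:amplitude-chiral-vertex}.

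Third, I invoke Proposition~\ref{prop:amplitude-chiral-vertex} with parameters $(q/R,r/R,1/R)$. The left side becomes $C_{\boldsymbol\alpha}(\nu_1,\nu_2;\nu_3)\,R^{-(\Delta_{\alpha_3}-\Delta_{\alpha_2}-\Delta_{\alpha_1}+N_3-N_2-N_1)}$, and the primary power cancels against the denominator. The remaining powers combine to $(q/R)^{N_1}(r/R)^{N_2}R^{N_3-N_2-N_1}$, so I must check that they reduce to $q^{N_1}r^{N_2}R^{-N_3}$.

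A cleaner alternative is to note that dilation by $\lambda$ acts on each descendant level through $\lambda^{\mathbf L_0}$. This follows from \eqref{eq:round-annulus-semigroup} and \eqref{eq:descendant-eigenvalues}. I expect this power bookkeeping, including the sign convention for the outgoing boundary, to be the main obstacle.

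For part (ii), I first use Möbius invariance of the amplitude. Applying diffeomorphism covariance to $M$, and noting that Schwarzians of Möbius maps vanish by Theorem~\ref{thm:metric}, the Weyl anomaly is again a scalar that cancels in the ratio. Hence $\Omega_{\boldsymbol\psi}=\Omega_{\boldsymbol\phi}$.

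Next, I factor each $\phi_j$ as a model embedding composed with a disk automorphism germ $w\mapsto w/(1-\kappa_jw)$. That germ is the time-$\kappa_j$ flow of $w^2\partial_w$, which corresponds to the mode $\mathbf L_1$ (up to sign). The corresponding $\rho_j$ is a dilation, acting by $\rho_j^{\mathbf L_0}$.

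Integrating the local Ward identity (Lemma~\ref{lem:localward}) along this flow, or equivalently sewing an annulus whose amplitude is $e^{\kappa_j\mathbf L_1}\rho_j^{\mathbf L_0}$ times anti-holomorphic factors, gives the following. The amplitude with parametrization $\phi_j$ equals the model amplitude evaluated on $\rho_j^{\mathbf L_0}e^{\kappa_j\mathbf L_1}\Psi_{\alpha_j,\nu_j,\emptyset}$. No central term appears, because the Schwarzian vanishes for Möbius maps.

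Expanding with \eqref{eq:theta-coordinate-change} and applying part (i) with the normalized model $(0,1,\infty)$ then yields \eqref{eq:general-mobius-pants-coefficient}. For $j=3$, the chart $\rho_3/w$ produces $\rho_3^{-|\mu_3|}$, as in part (i). The finiteness of the sums follows because $\mathbf L_1$ lowers the level.
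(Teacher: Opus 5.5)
Your proposal follows the paper's proof: part (i) comes from Proposition~\ref{prop:amplitude-chiral-vertex} via the dilation by $1/R$ and the pairing \eqref{eq:outgoing-incoming-pairing}, and part (ii) uses the same factorization $\phi_j=\eta_j\circ d_{\lambda_j}\circ f_{\kappa_j}$ (reference chart, dilation, flow of $w^2\partial_w$) with $T_j=\lambda_j^{\mathbf L_0-h_j}e^{\kappa_j\mathbf L_1}$, then the $\theta_{\nu,\mu}$ expansion and part (i). The power bookkeeping in (i) does close: in the expression for $C_{\boldsymbol\alpha}(\nu_1,\nu_2;\nu_3)$ the factor $R^{N_3-N_2-N_1}$ sits in the numerator against $(q/R)^{N_1}(r/R)^{N_2}$ in the denominator, and in (ii) you should make explicit, as the paper does, that the reference charts are a genuine pants $(q_0w,1+r_0w,R_0/w)$ rather than unit radii (so $\lambda_1=\rho_1/q_0$, $\lambda_2=\rho_2/r_0$, $\lambda_3=R_0/\rho_3$) and that the integration along $t\mapsto f_{t\kappa_j}$ is done by subdividing the path and continuing holomorphically in $\kappa$, since the intermediate charts need not give disjoint disks.
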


\begin{proof}
Formula ~\eqref{eq:standard-mobius-pants-coefficient} is a direct consequence of previous proposition.
We now prove~\eqref{eq:general-mobius-pants-coefficient}. By
diffeomorphism covariance, it suffices to work with
$\boldsymbol\phi=(\phi_1,\phi_2,\phi_3)$.
Choose any $q_0,r_0,R_0>0$ with $q_0+r_0<1$ and $1+r_0<R_0$, and set
\[
  \eta_1(w)=q_0w,
  \qquad
  \eta_2(w)=1+r_0w,
  \qquad
  \eta_3(w)=\frac{R_0}{w}.
\]
Then
\[
  \phi_j=\eta_j\circ g_j,
  \qquad
  g_j(w)=\frac{\lambda_jw}{1-\kappa_jw}
        =d_{\lambda_j}\circ f_{\kappa_j}(w),
\]
where
\[
  f_\kappa(w):=\frac{w}{1-\kappa w},
  \qquad
  d_\lambda(w):=\lambda w,
\]
and
\[
  \lambda_1=\frac{\rho_1}{q_0},
  \qquad
  \lambda_2=\frac{\rho_2}{r_0},
  \qquad
  \lambda_3=\frac{R_0}{\rho_3}.
\]
The family $f_\kappa$ is the flow of $w^2\partial_w$. In the mode
convention
\[
  V(w)\partial_w=-\sum_m[V]_mw^{m+1}\partial_w,
\]
this vector field has $[V]_1=-1$. By subdividing the path
$t\mapsto f_{t\kappa}$ into sufficiently small pieces, and then using
holomorphic continuation in $\kappa$, the boundary-variation formula,
whose Virasoro term carries an overall minus sign, gives
\[
  \mathcal A_{\eta\circ f_\kappa}(u)
  =\mathcal A_\eta(e^{\kappa L_1}u).
\]
There is no scalar term here because all relevant germs are M\"obius,
their Schwarzians vanish, and the vector field has no zero mode. For a
dilation, the state-dependent part of the normalized coordinate change
is $\lambda^{L_0-h}$; the remaining scalar factor is independent of the
inserted descendant and cancels in the normalized ratio. Consequently,
for each boundary component the normalized coordinate-change operator
is
\[
  T_j=\lambda_j^{L_0-h_j}e^{\kappa_jL_1}.
\]
Using~\eqref{eq:theta-coordinate-change},
\[
  T_jL_{-\nu_j}v_{h_j}
  =\sum_{|\mu_j|\le N_j}
    \lambda_j^{|\mu_j|}
    \kappa_j^{N_j-|\mu_j|}
    \theta_{\nu_j,\mu_j}(h_j)L_{-\mu_j}v_{h_j}.
\]
Insert these three finite expansions into the amplitude with reference
charts $\boldsymbol\eta$, and apply
\eqref{eq:standard-mobius-pants-coefficient}. The factors from the
reference charts combine with the dilations as
\[
  q_0^{|\mu_1|}\lambda_1^{|\mu_1|}
  =\rho_1^{|\mu_1|},
  \qquad
  r_0^{|\mu_2|}\lambda_2^{|\mu_2|}
  =\rho_2^{|\mu_2|},
\qquad
  R_0^{-|\mu_3|}\lambda_3^{|\mu_3|}
  =\rho_3^{-|\mu_3|}.
\]
This proves~\eqref{eq:general-mobius-pants-coefficient}.
\end{proof}
\section{Applications}

In this section we give two applications of the variational formula for
amplitudes. First, we prove smoothness of Liouville correlation functions
with respect to the positions of bulk insertions. Second, on the Riemann sphere, we
derive BPZ equations for correlation functions with a bulk degenerate
insertion; the level-two case is written out explicitly, and the general
\((r,s)\) case follows by the Ward-identity reduction.

\subsection{Smoothness of Liouville correlations}

On the Riemann sphere, smoothness of Liouville correlation functions with
respect to the bulk marked points was proved by Oikarinen
in~\cite{Oikarinen_2019}. For compact Riemann surfaces, the corresponding
statement is established in
\cite[Proposition~5.2]{oikarinen2022stressenergyliouvilleconformalfield}.
Both arguments rely on fusion estimates and Gaussian multiplicative chaos
bounds. We provide an alternative geometric proof based on the local Ward
identity. The main point is that derivatives with respect to the insertion
variables are directly identified with Virasoro insertions and explicit
anomaly terms. Higher differentiability is therefore reduced to the continuity of
amplitudes with finitely many higher Virasoro insertions. The only additional
analytic input is the following weighted extension of the boundary-variation
formula, which was already established (implicityly) in~\cite{segalaxiom,semigroup}. We recall the following concrete version.

\begin{lemma}[Weighted extension of the local Ward identity]
\label{lem:weighted-local-ward}
Let $(\Sigma,g_{\rm ad},\mathbf x,\boldsymbol\alpha,\zeta)$ be an
admissible surface of type $(\mathfrak g,n,0,1)$, let
\(
 K\subset\operatorname{Conf}_n(\Sigma^\circ)
\)
a compact subset,
\(
 s_\Sigma:=\sum_{i=1}^n\alpha_i-Q\chi(\Sigma),
\)
and assume that $s_\Sigma>Q$.
Here \(
 \operatorname{Conf}_n(X)
 :=
 \left\{
 (x_1,\ldots,x_n)\in X^n:
 x_i\neq x_j\text{ for }i\neq j
 \right\}.
\)
Fix
\(
 Q<\beta<s_\Sigma
\)
and set
\[
 \mathcal V_0^{\rm fin}
 :=
 \operatorname{Span}\left\{
 \Psi_{0,\nu,\widetilde\nu}:
 \nu,\widetilde\nu\in\mathcal T
 \right\} \subset e^{-\beta c_-}\mathcal D(\mathcal Q).
\]
Then, for every $U\in\mathcal V_0^{\rm fin}$, the map
\[
 \mathbf x\longmapsto
 \mathcal A_{\Sigma,g_{\rm ad},
 \mathbf x,\boldsymbol\alpha,\zeta}(U)
\]
is continuous on $K$.

Let $\mathbf x\mapsto v_{\mathbf x}$ be a smooth family of
meromorphic vector fields on the filling surface $\widehat\Sigma$, whose
poles lie in a fixed filling disk of bounded order,
and set
\(
 \mathbf x^{\,t}
 :=
 f_{-t}^{v_{\mathbf x}}(\mathbf x).
\)
Then, for every $U\in\mathcal V_0^{\rm fin}$,
the map
\[
 t\longmapsto
 \mathcal A_{\Sigma,g_{\rm ad},
 \mathbf x^{\,t},\boldsymbol\alpha,\zeta}(U)
\]
is differentiable at $t=0$, and
\begin{equation}\label{eq:weighted-evaluation-derivative}
 \left.
 \partial_t
 \mathcal A_{\Sigma,g_{\rm ad},
 \mathbf x^{\,t},\boldsymbol\alpha,\zeta}(U)
 \right|_{t=0}
 =
 \left(
 \left.
 \partial_t
 \mathcal A_{\Sigma,g_{\rm ad},
 \mathbf x^{\,t},\boldsymbol\alpha,\zeta}
 \right|_{t=0}
 \right)(U).
\end{equation}
The analogous statement holds for the anti-holomorphic variation.
\end{lemma}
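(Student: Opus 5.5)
The plan is to reduce both assertions to the unweighted statements (Proposition~\ref{thm: variation of annulus} and Lemma~\ref{lem:localward}) by an approximation argument in the weighted space $e^{-\beta c_-}\mathcal D(\mathcal Q)$. The key input is a locally uniform bound for the amplitude as a functional on that space. The condition $s_\Sigma>Q$ with $Q<\beta<s_\Sigma$ is exactly what allows the pairing $\mathcal A_{\Sigma,g_{\rm ad},\mathbf x,\boldsymbol\alpha,\zeta}(U)$ to be defined for $U\in e^{-\beta c_-}\mathcal D(\mathcal Q)$. Write $\mathcal A(\mathbf x)$ for this amplitude. The zero-mode behaviour of $\mathcal A(\mathbf x)$ as $c\to-\infty$ is $e^{(s_\Sigma-Q)c}$, times a factor bounded in $\mathcal D'(\mathcal Q)$; this comes from the Gaussian shift $c\mapsto c+\sum\alpha_i G$ and the curvature term. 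Hence $e^{\beta c_-}\mathcal A(\mathbf x)$ still decays at $-\infty$. As $c\to+\infty$, the GMC factor $e^{-\mu e^{\gamma c}M}$ gives super-exponential decay.

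\textbf{Step 1 (uniform weighted bound).} I will show
\[
\sup_{\mathbf x\in K}\bigl\|e^{\beta c_-}\mathcal A(\mathbf x)\bigr\|_{\mathcal D'(\mathcal Q)}<\infty,
\]
and that $\mathbf x\mapsto e^{\beta c_-}\mathcal A(\mathbf x)$ is continuous into $\mathcal D'(\mathcal Q)$. I would write the amplitude through the sewing Proposition~\ref{prop: gluing} as a core amplitude composed with a thin collar annulus $e^{-\epsilon\mathbf H}$, as in Remark~\ref{rmk:construct variation annulus}. The collar makes the $\mathcal D'(\mathcal Q)$ bound follow from an $L^2$ bound with weight $e^{\beta c_-}$. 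That bound is in turn the standard GMC moment estimate: negative moments of $M_{\gamma}(\Sigma)$ are uniform in $\mathbf x\in K$, since the insertions stay a fixed distance apart and from $\partial\Sigma$. Continuity in $\mathbf x$ then follows from dominated convergence, since the Green functions $G_{g,D}(x_i,\cdot)$ vary smoothly on $K$. Pairing with $U\in\mathcal V_0^{\rm fin}\subset e^{-\beta c_-}\mathcal D(\mathcal Q)$ gives the first claim.

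\textbf{Step 2 (derivative).} By Lemma~\ref{lem:localward}, for $F\in\mathcal D(\mathcal Q)$ the $t$-derivative equals $-\mathcal A(\mathbf H_{v}F)$ plus scalar multiples of $\mathcal A(F)$, the scalars being the $\Delta_{\alpha_i}C$ terms and the Schwarzian anomaly. The Virasoro term only involves finitely many negative modes, because the poles of $v_{\mathbf x}$ have bounded order in the filling disk; it does, however, involve arbitrary positive modes. I would write the difference quotient as
\[
\frac1t\int_0^t\partial_s\mathcal A(\mathbf x^{\,s})\,ds .
\]
Next I would approximate $U$ by $U_k:=\chi_k(c)U\in\mathcal D(\mathcal Q)$, with $\chi_k$ cutoffs in the zero mode. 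Then I need two convergences, both holding uniformly for $s$ small:
\[
\mathcal A(\mathbf x^{\,s})(\mathbf H_vU_k)\to\mathcal A(\mathbf x^{\,s})(\mathbf H_vU),
\qquad
\mathcal A(\mathbf x^{\,s})(U_k)\to\mathcal A(\mathbf x^{\,s})(U).
\]
Given these, I can pass to the limit under the integral and then let $t\to0$ using the continuity from Step 1 (applied to $\mathbf H_vU$, which is again a finite combination of weighted states up to commutator terms with $\chi_k$). This yields \eqref{eq:weighted-evaluation-derivative}. The anti-holomorphic case is identical with $\widetilde{\mathbf L}_n$.

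\textbf{Main obstacle.} The hard part is Step 1 combined with controlling $\mathbf H_v$ on the weighted space, in particular for the positive modes acting on $\Psi_{0,\nu,\widetilde\nu}$. Here I would use that $\Psi_{0,\nu,\widetilde\nu}$ is a generalized eigenstate, so $\mathbf L_n\Psi_{0,\nu,\widetilde\nu}$ vanishes for large $n$ and otherwise lies in $\mathcal V_0^{\rm fin}$, together with the analyticity of $\alpha\mapsto\Psi_{\alpha,\nu,\widetilde\nu}$ in $e^{-\beta c_-}\mathcal D(\mathcal Q)$. The commutator terms $[\mathbf L_n,\chi_k(c)]$ produced by the cutoff involve only $\partial_c$, so they are supported where $|c|\sim k$. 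They tend to zero because the weight gap $s_\Sigma-\beta>0$ supplies a factor $e^{-(s_\Sigma-\beta)k}$.
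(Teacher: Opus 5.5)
Your route differs from the paper's, and it works once Step~1 is corrected as described below.

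\textbf{The paper's route.} The paper never truncates $U$. By diffeomorphism invariance and \eqref{eq:core-annulus-factorization}, it moves all $t$-dependence into the collar annuli. The amplitude is then a fixed core $\mathcal A_{\Sigma^0}\in e^{\beta'c_-}\mathcal H$, with $\beta<\beta'<s_\Sigma$ \cite[Section~5]{segalaxiom}, composed with $\mathcal A_{\mathbb A_{rf_t}}$. Weighted mapping and differentiability properties of the annulus operators, via $T_f=T_f^\eta e^{-\eta\mathbf H}$ and \cite[Lemma~4.10, Theorem~4.8]{semigroup} at the cost $\beta'\to\beta$, put the family and its $t$-derivative in $(e^{-\beta c_-}\mathcal D(\mathcal Q))'$. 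So \eqref{eq:weighted-evaluation-derivative} holds literally, for every weighted $U$.

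\textbf{Your route.} You keep the geometry fixed, apply the unweighted Lemma~\ref{lem:localward} to $U_k=\chi_k(c)U$ at every base point $\mathbf x^{s}$, integrate in $s$, and remove the cutoff. This avoids weighted estimates for the annular semigroup and its variation. The price is that you rely on $\mathbf H_v(\mathcal V_0^{\rm fin})\subset\mathcal V_0^{\rm fin}$, which follows from bounded pole order and the highest-weight property. You therefore get the scalar identity only on $\mathcal V_0^{\rm fin}$, with the right-hand side of \eqref{eq:weighted-evaluation-derivative} defined by the Ward formula. That is all the smoothness proof uses.

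\textbf{What must be fixed in Step~1.}
\begin{itemize}
\item \emph{Sign of the weight.} Pairing with $U\in e^{-\beta c_-}\mathcal D(\mathcal Q)$ requires control of $e^{-\beta c_-}\mathcal A(\mathbf x)$. A bound on $e^{+\beta c_-}\mathcal A(\mathbf x)$ is weaker than the unweighted bound and does not suffice.
\item \emph{Zero-mode exponent.} As $c\to-\infty$ the amplitude behaves like $e^{s_\Sigma c}$, not $e^{(s_\Sigma-Q)c}$. The insertions give $e^{c\sum_i\alpha_i}$, and Gauss--Bonnet turns the curvature terms into $e^{-Q\chi(\Sigma)c}$. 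With your exponent there would be no admissible $\beta$ once $s_\Sigma\le 2Q$. It is $e^{s_\Sigma c}$, together with $\Psi_0\approx e^{-Qc}$, that makes $Q<\beta<s_\Sigma$ the right window. Your closing factor $e^{-(s_\Sigma-\beta)k}$ already uses the correct exponent.
\item \emph{Norm.} You need a bound in $\mathcal H$, not in $\mathcal D'(\mathcal Q)$. For $m\neq0$ one has $[\mathbf L_m,\chi_k(c)]=\mathrm i\chi_k'(c)\mathbf A_m$, so
\[
\mathbf H_v(\chi_kU)=\chi_k\,\mathbf H_vU+\mathrm i\chi_k'(c)\sum_{m\neq0}\bigl(v_m\mathbf A_m+\overline{v_m}\widetilde{\mathbf A}_m\bigr)U+R_k,
\]
where $R_k$ collects the $m=0$ contributions involving $\chi_k'$, $\chi_k''$ and $\mathbf A_0$. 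The middle sum is infinite in general. It converges only in $e^{-\beta c_-}\mathcal H$, because $e^{\beta c_-}U\in\mathcal D(\mathcal Q)$. A $\mathcal D'(\mathcal Q)$ bound on the amplitude therefore does not control its pairing with this term.
\end{itemize}

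\textbf{How the corrected argument closes.} Use $\sup_{\mathbf x\in K'}\|e^{-\beta c_-}\mathcal A(\mathbf x)\|_{\mathcal H}<\infty$ together with $\mathcal H$-continuity, on a compact neighbourhood $K'$ of $\{\mathbf x^s\}$. Then every vector paired with the amplitude lies in $e^{-\beta c_-}\mathcal H$, and the $\mathcal D$--$\mathcal D'$ pairing in Lemma~\ref{lem:localward} reduces to an $\mathcal H$ inner product. Dominated convergence in $c$, and then in $s$, finishes the argument.

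No collar is needed on your route. A collar upgrades $\mathcal H$ to $\mathcal D(\mathcal Q)$, which is what the paper's route requires; it does not upgrade anything to $\mathcal D'(\mathcal Q)$.
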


\begin{proof}
The first statement is directly following from
\cite[Theorem~4.4 and Propositions~6.4,~11.13]{segalaxiom},
together with the sewing property.

For the second statement, choose
\(
Q<\beta<\beta'<s_\Sigma.
\)
By \eqref{eq:core-annulus-factorization} and diffeomorphism invariance, the family of amplitudes $\mathcal A_{\Sigma,g_{\rm ad},
 \mathbf x^{\,t},\boldsymbol\alpha,\zeta}=\mathcal A_{f_{t}^{v_{\mathbf x}}(\Sigma)}$ is obtained by
gluing the amplitude $\mathcal A_{\Sigma^0}$ associated with $\Sigma^0$ with
the annulus amplitude $\mathcal A_{\mathbb A_{rf_j}}$. By \cite[Section~5]{segalaxiom},
\(
\mathcal A_{\Sigma^0}
\in e^{\beta'c_-}\mathcal H
\subset e^{\beta c_-}\mathcal H.
\)
The Gaussian zero-mode estimates give the corresponding control of the
exponential weights, while \cite[Lemma~4.10]{semigroup} gives the
$\mathcal D(\mathcal Q)$-regularity through the factorization
$T_f=T_f^\eta e^{-\eta H}$.
Thus, we have, for $Q<\beta<\beta'<s_\Sigma$,
\[
\bigl|
(\mathcal A_{\Sigma^0}\circ
\mathcal A_{\mathbb A_{rf_j}})(F)
\bigr|
\leq
C_{\beta,\beta'}
\bigl\|
e^{-\beta'c_-}\mathcal A_{\Sigma^0}
\bigr\|_{\mathcal H}
\,
\|F\|_{e^{-\beta c_-}\mathcal D(\mathcal Q)},
\qquad
F\in e^{-\beta c_-}\mathcal D(\mathcal Q).
\]
Hence $\mathcal A_{f_{t}^{v_{\mathbf x}}(\Sigma)}$ belongs to
\(
\bigl(e^{-\beta c_-}\mathcal D(\mathcal Q)\bigr)'.
\)
The differentiability of this family follows from \cite[Theorem~4.8]{semigroup} and \eqref{eq:core-annulus-factorization}. Together with same argument, we have the differential $\partial_t\mathcal A_{f_{t}^{v_{\mathbf x}}(\Sigma)}$ also belongs to
\(
\bigl(e^{-\beta c_-}\mathcal D(\mathcal Q)\bigr)'.
\)
This proves the whole statement.
\end{proof}

\begin{theorem}\label{thm:smooth corrs}
Let $(\widehat\Sigma,\widehat g)$ be a closed Riemann surface of genus
$\mathfrak g$, let
\(
 \mathbf x=(x_1,\ldots,x_n)
 \in\operatorname{Conf}_n(\widehat\Sigma),
\)
and let
$\boldsymbol\alpha=(\alpha_1,\ldots,\alpha_n)\in\mathbb R^n$ satisfy
the Seiberg bounds
\[
 \sum_{j=1}^n\alpha_j>Q\chi(\widehat\Sigma),
 \qquad
 \alpha_j<Q\quad(1\leq j\leq n).
\]
Then the Liouville correlation function
\[
 \mathcal G(\mathbf x)
 :=
 \left\langle
 \prod_{i=1}^nV_{\alpha_i}(x_i)
 \right\rangle_{\widehat\Sigma,\widehat g}
\]
is smooth on
\(
 \operatorname{Conf}_n(\widehat\Sigma).
\)
\end{theorem}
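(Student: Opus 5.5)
The plan is to reduce smoothness of $\mathcal G$ to repeated use of Lemma~\ref{lem:weighted-local-ward}, following the strategy sketched in the introduction. Fix $\mathbf x^0\in\operatorname{Conf}_n(\widehat\Sigma)$ and a point $y\in\widehat\Sigma$ distinct from all $x_i^0$. We insert there a zero-weight field, $V_0(y)=1$, so that $\mathcal G$ is unchanged. Next we choose a small coordinate disk $\mathcal D$ around $y$, with an analytic parametrization $\zeta$, whose closure avoids a neighborhood $K$ of $\mathbf x^0$. Set $\Sigma:=\widehat\Sigma\setminus\mathcal D$ and choose the metric to be admissible near $\partial\Sigma$; by Weyl covariance (Proposition~\ref{prop:weyl+diff}) this changes $\mathcal G$ only by a smooth nonvanishing factor $e^{-\sum_i\Delta_{\alpha_i}\omega(x_i)}$ times a constant. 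By sewing (Proposition~\ref{prop: gluing}) and the disk identity used in the proof of Theorem~\ref{thm: ward identity}, the correlation equals, up to a constant, the amplitude $\mathcal A_{\Sigma,g_{\rm ad},\mathbf x,\boldsymbol\alpha,\zeta}(\Psi_0)$. This is an amplitude of type $(\mathfrak g,n,0,1)$ with $s_\Sigma=\sum\alpha_i-Q\chi(\widehat\Sigma)+Q>Q$, using the Seiberg bound. Hence Lemma~\ref{lem:weighted-local-ward} applies with $U\in\mathcal V_0^{\rm fin}$, and $\mathbf x\mapsto\mathcal A(U)$ is continuous on $K$.

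For the derivatives, fix $i$. In a local chart $w$ near $x_i$, we construct meromorphic vector fields $v^{(i)}_{\mathbf x}$ on $\widehat\Sigma$ whose poles lie only in $\mathcal D$ (at $y$, of bounded order). They depend smoothly on $\mathbf x\in K$ and satisfy $v^{(i)}_{\mathbf x}(x_j)=-\delta_{ij}\partial_w$. Such fields exist by Riemann--Roch once the pole order at $y$ is large compared with $\mathfrak g$ and $n$, and they can be chosen smoothly in $\mathbf x$ by solving a finite linear system with locally constant rank; on the sphere one can write them explicitly as polynomials in $w$. The flow $f_{-t}^{v}$ then moves $x_i$ with velocity $\partial_w$ and fixes the other points to first order. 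Lemma~\ref{lem:weighted-local-ward} combined with the local Ward identity (Lemma~\ref{lem:localward}) gives
\[
\partial_{x_i}\mathcal A(U)=-\mathcal A(\mathbf H_{v}U)+\Bigl(\sum_j\Delta_{\alpha_j}C(v,g;x_j)\Bigr)\mathcal A(U)-c_{\rm L}\,a(v)\,\mathcal A(U).
\]
Here $a(v)$ is the anomaly scalar computed by Theorem~\ref{prop:metric-extension}. Both $C$ and $a$ are smooth in $\mathbf x$, since $v_{\mathbf x}$ is. Because $v$ has a pole of bounded order $N$ at $y$, $\mathbf H_v$ involves only $\mathbf L_k,\widetilde{\mathbf L}_k$ with $k\geq -N$. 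Moreover $\mathbf L_k\Psi_{0,\nu,\widetilde\nu}=0$ for $k>|\nu|$, so $\mathbf H_vU$ is a finite combination in $\mathcal V_0^{\rm fin}$ with coefficients smooth in $\mathbf x$. The same applies to $\partial_{\bar x_i}$ using the $\mathrm i v$ variation.

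Induction then gives the result. If $\mathbf x\mapsto\mathcal A(U)$ is $C^k$ on $K$ for all $U\in\mathcal V_0^{\rm fin}$, the displayed formula expresses its first derivatives as finite sums of $C^\infty$ coefficients times $\mathcal A(U')$ with $U'\in\mathcal V_0^{\rm fin}$. Hence $\mathcal A(U)$ is $C^{k+1}$, and by the base case all $C^k$ hold. One point needs care: Lemma~\ref{lem:weighted-local-ward} gives only directional (Gateaux) derivatives along flows. Since these derivatives are continuous in $\mathbf x$ by the continuity statement, the function is genuinely $C^1$. Covering $\operatorname{Conf}_n$ by such neighborhoods, each with its own choice of $y$, finishes the proof.

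The main obstacle is the construction of $v_{\mathbf x}$ in higher genus. We need meromorphic fields with poles confined to $\mathcal D$, prescribed values at the $x_j$, and smooth dependence on $\mathbf x$. We also need to verify that the non-holomorphic bulk correction $\int\langle\bar\partial V,\mathcal T_g\rangle$ and the anomaly term depend smoothly on $\mathbf x$. I would first handle this via Riemann--Roch, taking the pole order at $y$ at least $2\mathfrak g+n$ so that the evaluation map is surjective. A smooth right inverse is then obtained from a fixed complement of its kernel, which is possible because the rank is constant.
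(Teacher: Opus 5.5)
Your proposal is correct and follows essentially the same route as the paper. You insert $V_0$ at an auxiliary point and identify $\mathcal G$ (up to a smooth nonvanishing factor) with $\mathcal A_{\Sigma,g_{\rm ad},\mathbf x,\boldsymbol\alpha,\zeta}(\Psi_0)$ with $s_\Sigma>Q$, use Riemann--Roch vector fields with a single bounded-order pole at that point, and close the span of finite-descendant amplitudes under $\partial_{w_i},\partial_{\bar w_i}$ via Lemma~\ref{lem:weighted-local-ward} and Lemma~\ref{lem:localward}.

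One numerical slip: since $\deg T\widehat\Sigma=2-2\mathfrak g$, the standard vanishing criterion for $H^1\bigl(T\widehat\Sigma\otimes\mathcal O(Ny-\sum_j x_j)\bigr)$ requires $N\geq 4\mathfrak g-3+n$, not $2\mathfrak g+n$ (which is the count for functions). Any sufficiently large $N$ works, and your constant-rank right-inverse argument then goes through unchanged.
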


\begin{proof}
Fix
$\mathbf x\in K\subset \operatorname{Conf}_n(\widehat\Sigma)$, with $K$ compact. After choosing
pairwise disjoint coordinate neighborhoods of the points $x_i$, choose
$p\in\widehat\Sigma$ and its coordinate disk $(D,w)$ outside these neighborhoods. Since $V_0(p)=1$, adding a zero-weight insertion does
not change the correlation:
\[
 \left\langle
 V_0(p)\prod_{i=1}^nV_{\alpha_i}(x_i)
 \right\rangle_{\widehat\Sigma,\widehat g}
 =\mathcal G(\mathbf x).
\]
Also the Seiberg bounds are preserved, since the total charge is unchanged and
$0<Q$.

Set
\(
 \Sigma:=\widehat\Sigma\setminus D^\circ
\)
and equip its boundary with an incoming parametrization $\zeta$. Choose an admissible metric $g_{\rm ad}$ on $\Sigma$. By Weyl covariance, there is a
constant $C_{\rm fill}$ independent of $\mathbf x\in K$, such that
\(
 \mathcal G(\mathbf x)
 =C_{\rm fill}\,
 \mathcal A_{\Sigma,g_{\rm ad},
 \mathbf x,\boldsymbol\alpha,\zeta}(\Psi_0).
\)
We absorb $C_{\rm fill}$ into the amplitude for the remainder of the
proof.

Since $\chi(\Sigma)=\chi(\widehat\Sigma)-1$, one has
\(
 s_\Sigma
 :=\sum_{i=1}^n\alpha_i-Q\chi(\Sigma)
 =Q+\sum_{i=1}^n\alpha_i-Q\chi(\widehat\Sigma)>Q.
\)
We fix
\(
 Q<\beta<s_\Sigma,
\)
and Lemma~\ref{lem:weighted-local-ward} applies to $\Psi_0$ and all of
its finite descendants. We then choose a smooth family
$\mathbf x\mapsto v_{\mathbf x}$ of meromorphic vector fields, such that $v_{\mathbf x}$
only has pole at $p$ of order at most $N$,
\[
 v_{\mathbf x}^1(x_1)=1,
 \qquad
 v_{\mathbf x}^i(x_i)=0
 \quad(2\leq i\leq n),
\]
where $v_{\mathbf x}=v_{\mathbf x}^i(w_i)\partial_{w_i}$ near $x_i$. Riemann--Roch theorem applied to
\(T\widehat{\Sigma}\otimes\mathcal O(Np-\sum_{i=2}^n x_i)\) gives the existence of
 vector fields. 
Write
\[
 \zeta^*v_{\mathbf x}
 =-\sum_{m > -N}
 [v_{\partial,\mathbf x}]_m
 w^{m+1}\partial_w,
 \qquad
 \mathbf L[v_{\partial,\mathbf x}]
 :=\sum_{m > -N}
 [v_{\partial,\mathbf x}]_m\mathbf L_m.
\]

By Lemma~\ref{lem:weighted-local-ward}, differentiation commutes with
evaluation at $\Psi_0$. Applying Lemma~\ref{lem:localward} to the
variations generated by $v_{\mathbf x}$ and
$\mathrm i v_{\mathbf x}$ and collecting them gives
\begin{align}
 \partial_{w_1}\mathcal G(\mathbf x)
 ={}&
 \mathcal A_{\Sigma,g_{\rm ad},
 \mathbf x,\boldsymbol\alpha,\zeta}
 \bigl(\mathbf L[v_{\partial,\mathbf x}]\Psi_0\bigr)
 -
 \sum_{i=1}^n\Delta_{\alpha_i}
 \left(
 \partial_{w_i}v_{\mathbf x}^i
 +(\partial_{w_i}\sigma_i)v_{\mathbf x}^i
 \right)(x_i)\mathcal G(\mathbf x)
 +c_{\rm L}\,
 \mathfrak a_{g_{\rm ad}}(v_{\mathbf x})
 \mathcal G(\mathbf x),
 \label{eq:first-smoothness-derivative}
\end{align}
where
$g_{\rm ad}=e^{\sigma_i(w_i,\bar w_i)}|dw_i|^2$ near $x_i$ and
\(
 \mathfrak a_{g_{\rm ad}}(v)
 :=
 \frac12
 \left.
 \left(
 \partial_tS_{\rm L}^0(\Sigma_t^v)
 -\mathrm i\partial_tS_{\rm L}^0(\Sigma_t^{\mathrm i v})
 \right)
 \right|_{t=0}
 -\frac1{24}[v_\partial]_0.
\)
Each term on the right-hand side of
\eqref{eq:first-smoothness-derivative} is continuous on $K$. Indeed,
since $\Psi_0$ is a highest-weight state and
$v_{\mathbf x}$ has a pole of uniformly bounded order at $p$,
\[
 \mathbf L[v_{\partial,\mathbf x}]\Psi_0
 =\sum_{m=1}^{N}
 [v_{\partial,\mathbf x}]_{-m}
 \mathbf L_{-m}\Psi_0.
\]
The coefficients in this
finite sum are smooth in $\mathbf x$, and the corresponding descendant
amplitudes are continuous by
Lemma~\ref{lem:weighted-local-ward}. The conformal-weight term is smooth.
Finally, 
\(
 \mathfrak a_{g_{\rm ad}}(v_{\mathbf x})
\)
also depends smoothly on $\mathbf x$.
Thus $\partial_{w_1}\mathcal G$ is continuous. The anti-holomorphic Ward
identity gives the continuity of $\partial_{\bar w_1}\mathcal G$.
Repeating the construction by choosing different vector field for each marked point shows that
$\mathcal G\in C^1(K)$.

It remains to iterate the argument. Choose an open neighborhood
$\mathcal O$ such that
\(
 K\subset\mathcal O
 \subset \operatorname{Conf}_n(\widehat\Sigma),
\)
and let $\mathcal E(\mathcal O)$ be the
$C^\infty(\mathcal O)$-module generated by the following:
\[
 \mathcal E(\mathcal O)
 :=
 \left\{
 \sum_{a=1}^M
 f_a(\mathbf x)\,
 \mathcal A_{\Sigma,g_{\rm ad},
 \mathbf x,\boldsymbol\alpha,\zeta}(U_a):
 f_a\in C^\infty(\mathcal O),\
 U_a\in\mathcal V_0^{\rm fin}
 \right\}.
\]
For every $U\in\mathcal V_0^{\rm fin}$,
\eqref{eq:first-smoothness-derivative} together with its anti-holomorphic version express each derivative
\[
 \partial_{w_i}
 \mathcal A_{\Sigma,g_{\rm ad},
 \mathbf x,\boldsymbol\alpha,\zeta}(U),
 \qquad
 \partial_{\bar w_i}
 \mathcal A_{\Sigma,g_{\rm ad},
 \mathbf x,\boldsymbol\alpha,\zeta}(U),
\]
as a finite linear combination of descendant amplitudes with
coefficients in $C^\infty(\mathcal O)$. Hence, we have
\[
 \partial_{w_i}\mathcal E(\mathcal O)
 \subset\mathcal E(\mathcal O),
 \qquad
 \partial_{\bar w_i}\mathcal E(\mathcal O)
 \subset\mathcal E(\mathcal O).
\]
Lemma~\ref{lem:weighted-local-ward} also gives
$\mathcal E(\mathcal O)\subset C^0(\mathcal O)$. Note that
\(
 \mathcal G(\mathbf x)
 =
 \mathcal A_{\Sigma,g_{\rm ad},
 \mathbf x,\boldsymbol\alpha,\zeta}(\Psi_0)
 \in\mathcal E(\mathcal O),
\)
 and every mixed
derivative of $\mathcal G$ belongs to $\mathcal E(\mathcal O)$, which is continuous. This gives $\mathcal G\in C^\infty(K)$. Since the
configuration and the neighborhood $K$ were arbitrary,
$\mathcal G$ is smooth on
$\operatorname{Conf}_n(\widehat\Sigma)$. This completes the proof.
\end{proof}

\begin{remark}[Boundary LCFT]
The same strategy applies to bulk marked points in boundary Liouville
correlation functions, provided the corresponding boundary amplitude
satisfies the weighted differentiability and continuity estimates used in
Lemma~\ref{lem:weighted-local-ward}.
\end{remark}

\subsection{Genus-zero BPZ equations}
The Belavin--Polyakov--Zamolodchikov equation follows from the null-vector
relation in a degenerate Virasoro representation.  We first spell out the
level-two case and then extend the Ward-identity reduction to
the general \((r,s)\) BPZ equations on the Riemann sphere.

Let $x,x_1,\ldots,x_N$ be pairwise distinct points in an affine coordinate
on $\widehat{\mathbb C}$, and let
$\boldsymbol\beta=(\beta_1,\ldots,\beta_N)$ be the weights attached to
$\mathbf x=(x_1,\ldots,x_N)$. After an affine rescaling, choose the round
boundary parametrization $\zeta_x(w)=x+w$ so that its filling disk is
disjoint from $\mathbf x$. We work either in the Seiberg region.
Choose $g_{\mathrm{ad}}=e^{\sigma(z,\bar z)}|dz|^2$ so that its
filling agrees with $\widehat g$ away from the cap, and set
\[
 \mathcal F_\alpha(x;\mathbf x)
 :=\mathcal A_{\mathbb D_x^{\mathsf c},g_{\mathrm{ad}},
 \mathbf x,\boldsymbol\beta,\zeta_x}(\Psi_\alpha).
\]
For $m\geq1$, let
\( \mathbf D_m^{(x)}
 :=\mathbf D_{m,\mathbf x,\boldsymbol\beta}(x),
\)
where $\mathbf D_{m,\mathbf x,\boldsymbol\beta}(x)$ is defined in
\eqref{eq:round-differential-operator}. In particular,
one has
\begin{align}
 \mathbf D_1^{(x)}
 &=-\sum_{i=1}^N \partial_{x_i}
 +\Delta_{\beta_i}\partial_{x_i}\sigma(x_i,\bar x_i),
 \nonumber\\
 \mathbf D_2^{(x)}
 &=\sum_{i=1}^N\left(
 \frac{1}{x-x_i} \left(\partial_{x_i}
 +\Delta_{\beta_i}\partial_{x_i}\sigma(x_i,\bar x_i) \right)
 +\frac{\Delta_{\beta_i}}{(x-x_i)^2}
 \right).
 \label{eq:level-two-ward-operators}
\end{align}

\begin{proposition}[Level-two BPZ equation on the sphere]
\label{prop:level-two-bpz}
Let $\alpha\in\{\alpha_{1,2},\alpha_{2,1}\}$ and assume that the
corresponding generalized amplitudes are defined at the stated weights.
Then
\begin{equation}
 \left(
 (\mathbf D_1^{(x)})^2
 +\alpha^2\mathbf D_2^{(x)}
 \right)\mathcal F_\alpha(x;\mathbf x)=0.
 \label{eq:level-two-bpz-amplitude}
\end{equation}
The same equation holds for the filled sphere correlation
\[
 \mathcal G_\alpha(x;\mathbf x)
 :=\left\langle
 V_\alpha(x)\prod_{i=1}^NV_{\beta_i}(x_i)
 \right\rangle_{\widehat{\mathbb C},\widehat g}.
\]
\end{proposition}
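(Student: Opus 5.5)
The plan is to combine the null-vector relation \eqref{eq:level-two-null-vector} with the round-disk Ward identity of Proposition~\ref{prop:matrixcoe}(i). Apply the amplitude $\mathcal A_{\mathbb D_x^{\mathsf c},g_{\mathrm{ad}},\mathbf x,\boldsymbol\beta,\zeta_x}$ to the vanishing state $(\alpha^2\mathbf L_{-2}+\mathbf L_{-1}^2)\Psi_\alpha=0$. This gives
\[
 \alpha^2\mathcal A(\mathbf L_{-2}\Psi_\alpha)+\mathcal A(\mathbf L_{-1}\mathbf L_{-1}\Psi_\alpha)=0 .
\]
For the first term, apply \eqref{eq:round-disk-ward} with $n=2$ and $\nu=\widetilde\nu=\emptyset$; it yields $\mathbf D_2^{(x)}\mathcal F_\alpha$. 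For the second term, apply \eqref{eq:round-disk-ward} with $n=1$ and $\nu=(1)$ to peel off the outer $\mathbf L_{-1}$. This gives $\mathbf D_1^{(x)}\mathcal A(\mathbf L_{-1}\Psi_\alpha)$. Applying the identity once more with $n=1$ and $\nu=\emptyset$ gives $(\mathbf D_1^{(x)})^2\mathcal F_\alpha$.

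The point to check is that the iteration is legitimate. The operator $\mathbf D_1^{(x)}$ involves only $\partial_{x_i}$ and the smooth coefficients $\Delta_{\beta_i}\partial_{x_i}\sigma(x_i,\bar x_i)$, with no explicit $x$-dependence. Since \eqref{eq:round-disk-ward} holds as an identity of functions of $\mathbf x$, we may apply $\mathbf D_1^{(x)}$ to both sides of the level-one identity. This requires differentiability of $\mathbf x\mapsto\mathcal A(\mathbf L_{-1}\Psi_\alpha)$, which is exactly the content of Lemma~\ref{lem:localward} applied to descendant states. Smoothness itself follows from the argument of Theorem~\ref{thm:smooth corrs}.

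The main obstacle is that $\Psi_\alpha$ with $\alpha=\alpha_{1,2}$ or $\alpha_{2,1}$ is a real weight off the spectral line. Proposition~\ref{prop:matrixcoe} was derived for $\alpha=Q+\mathrm iP$ and then extended to analytically continued generalized states. I would handle this as in the proof of Proposition~\ref{prop:matrixcoe}. First, use analyticity of $\alpha\mapsto\Psi_{\alpha,\nu,\widetilde\nu}$ in weighted spaces. Second, use continuity of the amplitude on $e^{-\beta c_-}\mathcal D(\mathcal Q)$ in the style of Lemma~\ref{lem:weighted-local-ward}, valid under the Seiberg-type condition $\alpha+\sum\beta_i>2Q$ assumed in ``the stated weights''. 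Together these give that both sides of each Ward identity are analytic in $\alpha$. The identities therefore continue to the degenerate values, where the null relation from Proposition~\ref{liouville module} holds.

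For the correlation $\mathcal G_\alpha$, sew the cap $\mathbb D(x,1)$ carrying $V_\alpha(x)$ to $\mathbb D_x^{\mathsf c}$ as in the displayed disk identity in the proof of Theorem~\ref{thm: ward identity}. This gives $\mathcal G_\alpha=C\,\mathcal F_\alpha$, where $C$ is a factor from Weyl covariance and the disk normalization. Because the round cap is rigid and $\widehat g$ agrees with $g_{\mathrm{ad}}$ away from the cap, $C$ depends on neither $x$ nor $\mathbf x$. The operator $(\mathbf D_1^{(x)})^2+\alpha^2\mathbf D_2^{(x)}$ acts only in $\mathbf x$, so the same equation holds for $\mathcal G_\alpha$.

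Finally, in a flat coordinate $\sigma$ is constant near the insertions. Translation invariance $\sum_i\partial_{x_i}=-\partial_x$ then turns the equation into \eqref{eq:introduction-level-two-BPZ} after dividing by $\alpha^2$.
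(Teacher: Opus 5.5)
Your proposal is correct and follows the paper's proof. You evaluate the amplitude on the level-two null vector, convert $\mathbf L_{-2}$ and $\mathbf L_{-1}^2$ into $\mathbf D_2^{(x)}$ and $(\mathbf D_1^{(x)})^2$ via Proposition~\ref{prop:matrixcoe}(i), and pass to $\mathcal G_\alpha=C_{\rm fill}\mathcal F_\alpha$ with $C_{\rm fill}$ independent of $\mathbf x$. Your added justification of the iteration and of the analytic continuation to $\alpha_{1,2},\alpha_{2,1}$ under $\alpha+\sum_i\beta_i>2Q$ makes explicit what the paper packs into its hypothesis that the generalized amplitudes are defined at the stated weights.
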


\begin{proof}
Applying \Cref{prop:matrixcoe}\,(i) gives
\[
 \mathcal A(\mathbf L_{-m}u)
 =\mathbf D_m^{(x)}\mathcal A(u),
 \qquad m\geq1,
\]
for every finite holomorphic descendant $u$. Hence
\[
 \mathcal A(\mathbf L_{-2}\Psi_\alpha)
 =\mathbf D_2^{(x)}\mathcal F_\alpha,
 \qquad
 \mathcal A(\mathbf L_{-1}^2\Psi_\alpha)
 =(\mathbf D_1^{(x)})^2\mathcal F_\alpha.
\]
Applying the amplitude to the null-vector relation
\eqref{eq:level-two-null-vector} proves
\eqref{eq:level-two-bpz-amplitude}. By the Weyl anomaly,
$\mathcal G_\alpha=C_{\mathrm{fill}}\mathcal F_\alpha$, where
$C_{\mathrm{fill}}$ depends only on the fixed filling data and is
independent of the marked points $\mathbf x$. The same operator
therefore annihilates $\mathcal G_\alpha$.
\end{proof}

By Weyl covariance, we pass to the flat-coordinate representative of
the correlation in an affine chart equipped with the metric
$|dz|^2$, and denote it by
$\mathcal G_\alpha^{\mathrm{flat}}$. In this representation, the
terms involving $\sigma$ in
\eqref{eq:level-two-ward-operators} disappear, and
$\mathcal G_\alpha^{\mathrm{flat}}$ is M\"obius covariant. In
particular, for the translation $f_t(z)=z+t$, whose derivative is
identically equal to one, M\"obius covariance gives
\[
 \mathcal G_\alpha^{\mathrm{flat}}
 (x+t;x_1+t,\ldots,x_N+t)
 =
 \mathcal G_\alpha^{\mathrm{flat}}
 (x;x_1,\ldots,x_N).
\]
Differentiating at $t=0$ yields the global
$\mathbf L_{-1}$ Ward identity
\begin{equation}\label{eq:global-L-minus-one}
 \left(
 \partial_x+\sum_{i=1}^N\partial_{x_i}
 \right)
 \mathcal G_\alpha^{\mathrm{flat}}=0.
\end{equation}
Consequently, \eqref{eq:level-two-bpz-amplitude} becomes the standard
BPZ equation
\begin{equation}\label{eq:level-two-bpz-standard}
 \left[
 \frac{1}{\alpha^2}\partial_x^2
 +\sum_{i=1}^N
 \left(
 \frac{1}{x-x_i}\partial_{x_i}
 +\frac{\Delta_{\beta_i}}{(x-x_i)^2}
 \right)
 \right]
 \mathcal G_\alpha^{\mathrm{flat}}(x;\mathbf x)=0.
\end{equation}
The conjugate null-vector relation gives the corresponding
anti-holomorphic equation.

\begin{corollary}[General degenerate insertions]
\label{cor:general-bpz}
Let $\alpha=\alpha_{r,s}$ and write the singular vector as in
\eqref{eq:null-vector-relation}. For a partition
$\nu=(\nu_1,\ldots,\nu_{\ell(\nu)})$, in the order fixed by
\eqref{eq:partition-operator}, set
\[
 \mathbf D_\nu^{(x)}
 :=\mathbf D_{\nu_1}^{(x)}\circ\cdots\circ
 \mathbf D_{\nu_{\ell(\nu)}}^{(x)},
\]
and define
\begin{equation}
 \mathcal L_{r,s}^{(x)}
 :=\sum_{|\nu|=rs}\sigma_\nu^{r,s}\mathbf D_\nu^{(x)}.
 \label{eq:general-bpz-operator}
\end{equation}
Then
\[
 \mathcal L_{r,s}^{(x)}\mathcal F_{\alpha_{r,s}}=
 \mathcal L_{r,s}^{(x)}\mathcal G_{\alpha_{r,s}}=0.
\]
The operator has order at most $rs$ with its principal part to be $\partial_x^{rs}$.
\end{corollary}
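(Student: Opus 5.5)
The plan is to iterate the round-disk Ward identity of Proposition~\ref{prop:matrixcoe}\,(i) along each monomial of the singular vector. Then I apply the amplitude to the null-vector relation~\eqref{eq:null-vector-relation}, and finally track orders of the resulting differential operators.

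\textbf{Reduction of monomials.} Fix $\nu=(\nu_1,\ldots,\nu_\ell)$ with $|\nu|=rs$. Since $\mathbf L_{-\nu}\Psi_\alpha=\mathbf L_{-\nu_1}(\mathbf L_{-\nu_2}\cdots\mathbf L_{-\nu_\ell}\Psi_\alpha)$, and the inner vector is again a finite holomorphic descendant, the disk identity $\mathcal A(\mathbf L_{-m}u)=\mathbf D_m^{(x)}\mathcal A(u)$ can be applied repeatedly, peeling off the leftmost mode each time. By induction on $\ell$ this gives
\[
\mathcal A_{\mathbb D_x^{\mathsf c},g_{\mathrm{ad}},\mathbf x,\boldsymbol\beta,\zeta_x}\bigl(\mathbf L_{-\nu}\Psi_\alpha\bigr)=\mathbf D_\nu^{(x)}\mathcal F_\alpha .
\]
One point needs care. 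The operator $\mathbf D_m^{(x)}$ acts on the marked points $\mathbf x$ while $x$ is fixed, and it acts on the whole amplitude as a function of $\mathbf x$. So the composition is legitimate provided each intermediate function $\mathbf x\mapsto\mathcal A(\mathbf L_{-\mu}\Psi_\alpha)$ is differentiable. That regularity is exactly what the proposition already provides, since the Ward identity expresses its derivatives through further descendant amplitudes; alternatively it follows from the smoothness argument of Theorem~\ref{thm:smooth corrs}. Summing over $\nu$ with the coefficients $\sigma^{r,s}_\nu$ and using $\mathbf S_{\alpha_{r,s}}\Psi_{\alpha_{r,s}}=0$ yields $\mathcal L^{(x)}_{r,s}\mathcal F_{\alpha_{r,s}}=0$. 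As in Proposition~\ref{prop:level-two-bpz}, $\mathcal G_{\alpha_{r,s}}=C_{\mathrm{fill}}\mathcal F_{\alpha_{r,s}}$ with $C_{\mathrm{fill}}$ independent of $\mathbf x$, so the same equation holds for $\mathcal G$. The analytic-continuation caveat (working where the generalized amplitudes are defined) is inherited from Proposition~\ref{prop:matrixcoe}.

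\textbf{Order and principal part.} Each $\mathbf D_m^{(x)}$ is first order in $\mathbf x$, so $\mathbf D_\nu^{(x)}$ has order $\ell(\nu)\le|\nu|=rs$. Equality holds only for $\nu=(1,\ldots,1)$, whose coefficient is normalized to $1$. Hence the top-order part of $\mathcal L_{r,s}^{(x)}$ is that of $(\mathbf D_1^{(x)})^{rs}$, namely $(-\sum_i\partial_{x_i})^{rs}$. The coefficients of the $\mathbf D_m$ are smooth, and their derivatives only produce lower-order terms. To rewrite this in terms of $\partial_x$, I will pass to the flat representative $\mathcal G^{\mathrm{flat}}$, exactly as in the level-two case. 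There the $\sigma$ terms vanish, and the translation Ward identity~\eqref{eq:global-L-minus-one} gives $-\sum_i\partial_{x_i}=\partial_x$ on $\mathcal G^{\mathrm{flat}}$.

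The main obstacle is that~\eqref{eq:global-L-minus-one} holds only as an identity on the function, not as an operator identity. Replacing $-\sum_i\partial_{x_i}$ by $\partial_x$ inside compositions requires commuting $\partial_x+\sum_i\partial_{x_i}$ with the coefficients of the $\mathbf D_m$. In flat coordinates these coefficients depend on $x-x_i$ only, so they are translation invariant and commute with $\partial_x+\sum_i\partial_{x_i}$. Consequently $\mathcal G^{\mathrm{flat}}$ lies in the kernel of that operator, the kernel is preserved by each $\mathbf D^{(x)}_m$, and the substitution can be made step by step. This gives order at most $rs$ with principal symbol $\partial_x^{rs}$, up to the sign $(-1)^{rs}$ absorbed by the substitution.
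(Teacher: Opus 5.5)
Your proposal is correct and follows the paper's proof: iterate the round-disk Ward identity of Proposition~\ref{prop:matrixcoe}\,(i) on each monomial $\mathbf L_{-\nu}\Psi_{\alpha_{r,s}}$, apply the amplitude to $\mathbf S_{\alpha_{r,s}}\Psi_{\alpha_{r,s}}=0$, and pass to $\mathcal G_{\alpha_{r,s}}$ through the $\mathbf x$-independent filling constant. Your order and principal-part argument is also correct, and it supplies a claim the paper states without proof: you use $\ell(\nu)\le|\nu|$ with equality only for $\nu=(1,\ldots,1)$, then $-\sum_i\partial_{x_i}=\partial_x$ on the translation-invariant flat correlation, whose kernel condition is preserved by each $\mathbf D_m^{(x)}$. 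Note that this substitution produces exactly $\partial_x^{rs}$, with no leftover sign.
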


\begin{proof}
Repeated application of the round-disk Ward identity gives
\[
 \mathcal A(\mathbf L_{-\nu}\Psi_{\alpha_{r,s}})
 =\mathbf D_\nu^{(x)}\mathcal F_{\alpha_{r,s}}.
\]
Applying the amplitude to
$\mathbf S_{\alpha_{r,s}}\Psi_{\alpha_{r,s}}=0$ proves the first
identity, and the filling argument used above proves the second.
\end{proof}

\bibliography{cibib}
\bibliographystyle{alpha}

\end{document}